\providecommand{\tabularnewline}{\\}
\providecommand{\algorithmname}{Algorithm}
\theoremstyle{plain}
\newtheorem{thm}{\protect\theoremname}
  \theoremstyle{remark}
  \newtheorem{rem}{\protect\remarkname}
  \theoremstyle{plain}
  \newtheorem{lem}{\protect\lemmaname}
  \theoremstyle{plain}
  \newtheorem{prop}{\protect\propositionname}
  \theoremstyle{plain}
  \newtheorem{cor}{\protect\corollaryname}
 \theoremstyle{definition}
  \newtheorem{example}{\protect\examplename}
\newcounter{hypA}
\newenvironment{condition}{\refstepcounter{hypA}\begin{itemize}
\item[({\bf H})]}{\end{itemize}}
\newcommand{\iid}{\stackrel{\mathrm{iid}}{\sim}}
  \providecommand{\examplename}{Example}
  \providecommand{\lemmaname}{Lemma}
  \providecommand{\propositionname}{Proposition}
  \providecommand{\remarkname}{Remark}
\providecommand{\corollaryname}{Corollary}
\providecommand{\theoremname}{Theorem}
\begin{document}

\title{Calculating principal eigen-functions of non-negative integral kernels:
particle approximations and applications}

\author{Nick Whiteley\thanks{School of Mathematics, University of Bristol, University Walk, Bristol,
BS8 1TW.} \ and\  Nikolas Kantas\thanks{Department of Mathematics, Imperial College London, South Kensington
Campus, London SW7 2AZ. }}
\maketitle
\begin{abstract}
Often in applications such as rare events estimation or optimal control
it is required that one calculates the principal eigen-function and
eigen-value of a non-negative integral kernel. Except in the finite-dimensional
case, usually neither the principal eigen-function nor the eigen-value
can be computed exactly. In this paper, we develop numerical approximations
for these quantities. We show how a generic interacting particle algorithm
can be used to deliver numerical approximations of the eigen-quantities
and the associated so-called ``twisted'' Markov kernel as well as
how these approximations are relevant to the aforementioned applications.
In addition, we study a collection of random integral operators underlying
the algorithm, address some of their mean and path-wise properties,
and obtain $L_{r}$ error estimates\textcolor{black}{. Finally, numerical
examples are provided in the context of importance sampling for computing
tail probabilities of Markov chains and computing value functions
for a class of stochastic optimal control problems.\medskip{}
}\\
\textcolor{black}{Keywords: interacting particle methods, eigen-functions,
rare events estimation, optimal control, diffusion Monte Carlo}
\end{abstract}

\section{Introduction}

On a state space $\mathsf{X}$ consider a bounded function $G:\mathsf{X}\rightarrow\mathbb{R}_{+}$,
a Markov probability kernel $M$. The central object of interest in
this paper is the integral kernel $Q$ given by
\[
Q(x,dx^{\prime}):=G(x)M(x,dx^{\prime}).
\]
Under some regularity assumptions, $Q$ has an isolated, real, maximal
eigen-value $\lambda_{\star}$, with which is associated a positive
(right) eigen-function $h_{\star}$, 
\begin{equation}
Q(h_{\star})=\lambda_{\star}h_{\star},\label{eq:sec1_eig_func}
\end{equation}
where for a function $\varphi$ on $\mathsf{X}$, we write $Q(\varphi)(x):=\int Q(x,dx^{\prime})\varphi(x^{\prime})$.
When $\mathsf{X}$ is finite set, $\lambda_{\star}$ is the Perron-Frobenius
eigen-value and $h_{\star}$ the right eigen-vector. In this paper
we are interested in the case where $\mathsf{X}$ is a general space,
so not necessarily finite or countable. In general state spaces an
extended Perron-Frobenius theory applies, (see \citet{mc:the:N04}
for an account), but in most cases $\lambda_{\star},h_{\star}$ cannot
be determined analytically, so numerical approximations are required
and this is what this paper aims to address.

Treatment of the existence of $\lambda_{\star}$ and $h_{\star}$
outside of settings in which $\mathsf{X}$ is a finite set dates at
least as far as \citep{kolmogorov1938losung,yaglom1947certain,Harris63},
where $Q$ arose as a conditional moment measure associated with a
branching process; see \citet{collet2012quasi} for a modern perspective
in the context of quasi-stationary distributions and stochastic processes
conditioned on long-term survival. In addition, $Q$ and $h_{\star}$
have often appeared as critical quantities in various more recent
applications. In statistical mechanics $Q$ corresponds to the Hamiltonian
and $h_{\star}$ could be viewed as the Schrödinger ground energy
state for molecules, e.g. \citep{rousset2006control,makrini2007diffusion}.
Similarly, in particle physics $Q\bigl(1\bigr)\left(x\right)$ can
be used to model the one-step probability of survival of of a particle
moving in an absorbing medium \citep[Chapter 7]{del2013mean}, \citep{smc:the:DMD04}.
In stochastic optimal control, $Q$ arises naturally as a multiplicative
Bellman or Dynamic Programming operator in discrete time problems
when a Kullback-Leibler divergence term is used in the stage cost
\citep{albertini1988logarithmic,todorov2008general,dvijotham2011unified}
or in particular continuous time models with affine dynamics in the
control and additive costs that are quadratic to the control input;
see \citep{fleming1982logarithmic,sheu1984stochastic} or \citep{todorov2008general,theodorou2010generalized,kappen2005linear}
for more details. In these specific control problems, $h_{\star}$
can be viewed as a logarithmic transformation of the value function.
Finally, $h_{\star}$ appears in the large deviations theory of Markov
chains, see for example \citep{ney1987markov}; if $(X_{n};n\geq0)$
is a Markov chain with transition kernel $M$, initialized from $X_{0}=x$,
$U$ an appropriate function and $G(x):=e^{\alpha U(x)}$ for a particular
value of $\alpha,$ then it is only and explicitly through $h_{\star}(x)$
that the initial condition enters Bahadur-Rao-type asymptotics associated
with partial sums $\sum_{p=0}^{n-1}U(X_{p})$ \citep{mc:theory:KM03}. 

A related object of interest in many applications of interest is the
``twisted'' Markov kernel:
\begin{equation}
P_{\star}(x,dx^{\prime}):=\frac{Q(x,dx^{\prime})h_{\star}(x^{\prime})}{h_{\star}(x)\lambda_{\star}},\label{eq:twisted_intro}
\end{equation}
which is also known as h-process kernel \citet{collet2012quasi} or
Doob's h-transform \citet[Section III.29]{rogers1996diffusions}.
Particular instances of $P_{\star}$ define optimal changes of measure
in methods for estimating rare event probabilities, such as \textcolor{black}{for
tail probabilities of Markov chains} \citep{bucklew1990monte,dupuis2005dynamic}.
In the discrete time control problems mentioned above $P_{\star}$
defines the optimally controlled Markov transition kernel. In the
context of particle motion in absorbing media $P_{\star}$ is the
Markov transition kernel of a particle conditional on long-term survival
\citet[Section 7.2 pages 223-226]{del2013mean}, and for, multi-type
branching processes, $P_{\star}$ defines a transformation from supercritical
to critical \citep{athreya2000change}. 

Of course the eigen-function equation (\ref{eq:sec1_eig_func}) is
just one side of the story. Accompanying $h_{\star}$ is a (left)
eigen-measure, which under certain conditions can be normalized to
a probability measure $\eta_{\star}$, 
\begin{equation}
\eta_{\star}Q=\lambda_{\star}\eta_{\star},\label{eq:sec1_eig_meas}
\end{equation}
where for a measure $\eta$, we write $\eta Q(\cdot):=\int\eta\left(dx\right)Q(x,\cdot)$.
\citet{smc:the:dMM03} studied the non-linear operator on measures
\begin{equation}
\Phi:\eta\mapsto\frac{\eta Q}{\eta Q\left(1\right)},\label{eq:Phi_defn}
\end{equation}
(where $1$ is the unit function on $\mathsf{X}$). Under regularity
assumptions, for sufficiently large $n$, the $n$-fold iterated operator
$\Phi^{(n)}$ is contractive with respect to total-variation norm
and $\eta_{\star}$ is its unique fixed point. Indeed integrating
both sides of (\ref{eq:sec1_eig_meas}) yields $\eta_{\star}Q\left(1\right)=\lambda_{\star}$
so that $\Phi(\eta_{\star})=\eta_{\star}$ is a re-writing of (\ref{eq:sec1_eig_meas});
see \citet{smc:the:dMM03,smc:the:DMD04} for more details. In these
papers the authors suggested and analyzed an interacting particle
algorithm whose evolution is defined through $\Phi$ and which can
be used to approximate $\eta_{\star}$ and $\lambda_{\star}$. When
$M$ is reversible, $h_{\star}$ provides a density of $\eta_{\star}$.
In this case the particle algorithm analyzed in \citet{smc:the:dMM03}
and \citet{smc:the:DMD04} has also appeared in the statistical mechanics
literature, \citet{assaraf2000diffusion,rousset2006control,makrini2007diffusion},
under the name Diffusion Monte Carlo and has been used to provide
estimates of $h_{\star}$ and $\lambda_{\star}$. Finally, we mention
the Flemming-Viot particle system in \citet{burdzy2000fleming}, where
the authors without using any reversibility assumptions use the continuous
time analog of \citet{smc:the:dMM03,smc:the:DMD04} to perform spectral
analysis of the Laplacian with Dirichlet boundary conditions.

The contributions of the paper are summarized as follows:
\begin{itemize}
\item We propose an interacting particle algorithm for approximating $h_{\star}$
and $P_{\star}$ numerically. Our algorithm does not hinge upon reversibility
assumptions on $M$ and is similar in structure to one proposed by
\citet{del2011robustness,del2012snell} for the rather different purpose
of numerically solving optimal stopping problems. The novelty of our
approach is that we obtain a particle approximation of $P_{\star}$
that is easy to sample from\textcolor{black}{, which is an important
factor in applications.}
\item We apply our method to two problems. The first application is a Markov
chain rare-event problem, here our method allows us to unbiasedly
estimate tail probabilities for additive functions of Markov chains
by importance sampling and $P_{\star}$ defines an optimal change
of measure derived by \citet{bucklew1990monte}, which we are able
to approximate. The second application is an optimal control problem
as studied in \citep{albertini1988logarithmic,todorov2008general,dvijotham2011unified},
in which the cost function involves a Kullback-Leibler divergence
term. Here $P_{\star}$ specifies the optimal dynamics for a controlled
Markov chain.
\item We study the convergence properties of our algorithm, in particular
deriving moment bounds for the errors in approximation of $h_{\star}$
and $P_{\star}$, and we derive certain path-wise stability properties
of random operators obtained from our algorithm, demonstrating that
they inherit the ``tendency to rank-one'' behavior of the iterated
operator $\lambda_{\star}^{-1}Q^{(n)}$. 
\end{itemize}

\subsection{Organization of the paper}

The remainder of this paper is structured as follows. Section \ref{sec:The-eigen-problem}
provides notation and sets out the eigen-problem. Section \ref{sec:Applications}
presents the motivating applications. In Section \ref{sec:Overview}
we present the particle algorithm and state the our results regarding
various properties of the particle approximations. More details and
precise statements for these are found in Section \ref{sec:Particle-approximations}.
Section \ref{sec:Examples} contains numerical results for the application.
Some concluding remarks and possible extensions are presented in Section
\ref{sec:Extensions}. Finally, various proofs are contained in the
appendix.

\section{The eigen-problem\label{sec:The-eigen-problem}}

\subsection{Notation and assumptions}

Let $\mathsf{X}$ be a state space endowed with a countably generated
$\sigma$-algebra $\mathcal{B}$ and let $\mathcal{L}$ be the Banach
space of real-valued, $\mathcal{B}$-measurable, bounded functions
on $\mathsf{X}$ endowed with the infinity norm $\left\Vert f\right\Vert :=\sup_{x\in\mathsf{X}}\left|f(x)\right|$.
For a possibly signed measure $\eta$, a function $\varphi$, and
a possibly signed integral kernel $K$ we write $\mu(\varphi):=\int\varphi(x)\mu(dx)$,
$K(\varphi)(x):=\int K(x,dy)\varphi(dy)$, and $\mu K(\cdot):=\int\mu(dx)K(x,\cdot)$,
and the rank-one kernel $(\varphi\otimes\eta)(x,dx^{\prime}):=\varphi(x)\eta(dx^{\prime})$. 

The collection of probability measures on $\left(\mathsf{X},\mathcal{B}\right)$
is denoted by $\mathcal{P}$ and the total variation norm for possibly
signed measures is denoted $\left\Vert \eta\right\Vert :=\sup_{\varphi:\left|\varphi\right|\leq1}\left|\eta(\varphi)\right|$.
The operator norm corresponding to $\mathcal{L}$ is 
\[
\interleave K\interleave:=\sup_{\varphi:\left|\varphi\right|\leq1}\left\Vert K(\varphi)\right\Vert .
\]
The $n$-fold iterate of $K$ is denoted by $K^{\left(n\right)}$
and for $\left(K_{n};n\geq1\right)$ a collection of integral kernels
and any $0\leq p\leq n$, we write
\begin{equation}
K_{p,n}:=Id,\;\;p=n,\quad\quad K_{p,n}:=K_{p+1}\cdots K_{n},\quad n>p.\label{eq:semi_g_notation}
\end{equation}

Throughout the paper, we denote by $G:\mathsf{X}\rightarrow(0,\infty)$
is a $\mathcal{B}$-measurable, bounded function and let $M:\mathsf{X}\times\mathcal{B}\rightarrow[0,1]$
be a Markov kernel, then define the integral kernel $Q(x,dy):=G(x)M(x,dy)$.
We have 
\[
\interleave Q\interleave=\sup_{x\in\mathsf{X}}Q(1)(x)=\sup_{x\in\mathsf{X}}G(x),
\]
and $\interleave Q\interleave<\infty$ due to $G$ being bounded.
The spectral radius of $Q$ as a bounded linear operator on $\mathcal{L}$
is 
\[
\xi:=\lim_{n\rightarrow\infty}\interleave Q^{\left(n\right)}\interleave^{1/n}
\]
where the limit always exists, since the operator norm is sub-multiplicative. 

For two probability measures $\mu,\nu\in\mathcal{P}$ we will denote
the Kullback-Leibler divergence or relative entropy as 
\[
{\cal K}{\cal L}(\left.\mu\right\Vert \nu):=\begin{cases}
\int\log\left(\frac{\mathrm{d}\mu}{\mathrm{d}\nu}\right)d\mu & \mbox{ if \ensuremath{\mu\ll\nu}},\\
+\infty & \mbox{ otherwise}.
\end{cases}
\]
For any sequence $\left(a_{n};n\geq1\right)$ and $\ell>p,$ we take
$\prod_{n=\ell}^{p}a_{n}=1$ by convention. The unit function on $\mathsf{X}$
or Cartesian products thereof is denoted by $1$. We will write the
indicator function $\mathbb{I}[\cdot]$ or sometimes $\mathbb{I}_{A}$
for a set $A\subset\mathsf{X}$. Unless stated otherwise, we will
assume throughout: 

\begin{condition}\label{hyp:mixing}

there exists a probability measure $\nu$ such that for all $x$,
$Q(x,\cdot)$ is equivalent to $\nu.$ There exist constants $0<\epsilon^{-},\epsilon^{+}<\infty$
such that the corresponding Radon-Nikodym derivative, denoted by $q(x,x^{\prime}):=\dfrac{dQ(x,\cdot)}{d\nu}(x^{\prime})$
satisfies
\[
\epsilon^{-}\leq q(x,x^{\prime})\leq\epsilon^{+},\quad\forall x,x^{\prime}\in\mathsf{X}.
\]

\end{condition}In some places it will be convenient to use the implication
of \textbf{(H)}
\[
\epsilon^{-}\nu(\cdot)\leq Q(x,\cdot)\leq\epsilon^{+}\nu(\cdot),\quad\forall x\in\mathsf{X}.
\]
The uniform recurrence of $Q$ in Assumption \textbf{(H)} is a quite
strong assumption, but has been used extensively in both the particle
filtering literature (\citet{del2013mean,smc:theory:Dm04,douc2010sequential})
and the rare events literature related to tail probabilities of interest
here \textcolor{black}{(\citep{bucklew1990monte,dupuis2005dynamic,chan:lai}}).
It rules out kernels of the form $Q(x,\cdot)=p\delta_{x}(\cdot)+\ldots$,
and rarely holds when $\mathsf{X}$ is non-compact, but allows a relatively
straightforward treatment of the eigen-problem and the particle algorithm.
The eigen-quantities of interest exist under much weaker assumptions,
and a result similar to Theorem \ref{thm:MET} presented later in
Section (\ref{sec:Multiplicative-Ergodicity}) can be obtained for
non-compact $\mathsf{X}$ in a weighted $\infty$-norm setting under
quite flexible Lyapunov drift conditions \citep{mc:theory:KM03,whiteley2012}.
The details, however, would necessitate a much more complicated presentation,
and obtaining error bounds of the sort we do for the particle approximations,
under assumptions much weaker than \textbf{(H)} seems very challenging.

\subsection{Existence and other properties of eigen-quantities\label{sec:Multiplicative-Ergodicity}}

From the minorization part of \textbf{(H)} 
\[
\nu Q^{(n+m-1)}(1)\epsilon^{-}=\nu Q^{(n)}Q^{(m-1)}(1)\epsilon^{-}\geq\nu Q^{(n-1)}(1)\epsilon^{-}\nu Q^{(m-1)}(1)\epsilon^{-},
\]
so by Fekete's lemma, the following limit exists,
\begin{equation}
\Lambda_{\star}:=\lim_{n\rightarrow\infty}\frac{1}{n}\log\nu Q^{\left(n-1\right)}(1)\epsilon^{-}=\sup_{n\geq1}\frac{1}{n}\log\nu Q^{\left(n-1\right)}(1)\epsilon^{-},\label{eq:Lambda}
\end{equation}
Define
\begin{equation}
\lambda_{\star}:=\exp(\Lambda_{\star}),\label{eq:lambda}
\end{equation}

The proof of Theorem \ref{thm:MET} is given in the Appendix, and
it involves gathering together various arguments from \citet{mc:the:N04},
which we recount there for the reader's convenience.
\begin{thm}
\label{thm:MET} The spectral radius of $Q$, $\lim_{n\rightarrow\infty}\interleave Q^{\left(n\right)}\interleave^{1/n}$,
coincides with $\lambda_{\star}$. There exists a unique probability
measure $\eta_{\star}$ and $\nu$-essentially unique positive function
$h_{\star}$ satisfying
\begin{equation}
\eta_{\star}Q=\lambda_{\star}\eta_{\star},\quad\quad Q(h_{\star})=\lambda_{\star}h_{\star},\quad\quad\eta_{\star}(h_{\star})=1.\label{eq:eigen_star}
\end{equation}
Furthermore,
\begin{equation}
\frac{\epsilon^{-}}{\epsilon^{+}}\leq h_{\star}(x)\leq\frac{\epsilon^{+}}{\epsilon^{-}},\quad\forall x\in\mathsf{X},\label{eq:eta_in_P_h_bound}
\end{equation}
 $P_{\star}$ has a unique invariant probability distribution, denoted
by $\pi_{\star}$, such that $d\pi_{\star}/d\eta_{\star}=h_{\star}$
and for all $n\geq1$,
\begin{eqnarray}
\interleave P_{\star}^{(n)}-1\otimes\pi_{\star}\interleave & \leq & 2\rho^{n}\label{eq:P_star_ergo}\\
\interleave\lambda_{\star}^{-n}Q^{\left(n\right)}-h_{\star}\otimes\eta_{\star}\interleave & \leq & 2\rho^{n}\left(\frac{\epsilon^{+}}{\epsilon^{-}}\right)^{2},\label{eq:MET_bound}
\end{eqnarray}
where $\rho:=1-\left(\epsilon^{-}/\epsilon^{+}\right)$.\end{thm}
\begin{rem}
The bound in (\ref{eq:MET_bound}) can be understood as describing
``tendency to rank-one'' of the iterated kernel $\lambda_{\star}^{-n}Q^{\left(n\right)}$,
this kind of result is sometimes referred to as a Multiplicative Ergodic
Theorem (MET) \citep{mc:theory:KM03}.
\end{rem}

\subsection{Deterministic approximations\label{sec:Deterministic_approximations_intro} }

We proceed by defining the deterministic forward-backward recursions
which will be used to approximate $\eta_{\star}$, $\lambda_{\star}$,
$h_{\star}$ and $P_{\star}$. These will appear throughout the remainder
of the paper.

\subsubsection*{Forward recursion for measures $\eta_{n}$}

Define the probability measures $(\eta_{n};n\geq0)$ and numbers $(\lambda_{n};n\geq0)$
by
\begin{equation}
\eta_{0}:=\mu,\quad\quad\eta_{n}:=\frac{\mu Q^{\left(n\right)}}{\mu Q^{\left(n\right)}(1)},\;n\geq1,\quad\quad\lambda_{n}:=\eta_{n}(G),\;n\geq0.\label{eq:eta_defn}
\end{equation}
Immediately from (\ref{eq:eta_defn}) we have the product formula:
\begin{equation}
\eta_{p}Q^{\left(n-p\right)}(1)=\prod_{\ell=p}^{n-1}\frac{\eta_{p}Q^{(\ell-p+1)}(1)}{\eta_{p}Q^{(\ell-p)}(1)}=\prod_{\ell=p}^{n-1}\eta_{\ell}(G)=\prod_{\ell=p}^{n-1}\lambda_{\ell},\quad p\leq n,\label{eq:product formula}
\end{equation}
and we note that
\begin{equation}
\eta_{n}=\Phi(\eta_{n-1}),\quad n\geq1,\label{eq:eta_Phi_recurse}
\end{equation}
with $\Phi$ defined earlier in (\ref{eq:Phi_defn}). Straightforward
manipulations show that under \textbf{(H)}, for any $n\geq1$, $\eta_{n}$
is equivalent to $\nu$.

\subsubsection*{Backward recursion for functions $h_{p,n}$}

Define the sequence of non-negative functions $(h_{p,n};0\leq p\leq n)$
as follows:
\begin{equation}
h_{n,n}(x):=1,\quad\quad h_{p,n}(x):=\frac{Q^{\left(n-p\right)}(1)(x)}{\eta_{p}Q^{\left(n-p\right)}(1)},\quad\quad0\leq p<n,x\in\mathsf{X}.\label{eq:h_p_n_defn}
\end{equation}

\begin{rem}
It should be noted that $(\eta_{n})$, $(\lambda_{n})$ and $(h_{p,n},P_{\left(p,n\right)})$
depend implicitly on the initial measure $\mu$.
\end{rem}

\subsubsection*{Properties}

The following lemma shows that the quantities $(\eta_{n})$, $(h_{p,n})$,
$(\lambda_{n})$ satisfy recursive relationships similar to the eigen-measure/function/value
equations in (\ref{eq:eigen_star}).
\begin{lem}
\label{lem:approx_eigen}The probability measures $(\eta_{n})$, functions
$(h_{p,n})$ and numbers $(\lambda_{n})$ satisfy
\begin{equation}
\eta_{p}Q=\lambda_{p}\eta_{p+1},\quad\quad Q(h_{p+1,n})=\lambda_{p}h_{p,n},\quad\quad\eta_{p}(h_{p,n})=1,\quad0\leq p\leq n.\label{eq:h_recurse}
\end{equation}
\end{lem}
\begin{proof}
The measure equation is just a rearrangement of (\ref{eq:eta_Phi_recurse}).
The function equation is due to the definition of $(h_{p,n})$ and
the product formula (\ref{eq:product formula}), as
\[
h_{p,n}=\frac{Q^{\left(n-p\right)}(1)}{\eta_{p}Q^{\left(n-p\right)}(1)}=\frac{\eta_{p+1}Q^{\left(n-p-1\right)}(1)}{\eta_{p}Q^{\left(n-p\right)}(1)}Q(h_{p+1,n})=\frac{1}{\lambda_{p}}Q(h_{p+1,n}).
\]
The final equality in (\ref{eq:h_recurse}) holds due to the definition
(\ref{eq:h_p_n_defn}).
\end{proof}
Lets define now the Markov probability kernel
\begin{equation}
P_{(p,n)}(x,dx^{\prime}):=\frac{Q(x,dx^{\prime})h_{p,n}(x^{\prime})}{\lambda_{p-1}h_{p-1,n}(x)},\label{eq:P_p_n_defn}
\end{equation}
where Lemma \ref{lem:approx_eigen} ensures it is indeed Markov. We
proceed with a proposition that can be used to justify the choice
of $(\eta_{n})$, $(h_{p,n})$, $(P_{(p,n)})$ as intermediate approximations
of $\eta_{\star}$, $h_{\star}$, $P_{\star}$ respectively. The proof
is in the Appendix.
\begin{prop}
\label{prop:_h_n_h_star_bound}For any $0\leq p\leq n$, 
\begin{eqnarray}
\left\Vert \eta_{n}-\eta_{\star}\right\Vert  & \leq & \rho^{n}C_{\eta},\label{eq:eta_converge}\\
\left\Vert h_{p,n}-h_{\star}\right\Vert  & \leq & \rho^{\left(n-p\right)\wedge p}C_{h},\label{eq:h_converge}\\
\interleave P_{(p,n)}-P_{\star}\interleave & \leq & \rho^{\left(n-p\right)\wedge p}C_{P},\label{eq:P_converge}
\end{eqnarray}
with 
\begin{eqnarray*}
\rho & := & 1-\left(\epsilon^{-}/\epsilon^{+}\right)\\
C_{\eta} & := & 4\left(\epsilon^{+}/\epsilon^{-}\right)^{3}\\
C_{h} & := & 2\left(\epsilon^{+}/\epsilon^{-}\right)^{2}\left[1+\left(\epsilon^{+}/\epsilon^{-}\right)+2\left(\epsilon^{+}/\epsilon^{-}\right)^{3}\right]\\
C_{P} & := & 2C_{h}\left(\epsilon^{+}/\epsilon^{-}\right)^{2}+C_{\eta}\rho^{-1}\left(\epsilon^{+}/\epsilon^{-}\right)
\end{eqnarray*}
having no dependence on the initial measure $\mu$.\end{prop}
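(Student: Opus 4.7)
The plan is to chain the three bounds together, using the MET (\ref{eq:MET_bound}) as the fundamental estimate and feeding the output of each bound into the next. For the measure bound (\ref{eq:eta_converge}), I would start from the identity
\[
\eta_n(\varphi) - \eta_\star(\varphi) = \frac{\lambda_\star^{-n}\mu Q^{(n)}(\varphi) - \eta_\star(\varphi)\,\lambda_\star^{-n}\mu Q^{(n)}(1)}{\lambda_\star^{-n}\mu Q^{(n)}(1)},
\]
apply (\ref{eq:MET_bound}) to $\mu Q^{(n)}(\varphi)$ and to $\mu Q^{(n)}(1)$ after adding and subtracting $\mu(h_\star)\eta_\star(\varphi)$ so as to bound the numerator by $4\rho^n(\epsilon^+/\epsilon^-)^2\|\varphi\|$, and use Lemma \ref{lem:Q_n_bounds} with $\mu' = \eta_\star$ to obtain the uniform lower bound $\lambda_\star^{-n}\mu Q^{(n)}(1) \geq \epsilon^-/\epsilon^+$. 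Dividing then yields $C_\eta = 4(\epsilon^+/\epsilon^-)^3$, and all dependence on the initialization $\mu$ drops out.

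The function bound (\ref{eq:h_converge}) is where the interesting double-time-horizon phenomenon appears and where the main technical difficulty lies. I would rewrite $h_{p,n}(x) = [\lambda_\star^{-(n-p)}Q^{(n-p)}(1)(x)]/[\lambda_\star^{-(n-p)}\eta_p Q^{(n-p)}(1)]$ and apply the MET to approximate the numerator and denominator by $h_\star(x)$ and $\eta_p(h_\star)$ respectively, both with errors of order $\rho^{n-p}$. The crucial additional input is that $\eta_p(h_\star) \to 1$ at rate $\rho^p$, which follows directly from Step~1 applied with $\varphi = h_\star$. Expanding $h_{p,n}(x) - h_\star(x)$ as a single fraction then produces three error contributions: one of order $\rho^p$ from the factor $1 - \eta_p(h_\star)$ and two of order $\rho^{n-p}$ from the MET errors on numerator and denominator. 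The denominator itself remains bounded below by $\epsilon^-/\epsilon^+$ uniformly in $p$ and $n$, again via Lemma \ref{lem:Q_n_bounds}. Taking the maximum of the two decay rates yields $\rho^{(n-p)\wedge p}$; conceptually, $(n-p)$ governs the closeness of the backward iterate $Q^{(n-p)}$ to its rank-one limit while $p$ governs the closeness of the forward iterate $\eta_p$ to $\eta_\star$.

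For the kernel bound (\ref{eq:P_converge}), I would decompose
\[
P_{(p,n)}(x,dx') - P_\star(x,dx') = \frac{Q(x,dx')[h_{p,n}(x') - h_\star(x')]}{\lambda_{p-1}h_{p-1,n}(x)} + Q(x,dx')h_\star(x')\!\left[\frac{1}{\lambda_{p-1}h_{p-1,n}(x)} - \frac{1}{\lambda_\star h_\star(x)}\right].
\]
The first summand is controlled by the Step~2 bound on $\|h_{p,n} - h_\star\|$ together with a uniform upper bound on $Q(1)(x)/[\lambda_{p-1}h_{p-1,n}(x)]$ coming from \textbf{(H)} and Lemma \ref{lem:h_p_n_bounds}. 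For the second summand, Step~1 applied with $\varphi = G$ gives $|\lambda_{p-1} - \lambda_\star| \leq \rho^{p-1}C_\eta\|G\|$, and combining this with the Step~2 bound for $h_{p-1,n}$ controls $|\lambda_{p-1}h_{p-1,n}(x) - \lambda_\star h_\star(x)|$ at the rate $\rho^{(n-p)\wedge p}$; uniform lower bounds on $\lambda_\star h_\star(x)$ and $\lambda_{p-1}h_{p-1,n}(x)$ prevent the inversion from degrading this rate. The main obstacle throughout is Step~2: no new conceptual ingredient beyond the MET and the uniform Doeblin-type bounds from \textbf{(H)} is required, but two independent decay mechanisms must be tracked through ratio arithmetic without losing the exponent, and careful bookkeeping of the accumulating factors of $\epsilon^+/\epsilon^-$ is what ultimately produces the stated constants.
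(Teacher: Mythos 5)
Your proposal is correct and follows essentially the same route as the paper's proof: the MET bound (\ref{eq:MET_bound}) is applied to numerator and denominator of the normalized ratio for (\ref{eq:eta_converge}); for (\ref{eq:h_converge}) the paper's intermediate step (\ref{eq:lambda_converge}) is exactly your observation that $\eta_p Q^{(n-p)}(1)/\lambda_\star^{n-p}$ approaches $\eta_p(h_\star)$ at rate $\rho^{n-p}$ while $\eta_p(h_\star)\to 1$ at rate $\rho^p$, giving the $\rho^{(n-p)\wedge p}$ rate; and your two-term decomposition for (\ref{eq:P_converge}), with the second term further split into the $h_{p-1,n}$ and $\lambda_{p-1}$ contributions, is the same three-term decomposition the paper writes out, controlled by the same uniform bounds from Lemmas \ref{lem:Q_n_bounds} and \ref{lem:h_p_n_bounds} and the identification $\lambda_{p-1}=\eta_{p-1}(G)$, $\lambda_\star=\eta_\star(G)$.
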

\begin{rem}
Exponential convergence of the general form (\ref{eq:eta_converge})
has already been established in, for example, \citet{smc:the:DMD04}
using Dobrushin arguments for a collection of inhomogeneous Markov
kernels, but the rate obtained there is $\tilde{\rho}:=1-\left(\epsilon^{-}/\epsilon^{+}\right)^{2}$
as opposed to $\rho$. The proof of Proposition \ref{prop:_h_n_h_star_bound}
uses the MET bound of equation (\ref{eq:MET_bound}) and, as may be
seen in the proof of Theorem \ref{thm:MET}, the rate $\rho$ is inherited
from the uniform geometric ergodicity of $P_{\star}$ as per (\ref{eq:P_star_ergo}).
This is the source of the improved rate.
\end{rem}

\section{Applications\label{sec:Applications}}

We will motivate our interest in the objects of Theorem \ref{thm:MET}
through two applications. The aim here is to relate various objects
from these applications with the eigen-quantities, especially $P_{\star}$,
which will later show how to approximate using a particle algorithm.
Each subsection contains a different application and can be read separately.

\subsection{Importance sampling for tail probabilities\label{sub:Rare-events-estimation}}

For a measurable function $U:\mathsf{X}\rightarrow[-1,1]$ which is
not constant $\nu-a.e.$, some $\delta\in\left(0,1\right)$ and $m\geq1$,
our objective is to estimate the deviation probability 
\begin{equation}
\pi_{m}(\delta):=\mathbb{P}_{x}\left(\sum_{p=1}^{m}U(X_{p})>m\delta\right),\label{eq:dev_prob}
\end{equation}
where $\mathbb{P}_{x}$ denotes the law of $(X_{n};n\geq0)$ as a
Markov chain with $X_{0}=x$ and $X_{n}\sim M(X_{n-1},\cdot)$. There
is a quite extensive literature on methods for estimating probabilities
of the form (\ref{eq:dev_prob}) (see for example \citep{bucklew1990monte,dupuis2005dynamic}\textcolor{black}{,)
building upon large deviation theory for functionals of Markov chains,
with the results in \citep{iscoe1985large,ney1987markov} being particularly
relevant in the present context. We will explore an importance sampling
scenario in the setting of \citet{bucklew1990monte}. The choice of
this setup and specific form of }$\pi_{m}\left(\delta\right)$\textcolor{black}{{}
provides some insight into the applicability of the proposed algorithm,
but many of the details could be generalized. }

\textcolor{black}{For $\alpha\in\mathbb{R}$, introduce} 
\[
G_{\alpha}(x):=e^{\alpha U(x)},\quad\quad Q_{\alpha}(x,dx^{\prime}):=G_{\alpha}(x)M(x,dx^{\prime}).
\]
Note that $Q_{\alpha}^{(n)}(x,\mathsf{X})=\mathbb{E}_{x}\left[\exp\left(\sum_{p=0}^{n-1}\alpha U(X_{p})\right)\right]$. 

To simplify the discussion, assume that $Q_{\alpha}$ satisfies \textbf{(H)}
for each $\alpha\in\mathbb{R}$, which implies $M$ is uniformly recurrent;
see Appendix \ref{sub:Proofs-Numellin} for a definition of recurrence
and related details. We denote by $h_{\star}^{\alpha}$, $\Lambda_{\star}(\alpha),$
$\eta_{\star}^{\alpha},P_{\star}^{\alpha}$ the eigen-quantities and
twisted kernel corresponding to $Q_{\alpha}$. It is then a consequence
of Theorem \ref{thm:MET} that 
\[
\Lambda_{\star}(\alpha)=\lim_{n\rightarrow\infty}\frac{1}{n}\log\mathbb{E}_{x}\left[\exp\left(\alpha\sum_{p=0}^{n-1}U(X_{p})\right)\right].
\]
The convex dual of $\Lambda_{\star}(\alpha)$ is
\begin{equation}
I(t):=\sup_{\alpha\in\mathbb{R}}\left[t\alpha-\Lambda_{\star}(\alpha)\right],\quad t\in\mathbb{R}.\label{eq:lambda_conjugate}
\end{equation}

\textcolor{black}{\citet{bucklew1990monte}} proposed to estimate
$\pi_{m}(\delta)$ by importance sampling, using some Markov kernel
$\overline{M}$ such that $M(x,\cdot)\ll\overline{M}(x,\cdot)$. For
$L\geq1$, we consider the estimator of $\pi_{m}(\delta)$:
\begin{equation}
\widehat{\pi}_{m}\left(\delta,L\right):=\frac{1}{L}\sum_{i=1}^{L}\mathbb{I}\left[\sum_{p=1}^{m}U(X_{p}^{i})>m\delta\right]\frac{\mathrm{d}\mathbb{P}_{x}}{\mathrm{d}\overline{\mathbb{P}}_{x}}(X_{0}^{i},...,X_{m}^{i}),\label{eq:mu_n_hat}
\end{equation}
where $\left\{ \left(X_{0}^{i},X_{1}^{i},...,X_{m}^{i}\right);i=1,...,L\right\} $
is composed by $L$ independent Markov chains, each with transition
kernel $\overline{M}$ and law denoted by $\overline{\mathbb{P}}_{x}$.
The corresponding expectation will be denoted below by $\overline{\mathbb{E}}_{x}$.
Note that the dependence of $\widehat{\pi}_{m}\left(\delta,L\right)$
on $\overline{M}$ is suppressed from the notation. Also following
\citep[Definition 2.]{bucklew1990monte} we will consider a class
of candidates for $\overline{M}$. Let $\mathcal{C}$ be the collection
of Markov transitions $\overline{M}$ for each of which there exists
$0<\bar{\epsilon}^{-},\bar{\epsilon}^{+}<\infty$ and a probability
measure $\bar{\nu}$ such that 
\[
\left(\mathcal{C}\right)\quad\quad\quad\bar{\nu}\left(\cdot\right)\bar{\epsilon}^{-}\leq\overline{M}(x,\cdot)\leq\bar{\epsilon}^{+}\bar{\nu}\left(\cdot\right),\;\forall x,\quad\quad\nu\ll\bar{\nu},\quad\quad\int\left(\frac{\mathrm{d}\nu}{\mathrm{d}\bar{\nu}}\left(x\right)\right)^{2}\bar{\nu}\left(dx\right)<\infty,
\]
where $\nu$ is as in \textbf{(H)}. 

The following result describes the asymptotic $m\rightarrow\infty$
behavior of the probability of interest and the second moment of the
estimator when $L=1$. 
\begin{thm}
\label{thm:bucklew}\citep{bucklew1990monte} \end{thm}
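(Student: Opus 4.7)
The statement of the theorem is only a citation to \citet{bucklew1990monte} in the excerpt, so my sketch targets what is expected from the surrounding setup: namely, a large-deviation asymptotic for $\pi_m(\delta)$ together with a matching logarithmic-efficiency statement for the importance-sampling estimator $\widehat{\pi}_m(\delta, L)$ when the simulation distribution $\overline{M} \in \mathcal{C}$ is chosen appropriately. The plan is to derive both from the multiplicative ergodic machinery of Theorem \ref{thm:MET} applied to the family $\{Q_\alpha; \alpha \in \mathbb{R}\}$, together with the G\"artner--Ellis theorem and change-of-measure manipulations.

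First, since $U$ is bounded and $M$ inherits the small-set minorization, \textbf{(H)} transfers to each $Q_\alpha$, so Theorem \ref{thm:MET} (equation (\ref{eq:MET_bound})) gives
\[
\mathbb{E}_x\!\left[\exp\Bigl(\alpha \sum_{p=1}^m U(X_p)\Bigr)\right] = Q_\alpha^{(m)}(1)(x) = e^{m \Lambda_\star(\alpha)} h_\star^\alpha(x)\, \eta_\star^\alpha(1)\bigl(1 + O(\rho_\alpha^m)\bigr),
\]
uniformly in $x \in \mathsf{X}$. Convexity of $\Lambda_\star$ follows from H\"older's inequality, and smoothness on the interior of its effective domain from the uniqueness and analytic dependence on $\alpha$ of the principal eigen-pair in Theorem \ref{thm:MET}. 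Then the G\"artner--Ellis theorem yields the LDP for $m^{-1}\sum_{p=1}^m U(X_p)$ with the good rate function $I$ from (\ref{eq:lambda_conjugate}); in particular $\lim_{m\to\infty} -m^{-1}\log \pi_m(\delta) = I(\delta)$ provided $\delta$ lies in the interior of the range of $\Lambda_\star'$.

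Second, for the variance of the estimator I would start from the identity
\[
\overline{\mathbb{E}}_x\bigl[\widehat{\pi}_m(\delta,L)^2\bigr] = L^{-1} \mathbb{E}_x\!\left[\mathbb{I}_{A_m} \prod_{p=1}^m \frac{dM(X_{p-1},\cdot)}{d\overline{M}(X_{p-1},\cdot)}(X_p)\right] + (1-L^{-1})\pi_m(\delta)^2,
\]
where $A_m := \{\sum_p U(X_p) > m\delta\}$. Under condition $(\mathcal{C})$, the per-step likelihood ratios are uniformly bounded, so a Varadhan-type argument combined with the LDP above controls the exponential rate of the first expectation and delivers the general upper bound $\liminf_{m\to\infty} -m^{-1}\log \overline{\mathbb{E}}_x[\widehat{\pi}_m(\delta,L)^2] \leq 2\, I(\delta)$.

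The hard part will be the matching reverse inequality, which identifies an optimal element of $\mathcal{C}$ and pins down asymptotic efficiency. The natural candidate is $\overline{M} = P_\star^{\alpha^*}$ with $\alpha^*$ solving $\Lambda_\star'(\alpha^*) = \delta$. Substituting the eigen-identity $P_\star^{\alpha^*}(x,dy) = G_{\alpha^*}(x) M(x,dy) h_\star^{\alpha^*}(y)/(\lambda_\star^{\alpha^*} h_\star^{\alpha^*}(x))$ into the product of likelihood ratios produces a telescoping cancellation along the trajectory that reduces the second-moment computation on $A_m$ to an expression of the form $\lambda_\star^{\alpha^* m} h_\star^{\alpha^*}(x) \, \mathbb{E}_x[\mathbb{I}_{A_m} e^{-\alpha^* \sum U(X_p)}/h_\star^{\alpha^*}(X_m)]$, which the MET together with the Legendre duality $\alpha^* \delta - \Lambda_\star(\alpha^*) = I(\delta)$ pins down to exponential order $e^{-2mI(\delta)}$. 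Boundary terms in $h_\star^{\alpha^*}$ are harmless, as Theorem \ref{thm:MET} furnishes two-sided uniform bounds; and the verification that $P_\star^{\alpha^*} \in \mathcal{C}$ follows from the uniform minorization for $P_\star$ derived within the proof of Theorem \ref{thm:MET}. The delicate point is really the sharpness of the Varadhan bound in the presence of the unbounded Radon--Nikodym product, which is where the integrability clause $\int(d\nu/d\bar{\nu})^2 d\bar{\nu} < \infty$ in $(\mathcal{C})$ will enter crucially.
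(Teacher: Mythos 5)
The paper does not actually prove this theorem: the ``proof'' given is a pointer to \citet[Theorem 1, Corollary 1 and Theorem 3]{bucklew1990monte}, together with the observation that the technical hypothesis (equation (9) there) holds trivially here because $I(t)$ is continuous, and a remark that the difference between $G_{\alpha}(x)M(x,dy)$ and $M(x,dy)G_{\alpha}(y)$ is asymptotically immaterial. Your from-scratch reconstruction via the MET, G\"artner--Ellis and a twisted change of measure is therefore a genuinely different and more ambitious route. It has the right shape for Parts 1, 2 and the equality half of Part 4, but it contains two substantive errors. First, for Part 3 your invocation of a ``Varadhan-type argument'' is both unnecessary and ill-founded. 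The general lower bound on the second moment is simply Jensen's inequality: the estimator is unbiased, so $\overline{\mathbb{E}}_{x}\bigl[\widehat{\pi}_{m}(\delta,1)^{2}\bigr]\geq\pi_{m}(\delta)^{2}$, and the LDP of Part 2 applied to the right-hand side gives the stated bound for every $\overline{M}\in\mathcal{C}$. Varadhan, if it applied, would control the rate from above, which is the wrong direction; and your premise that ``the per-step likelihood ratios are uniformly bounded'' does not follow from $(\mathcal{C})$, which constrains $\overline{M}$ only relative to $\bar{\nu}$ and gives $d\nu/d\bar{\nu}$ merely in $L^{2}(\bar{\nu})$, so the factor $d\nu/d\bar{\nu}$ inside $dM(x,\cdot)/d\overline{M}(x,\cdot)$ need not be bounded at all.

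Second, Part 4 asserts that $P_{\star}^{\alpha}$ is the \emph{unique} asymptotically efficient member of $\mathcal{C}$, and your telescoping computation establishes only that it achieves equality in (\ref{eq:asymp_efficient}); the uniqueness claim, which is the substance of \citet[Theorem 3]{bucklew1990monte}, is not addressed. A lesser gap worth flagging: you appeal to ``analytic dependence on $\alpha$ of the principal eigen-pair in Theorem \ref{thm:MET}'' to get smoothness and strict convexity of $\Lambda_{\star}$, but that theorem is stated for a single fixed kernel $Q$ and says nothing about regularity in $\alpha$; this requires a separate perturbation argument (of the sort used in \citet{iscoe1985large} or \citet{mc:theory:KM03}). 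Related to this, the existence of a solution $\alpha^{*}$ to $\Lambda_{\star}^{\prime}(\alpha)=\delta$ for arbitrary $\delta\in(0,1)$ is not automatic from $|U|\leq1$ and would need to be checked.
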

\begin{enumerate}
\item \emph{$I(t)$ is a non-negative, strictly convex function with $I(t)=0$
if and only if $t=\Lambda_{\star}^{\prime}(0)$.}
\item \emph{For any $\delta\in\left(0,1\right)$, the following large deviation
principle holds
\[
\lim_{m\rightarrow\infty}\frac{1}{m}\log\pi_{m}\left(\delta\right)=-\inf_{t\in[\delta,\infty)}I(t).
\]
}
\item \emph{For any $\delta\in\left(0,1\right)$ and $\overline{M}$ in
the class $\mathcal{C}$, the importance sampling estimator satisfies
\begin{equation}
\lim_{m\rightarrow\infty}\frac{1}{m}\log\overline{\mathbb{E}}_{x}\left[\widehat{\pi}_{m}\left(\delta,1\right)^{2}\right]\geq-2\inf_{t\in[\delta,\infty)}I(t).\label{eq:asymp_efficient}
\end{equation}
}
\item \emph{For any $\delta\in\left(0,1\right)$ and $\alpha$ the unique
solution of $\Lambda_{\star}^{\prime}\left(\alpha\right)=\delta$,
the twisted kernel $P_{\star}^{\alpha}$ is the unique member of the
class $\mathcal{C}$ for which equality holds in (\ref{eq:asymp_efficient}),
and as such is called} asymptotically efficient\emph{. }\end{enumerate}
\begin{proof}
We just point to the appropriate references. Parts 1.-3. are due to
\citet[Theorem 1 and Corollary 1]{bucklew1990monte}, in turn derived
from various results of \citep{iscoe1985large}. Equation (9) in \citep{bucklew1990monte}
is satisfied trivially in the present scenario since $I(t)$ is continuous.
Part 4. is an application of \citet[Theorem 3]{bucklew1990monte}.
We note that the authors there consider the kernel $M\left(x,dy\right)G_{\alpha}\left(y\right)$,
as opposed to $G_{\alpha}\left(x\right)M\left(x,dy\right)$, this
difference is of no consequence due to the asymptotic ($m\rightarrow\infty$)
nature of the results and the fact that the two corresponding twisted
kernels are essentially identical.
\end{proof}
The following elementary corollary summarizes an important practical
implication of this theorem. 
\begin{cor}
Assume $\inf_{t\in\left[\delta,\infty\right)}I(t)\neq0$. Unless $\overline{M}$
is chosen to be $P_{\star}^{\alpha}$ with $\alpha$ the solution
to $\Lambda_{\star}^{\prime}\left(\alpha\right)=\delta$, the number
of samples $L$ must increase at a strictly positive exponential rate
in $m$ in order to prevent growth of the relative variance: 
\begin{equation}
\overline{\mathbb{E}}_{x}\left[\left(\frac{\widehat{\pi}_{m}\left(\delta,L\right)}{\pi_{m}(\delta)}-1\right)^{2}\right]=\frac{1}{L}\left(\frac{\overline{\mathbb{E}}_{x}\left[\widehat{\pi}_{m}\left(\delta,1\right)^{2}\right]}{\pi_{m}(\delta)^{2}}-1\right),\label{eq:rare_rel_var_1}
\end{equation}
as $m\to\infty$. Note that $\overline{\mathbb{E}}_{x}[\widehat{\pi}_{m}\left(\delta,L\right)]=\pi_{m}(\delta)$,
so (\ref{eq:rare_rel_var_1}) is indeed the relative variance. 
\end{cor}

\subsection{Optimal control with Kullback-Leibler divergence costs\label{sub:Optimal-control-with}}

We consider a particular class of fully observable stochastic control
problems in discrete time. Let $(X_{n};n\geq0)$ be a controlled Markov
chain initialized from $X_{0}=x$ and $X_{n}\sim M^{f_{n-1}}(X_{n-1},\cdot)$.
Here for each $n\geq0$ $f_{n}\in\mathcal{H}:=\left\{ h:\mathsf{X}\rightarrow\mathbb{R}_{+}^{*};\quad0<M(h)(x)<\infty;\;\forall x\right\} $,
where the set $\mathcal{H}$ is called the set of admissible control
functions. We refer to the sequence of control functions, $f=(f_{0},f_{1},\ldots)$,
as the policy. We will denote the Kullback-Leibler divergence between
the controlled and control-free Markov kernels as:
\[
\mathcal{KL}\left(\left.M^{f_{p}}\right\Vert M\right)(x):=\int M^{f_{p}}(x,dy)\log\frac{dM^{f_{p}}(x,\cdot)}{dM(x,\cdot)}(y).
\]
Let $U,\varOmega\in\mathcal{L}$. We are interested to compute the
optimal policies for the following control problems:

\begin{equation}
\mbox{Finite Horizon Cost}\qquad V_{0}(x)=\inf_{f\in\mathsf{\mathcal{H}}^{n}}\mathbb{E}_{x,0}^{f}\left[\sum_{p=0}^{n-1}\left(U(X_{p})+\mathcal{KL}\left(\left.\check{M}^{f_{p}}\right\Vert M\right)(X_{p})\right)+\varOmega(X_{n})\right],\label{eq:finite_hor_cost}
\end{equation}

\begin{equation}
\mbox{Infinite Horizon Average Cost}\qquad V_{\star}(x)=\inf_{f\in\mathcal{H}^{\mathbb{N}}}\limsup_{n\rightarrow\infty}\frac{1}{n}\mathbb{E}_{x,0}^{f}\left[\sum_{p=0}^{n}\left(U(X_{p})+\mathcal{KL}\left(\left.M^{f_{p}}\right\Vert M\right)(X_{p})\right)\right],\label{eq:inf_hor_cost}
\end{equation}
where $\mathbb{E}_{x,p}^{f}$ denotes the expectation over the path
of the controlled chain starting at $X_{p}=x$, where $p<n$ and $n$
is a deterministic finite horizon time. The interpretation of (\ref{eq:finite_hor_cost})-(\ref{eq:inf_hor_cost})
is that $M$ specifies the desired ``natural'' or control free dynamics
of the state of some stochastic system. The controlled state evolves
according to the dynamics specified by $M^{f_{p}}$ and $\mathcal{KL}\left(\left.M^{f_{p}}\right\Vert M\right)$
penalizes the discrepancy between $M^{f_{p}}(x,\cdot)$ and $M(x,\cdot)$.
The term $U(x)$ expresses an arbitrary state dependent stage cost
and $\varOmega$ is the terminal stage cost for time $n$. It is also
possible to write discounted cost versions of (\ref{eq:inf_hor_cost})
or non-stationary cost versions of (\ref{eq:finite_hor_cost}), but
these possible extensions are omitted. 

This problem was first posed for the finite horizon case in \citep{albertini1988logarithmic}.
The authors in \citep{albertini1988logarithmic} used unpublished
work of Sheu to formulate a duality between non-linear filtering and
optimal control similar to earlier work for continuous time models
found in \citep{fleming1982optimal,fleming1982logarithmic,sheu1984stochastic}.
As a result, one can perform computations for the dual filtering and
smoothing problem and then recover the optimal policy and value functions.
Although the stage costs in (\ref{eq:finite_hor_cost})-(\ref{eq:inf_hor_cost})
might not seem very intuitive they do include Gaussian problems with
quadratic costs (see Example \ref{ex:lqg}) or popular containment
problems (see Section \ref{sec:Examples}). More recently, there has
also been a renewed interest in this type of problems from the machine
learning community \citep{todorov2008general,theodorou2010generalized,kappen2005linear,dvijotham2011unified,bierkens2011online}.
However, outside of situations like Example \ref{ex:lqg}, analytical
solutions are rarely available and so numerical approximations are
required. 
\begin{example}
\label{ex:lqg}Consider the scalar controlled Markov model, $X_{p}=a(X_{p-1})+\mathrm{u}_{p-1}+W_{p},$
with $a(\cdot)$ is bounded continuous non-linear function, $W_{p}$
is an independent zero mean Gaussian random variable with variance
$\sigma^{2}$ and $\mathrm{u}_{p}$ is a standard control input. For
the controlled kernel we write 
\[
M^{f_{p-1}}(x_{p-1},dx_{p})=\frac{1}{\sqrt{2\pi\sigma^{2}}}\exp\left(-\frac{1}{2\sigma^{2}}\left(x_{p}-a(x_{p-1})-\mathrm{u}_{p-1}\right)^{2}\right)dx_{p}.
\]
 In what follows, it will be convenient to think of $f_{p}$ as coming
from $M^{f_{p}}(x,dy)=\frac{M(x,dy)f_{p}(y)}{M(f_{p})(x)}$, as it
will turn out that the dynamic programming solution for this problem
takes this form. So in this example we will set $f_{p}(y)=\exp\left(\frac{y\mathrm{u}_{p}}{\sigma^{2}}-\frac{\mathrm{u}_{p}^{2}}{2\sigma^{2}}\right)$.
The control-free model is $X_{p}=a(X_{p-1})+W_{p},$ so for the uncontrolled
kernel we have $M=M^{0}.$ For the stage cost, let $U(x)=\frac{1}{2\sigma^{2}}x^{2}$
and we have $\mathcal{KL}\left(\left.M^{f_{p}}\right\Vert M\right)=\frac{\mathrm{u}_{p}^{2}}{2\sigma^{2}},$
so we recover the usual quadratic cost control problem.
\end{example}
We now present a useful lemma that will be used when manipulating
the dynamic programming recursions.
\begin{lem}
\label{lem:duality-meas-inf} (Gibbs variational inequality) For every
$\nu\in\mathcal{P}$, $\psi>0$ such that $\nu\left(e^{-\psi}\right)<\infty$,
we have $\log\nu\left(e^{-\psi}\right)=-\inf_{\mu\in\mathcal{C}(\nu)}\left\{ \mu(\psi)+\mathcal{KL}\left(\left.\mu\right\Vert \nu\right)\right\} $,
where $\mathcal{C}(\nu)=\left\{ \mu\in\mathcal{P}:\:\mu\ll\nu\right\} $.
Moreover the infimum is attained for $\mu^{\ast}$ such that $\frac{d\mu^{*}}{dv}=\frac{e^{-\psi}}{\nu\left(e^{-\psi}\right)}.$
\end{lem}
The proof is standard and omitted; see for instance \citep[Proposition 1.4.2]{dupuis2011weak}
or \citep{dai1996connections}. We proceed by looking at the finite
and infinite horizon case separately.

\subsubsection*{The finite horizon case}

For the problem in (\ref{eq:finite_hor_cost}) define the value functions
or optimal cost to go at every time time $0\leq p<n$: 
\begin{equation}
V_{p}(x):=\inf_{\left(f_{l}\in\mathcal{H};\:p<l<n\right)}\left\{ U(x)+\mathcal{KL}\left(\left.M^{f_{p}}\right\Vert M\right)(x)+\mathbb{E}_{x,p}^{f}\left[\sum_{l=p+1}^{n-1}\left(U(X_{l})+\mathcal{KL}\left(\left.M^{f_{l}}\right\Vert M\right)(X_{l})\right)+\varOmega(X_{n})\right]-\sum_{l=p}^{n}\Lambda_{l}\right\} ,\label{eq:value_functions}
\end{equation}
with $V_{n}=\varOmega$. Let $\left(f_{p}^{*};\:0\leq p<n\right)$
denote the corresponding minimizing control functions in (\ref{eq:value_functions}).
Compared to (\ref{eq:finite_hor_cost}), $\sum_{l=p}^{n}\Lambda_{l}$
is a scaling constant that does not affect the solution. The significance
of this offset will become clear when we choose $\lambda_{p}=e^{\Lambda_{p}}$.
We proceed with a dynamic programming result:
\begin{lem}
\label{lem:value-function}The value function for problem (\ref{eq:value_functions})
at each time $p=0,\ldots,n-1$ is given by 
\begin{equation}
V_{p}(x)=U(x)-\Lambda_{p}+\inf_{f_{p}\in\mathcal{H}}\left\{ \mathcal{KL}\left(\left.M^{f_{p}}\right\Vert M\right)(x)+M^{f_{p}}\left(V_{p+1}\right)(x)\right\} \label{eq:value_f}
\end{equation}
with $V_{n}=\varOmega$. Let $Q=e^{-U}M$, $\lambda_{p}=e^{-\Lambda_{p}}$.
In addition, for each $p<n$ we have $V_{p+1}=-\log h_{p}$,  where
$h_{p}$ is given by the following backward recursion: 
\begin{equation}
Q(h_{p+1})=\lambda_{p}h_{p}.\label{eq:eig-funct}
\end{equation}
Furthermore, the optimal control is given by $f_{p}^{*}=h_{p}$ and
the optimally controlled Markov transition kernel by  
\[
M^{f_{p}^{\star}}(x,dy):=\frac{M(x,dy)h_{p}(y)}{M(h_{p})(x)}.
\]
 \end{lem}
\begin{proof}
Equation (\ref{eq:value_f}) states the standard dynamic programming
recursion for finite horizon problems, e.g. \citep[Theorem 3.2.1 ]{hernandez1996discrete}.
Using (\ref{eq:value_f}) and Lemma \ref{lem:duality-meas-inf} we
obtain $V_{p}=U-\Lambda_{p}-\log M\left(\exp\left(-V_{p+1}\right)\right)$
that can be rewritten as $e^{-V_{p}-\Lambda_{p}}=e^{-U}M(e^{-V_{p+1}})$.
By setting $\lambda_{p}=e^{-\Lambda_{p}}$, $h_{p}=e^{-V_{p+1}}$
we get (\ref{eq:eig-funct}) and the second part of Lemma \ref{lem:duality-meas-inf}
can be invoked to show that the expression for $M^{f_{p}^{\star}}$
follows by direct substitution with the optimal control being $f_{p}^{*}=\exp(-V_{p+1})=h_{p}$. 
\end{proof}
Note that the optimal controls appear as a multiplicative ``twisting''
function of the uncontrolled Markov transition kernel $M$. In addition,
it is clear from this result is that the non-negative operator $Q$
is equivalent to a multiplicative dynamic programming operator. Although
the scaling provided by $\Lambda_{p}$ can be arbitrary, the particular
choice is convenient for using simulated samples from $\eta_{p}$
to approximate $V_{p},h_{p}$; details will be presented in Section
\ref{sec:Overview}.
\begin{rem}
\label{rem:log-transform}Lemma \ref{lem:value-function} provides
an interpretation of $h_{p}$ as a log transform of a value function
similar to \citep{albertini1988logarithmic}. The similarity between
$h_{p}$ and $M^{f_{p}^{*}}$ with $h_{p,n}$ and $P_{(p,n)}$ is
clear. Despite this, we have purposely used a different notation for
$h_{p}$ and $h_{p,n}$, due to initializing with $h_{n}=\exp\left(-\Omega\right)$. 
\end{rem}

\subsubsection*{The infinite horizon case and interpretation of $h_{\star}$ and
$P_{\star}$}

We will look now at the infinite horizon average cost problem of (\ref{eq:inf_hor_cost}).
The objective is: (a) to compute a solution $\left(V_{\star},\varsigma_{\star}\right)$
of the Bellman average-cost optimality equation:
\begin{equation}
V_{\star}(x)+\varsigma_{\star}=\inf_{h\in\mathcal{H}}\left[U(x)+\mathcal{KL}\left(\left.M^{h}\right\Vert M\right)(x)+M^{h}\left(V_{\star}\right)(x)\right],\label{eq:bellman}
\end{equation}
\textcolor{black}{where $V_{\star}$ is the optimal value function
and $\varsigma_{\star}$ is the infinite horizon optimal average cost,
and (b) to compute $h_{\star}$, where $h_{\star}$ is the minimizer
for the infimum in (\ref{eq:bellman}). }Note that for this type of
problem the optimal policy can be shown to be stationary, i.e. the
optimal control functions is the same for every time $p$; \textcolor{black}{see
}\citep[Chapter 5]{hernandez1996discrete}\textcolor{black}{{} for background
and details. We relate now }(\ref{eq:bellman}) with the eigen-problem.
\begin{prop}
The average-cost Bellman equation (\ref{eq:bellman}) is satisfied
with $V_{\star}(x)=-\log h_{\star}(x),$ $\varsigma_{\star}=-\log\lambda_{\star}$,
where $\lambda_{\star},h_{\star}$ are the principal eigen-pair corresponding
to $Q:=e^{-U}M$. Furthermore the infimum in (\ref{eq:bellman}) is
achieved by taking $h=h_{\star}$ and the corresponding optimally
controlled dynamics evolve according to $P_{\star}.$\end{prop}
\begin{proof}
Applying Lemma \ref{lem:duality-meas-inf} and taking log's shows
that $\left(V_{\star},\varsigma_{\star}\right)$ is a solution of
the Bellman equation (\ref{eq:bellman}) if and only if 
\begin{equation}
V_{\star}(x)+\varsigma_{\star}=U(x)-\log M\left(e^{-V_{\star}}\right)(x),\label{eq:bellman_eigen}
\end{equation}
which is a re-writing of $Q(h_{\star})=\lambda_{\star}h_{\star},$
if $\varsigma_{\star}=-\log\lambda_{\star}$ and $V_{\star}=-\log h_{\star}$.
For establishing that $P_{\star}$ gives indeed the optimally controlled
dynamics we use again the second part of Lemma \ref{lem:duality-meas-inf}
and observe that the minimizer in (\ref{eq:bellman}) is attained
for $h=h_{\star}$. \end{proof}
\begin{rem}
In view of Proposition \ref{prop:_h_n_h_star_bound}, one may view
the backward recursion $h_{p,n}\left(x\right)=\frac{Q\left(h_{p+1,n}\right)}{\lambda_{p}}$
as a value iteration procedure, which aims to approximate $V_{\star}$
as $-\log h_{p,n}$ with $n$ being a finite horizon truncation used
for numerical purposes. 
\end{rem}

\section{Particle approximations for principal eigen-functions and related
quantities\label{sec:Overview}}

We propose a method to approximate the various eigen-quantities Algorithm
\ref{alg:pf}. The algorithm consists of a forward-backward recursion
approximating the deterministic quantities presented in Section \ref{sec:Deterministic_approximations_intro}.
A more precise probabilistic specification of the algorithm is given
in Section \ref{sec:Particle-approximations} and in Sections \ref{sec:Error-bounds},
\ref{sub:conditional_simulation_P^N} we present our convergence results.
The proofs not shown in Section \ref{sec:Overview} can be found in
the Appendix.

\subsection{The particle algorithm}

Algorithm \ref{alg:pf} has parameters: $N$, the particle population
size; $n$, the (half) time-horizon; and $\mu$, an initial probability
distribution. As we shall see, the values of $N$ and $n$ influence
the accuracy of the approximation and the choice of $\mu$ turns out
to be somewhat unimportant. 

\begin{algorithm}[H]
\uline{Forward recursion}

\qquad{}\quad{}Initialization: 

\qquad{}\qquad{}Sample $(\zeta_{0}^{i})_{i=1}^{N}\iid\mu$, 

\qquad{}\quad{}For $p=1,...,2n$, :

\qquad{}\qquad{}Sample $\left.(\zeta_{p}^{i})_{i=1}^{N}\right|(\zeta_{p-1}^{i})_{i=1}^{N}\quad\iid\quad\dfrac{\sum_{j=1}^{N}G(\zeta_{p-1}^{j})M(\zeta_{p-1}^{j},\cdot)}{\sum_{j=1}^{N}G(\zeta_{p-1}^{j})}.$ 

\uline{Backward recursion}

\qquad{}\quad{}Initialization: 

\qquad{}\qquad{}Set $h_{2n,2n}(x)=1,\quad x\in\mathsf{X}$

\qquad{}\quad{}For $p=2n-1,...,n$, :

\qquad{}\qquad{}\quad{}Set $h_{p,2n}^{N}(x)={\displaystyle \sum_{j=1}^{N}\frac{q(x,\zeta_{p+1}^{j})}{\sum_{i=1}^{N}q(\zeta_{p}^{i},\zeta_{p+1}^{j})}h_{p+1,2n}^{N}(\zeta_{p+1}^{j}).\quad x\in\mathsf{X}}$

\caption{Particle method for computing principal eigen-quantities\label{alg:pf}}
\end{algorithm}

We will take the random function $h_{n,2n}^{N}$ as an approximation
of $h_{\star}$ and the random kernel
\begin{equation}
P_{(n,2n)}^{N}(x,dx^{\prime}):=\frac{1}{h_{n-1,2n}^{N}(x)}\sum_{j=1}^{N}\frac{q(x,\zeta_{n}^{j})}{\sum_{i=1}^{N}q(\zeta_{n-1}^{i},\zeta_{n}^{j})}h_{n,2n}^{N}(\zeta_{n}^{j})\delta_{\zeta_{n}^{j}}\left(dx^{\prime}\right).\label{eq:intro_P_N_defn}
\end{equation}
as an approximation of $P_{\star}$. Note that, if so desired, each
$h_{p,2n}^{N}$ appearing in the algorithm can be evaluated at any
point $x\in\mathsf{X}$, but each step of the backward recursion actually
requires evaluation of $h_{p+1,2n}^{N}$ \emph{only} on the random
grid $\left\{ \zeta_{p+1}^{i};i=1,...,N\right\} $. Further note the
subscripting in $P_{(n,2n)}^{N}$ is not the semigroup index notation
of (\ref{eq:semi_g_notation}), and pertains only to the particular
kernel in (\ref{eq:intro_P_N_defn}). Occurrences will be kept to
an absolute minimum.

\subsection{Properties of the particle approximations\label{sec:Particle-approximations}}

We now provide a probabilistic specification of the quantities in
Algorithm \ref{alg:pf} and present some of their key properties,
which will be used to obtain $L_{r}$ bounds on the errors $h_{n,2n}^{N}(x)-h_{\star}(x)$
and $P_{(n,2n)}^{N}(x,A)-P_{\star}(x,A)$ (in terms of $N$ and $n$)
in Section \ref{sec:Error-bounds} and an unbiasedness result when
$\left(P_{(p,2n)}^{N};\:p>n\right)$ is used as an importance sampling
proposal in Section \ref{sub:conditional_simulation_P^N}.

\subsubsection*{Preliminaries}

For $N\geq1$, the particle system in the forward part of the algorithm
can be constructed as a canonical Markov chain with sample space $\Omega_{N}:=\left(\mathsf{X}^{N}\right)^{\mathbb{N}}$,
endowed with the corresponding product $\sigma$-algebra, derived
from the underlying $\sigma$-algebra $\mathcal{B}$. The state of
the chain at time $n\geq0$ is the $n$-th coordinate projection of
$\omega\in\Omega_{N}$ denoted by $\zeta_{n}(\omega)=\left(\zeta_{n}^{1}(\omega),\ldots,\zeta_{n}^{N}(\omega)\right)$,
taking values in $\mathsf{X}^{N}$. The natural filtration is denoted
by $\mathcal{F}_{n}=\sigma(\zeta_{0},\cdots,\zeta_{n})$, where the
dependence of each $\zeta_{n}$ and $\mathcal{F}_{n}$ on $N$ is
suppressed from the notation. 

We introduce collections of random probability measures $(\eta_{n}^{N})_{n\geq0}$:
\[
\eta_{n}^{N}:=\frac{1}{N}\sum_{i=1}^{N}\delta_{\zeta_{n}^{i}},\quad n\geq0.
\]
The law of the $N$-particle system is denoted by $\mathbb{P}_{N}$,
and in integral form, the initial distribution and transition probabilities
of the process $(\zeta_{n})_{n\geq0}$ are given by 
\begin{eqnarray}
\mathbb{P}_{N}(\mbox{\ensuremath{\zeta}}_{0}\in dx_{0}) & = & \prod_{i=1}^{N}\mu(dx_{0}^{i})\nonumber \\
\mathbb{P}_{N}(\left.\mbox{\ensuremath{\zeta}}_{n}\in dx_{n}\right|\mbox{\ensuremath{\zeta}}_{n-1}) & = & \prod_{i=1}^{N}\frac{\eta_{n-1}^{N}Q(dx_{n}^{i})}{\eta_{n-1}^{N}Q(1)}=\prod_{i=1}^{N}\Phi(\eta_{n-1}^{N})(dx_{n}^{i}),\quad n\geq1,\label{eq:particle transitions}
\end{eqnarray}
where $dx_{n}$ is an infinitesimal neighborhood of $x_{n}=\left(x_{n}^{1},\ldots x_{n}^{N}\right)\in\mathsf{X}^{N}.$
The expectation corresponding to $\mathbb{P}_{N}$ is denoted $\mathbb{E}_{N}$.

The idea for the eigen-function approximation in the algorithm is
to consider the identity
\begin{eqnarray}
h_{p-1,n}(x) & = & \frac{1}{\lambda_{p-1}}\int Q(x,dy)h_{p,n}(y)\nonumber \\
 & = & \frac{1}{\lambda_{p-1}}\int\frac{\mathrm{d}Q(x,\cdot)}{\mathrm{d}\eta_{p}}(y)h_{p,n}(y)\eta_{p}(dy)\nonumber \\
 & = & \frac{1}{\lambda_{p-1}}\int\frac{dQ(x,\cdot)}{d\Phi(\eta_{p-1})}(y)h_{p,n}(y)\eta_{p}(dy)\nonumber \\
 & = & \int\frac{\mathrm{d}Q(x,\cdot)}{\mathrm{d}(\eta_{p-1}Q)}(y)h_{p,n}(y)\eta_{p}(dy),\label{eq:h_recurse_explain}
\end{eqnarray}
where the first equality is due to the definition of the functions
$(h_{p,n})$, the second equality is just a change of measure in the
integral, and the third and fourth equalities are due to $\eta_{p}(\cdot)=\Phi(\eta_{p-1})(\cdot)=\dfrac{\eta_{p-1}Q(\cdot)}{\eta_{p-1}(G)}$
and the definition $\lambda_{p-1}=\eta_{p-1}(G)$. For any $x$ and
$p$, the derivative $\frac{\mathrm{d}Q(x,\cdot)}{\mathrm{d}\eta_{p}}$
is well defined under \textbf{(H)} because $Q(x,\cdot)$ is then equivalent
to $\nu$ for any $x$, and then also equivalent to $\eta_{p}$. 

Loosely speaking, the backward recursion of the algorithm arises from
taking the random measures $(\eta_{p}^{N})$ in place of $(\eta_{p})$
in (\ref{eq:h_recurse_explain}). To be more precise, let $\left(Q_{n}^{N}\right)$
be the collection of random integral kernels defined by 
\begin{equation}
Q_{n}^{N}(x,dx^{\prime}):=\frac{\mathrm{d}Q(x,\cdot)}{\mathrm{d}\Phi\left(\eta_{n-1}^{N}\right)}(x^{\prime})\eta_{n}^{N}(dx^{\prime}),\quad n\geq1.\label{eq:Q^N_defn}
\end{equation}
It is convenient to recall the semigroup notation in this context:
\[
Q_{n,n}^{N}:=Id,\quad\quad Q_{p,n}^{N}:=Q_{p+1}^{N}\cdots Q_{n}^{N},\quad p<n.
\]
Now define
\begin{equation}
\lambda_{n}^{N}:=\eta_{n}^{N}(G),\quad n\geq0,\label{eq:lambda^N_defn}
\end{equation}
and mimicking (\ref{eq:h_p_n_defn}) let $\left(h_{p,n}^{N}\right)$
be the collection of random functions defined by
\begin{equation}
h_{n,n}^{N}(x):=1,\quad\quad h_{p,n}^{N}(x):=\frac{Q_{p,n}^{N}(1)(x)}{\eta_{p}^{N}Q_{p,n}^{N}(1)},\quad0\leq p<n.\label{eq:h_pn_defn}
\end{equation}
Also, generalizing from the definition of $P_{(p,2n)}^{N}$ in (\ref{eq:intro_P_N_defn}),
define 
\[
P_{(p,n)}^{N}(x,dx^{\prime}):=\frac{Q_{p}^{N}(x,dx^{\prime})h_{p,n}^{N}(x^{\prime})}{\lambda_{p-1}^{N}h_{p-1,n}^{N}(x)}.
\]
The following lemma establishes relationships between these objects
which may be considered stochastic counterparts of the relations of
Lemma \ref{lem:approx_eigen}. 
\begin{lem}
\label{lem:_random_eigen_equations}The random measures $\left(\eta_{n}^{N}\right)$,
functions $\left(h_{p,n}^{N}\right),$ and kernels $\left(Q_{n}^{N}\right)$
satisfy 
\begin{equation}
\eta_{p}^{N}Q_{p+1}^{N}=\lambda_{p}^{N}\eta_{p+1}^{N},\quad\quad Q_{p+1}^{N}(h_{p+1,n}^{N})=\lambda_{p}^{N}h_{p,n}^{N},\quad\quad\eta_{p}^{N}(h_{p,n}^{N})=1,\quad0\leq p<n.\label{eq:particle_eig_eq}
\end{equation}
\begin{equation}
\eta_{p}^{N}Q_{p,n}^{N}(1)=\prod_{\ell=p}^{n-1}\lambda_{\ell}^{N},\quad\quad0\leq p<n.\label{eq:particle_op_prod_formula}
\end{equation}
\end{lem}
\begin{proof}
For the measure equation in (\ref{eq:particle_eig_eq}) and the definitions
(\ref{eq:Q^N_defn})-(\ref{eq:lambda^N_defn}),
\begin{eqnarray}
\eta_{p}^{N}Q_{p+1}^{N}(dx^{\prime}) & = & \eta_{p+1}^{N}(dx^{\prime})\int\eta_{p}^{N}(dx)\frac{dQ(x\cdot)}{d\Phi\left(\eta_{p}^{N}\right)}(x^{\prime})\nonumber \\
 & = & \lambda_{p}^{N}\eta_{p+1}^{N}(dx^{\prime})\int\eta_{p}^{N}(dx)\frac{q(x,x^{\prime})}{\int\eta_{p}^{N}(dy)q(y,x^{\prime})}\nonumber \\
 & = & \lambda_{p}^{N}\eta_{p+1}^{N}(dx^{\prime}).\label{eq:meas_eqn_N_proof}
\end{eqnarray}
By iterated application of (\ref{eq:meas_eqn_N_proof}) we have
\[
\eta_{p}^{N}Q_{p,n}^{N}(1)=\lambda_{p}^{N}\eta_{p+1}^{N}Q_{p+1,n}^{N}(1)=\left(\prod_{\ell=p}^{n-1}\lambda_{\ell}^{N}\right)\eta_{n}^{N}Q_{n,n}^{N}(1)=\prod_{\ell=p}^{n-1}\lambda_{\ell}^{N},
\]
where the final equality is due to the convention $Q_{n,n}^{N}:=Id$.
This establishes (\ref{eq:particle_op_prod_formula}). For the function
equation in (\ref{eq:particle_eig_eq}), we have 
\begin{eqnarray*}
Q_{p+1}^{N}\left(h_{p+1,n}^{N}\right) & = & \frac{Q_{p,n}^{N}(1)}{\eta_{p+1}^{N}Q_{p+1,n}^{N}(1)}\\
 & = & \lambda_{p}^{N}h_{p,n}^{N},
\end{eqnarray*}
where the final inequality holds due to (\ref{eq:particle_op_prod_formula}).
The right-most equality in (\ref{eq:particle_eig_eq}) holds directly
from the definition of $h_{p,n}^{N}$.\end{proof}
\begin{rem}
The recursion in the ``backward'' part of the algorithm is a re-arrangement
of the middle equation in (\ref{eq:particle_eig_eq}).
\end{rem}

\subsubsection*{Lack of bias}

Next we will see how iterates of the random operators $\left(Q_{p}^{N}\right)$
can be used to obtain unbiased estimates of iterates of the underlying
operator $Q$. 
\begin{prop}
\label{prop:unbias}Fix $N\geq1$ arbitrarily. Let $\mu^{\prime}\in\mathcal{P}$
and let $\mu^{N}$ be an $\mathcal{F}_{0}$-measurable random measure
satisfying $\mathbb{E}_{N}\left[\mu^{N}\left(A\right)\right]=\mu^{\prime}\left(A\right)$
for all $A\in\mathcal{B}$. Then for any $\varphi\in\mathcal{L}$
and $n\geq0$
\[
\mathbb{E}_{N}\left[\mu^{N}Q_{0,n}^{N}\left(\varphi\right)\right]=\mu^{\prime}Q^{\left(n\right)}\left(\varphi\right).
\]

\end{prop}

\begin{rem}
We highlight two interesting instances of initial measures in Proposition
\ref{prop:unbias}. The first is the degenerate case in which $\mu^{N}=\mu^{\prime}$,
for some $\mu^{\prime}\in\mathcal{P}$ other than $\mu$: in this
case we note that there is no bias (in the sense that the Proposition
\ref{prop:unbias} holds) when the functional $\mu^{N}Q_{0,n}^{N}\left(\varphi\right)$
involves a deterministic initial measure, \emph{other} than that used
to initialize the particle system. The second case is that in which
$\mu^{\prime}=\mu$ and $\mu^{N}=\eta_{0}^{N}$. In this case we have
\begin{eqnarray*}
\eta_{0}^{N}Q_{0,n}^{N}\left(\varphi\right) & = & \eta_{0}^{N}(G)\int\int\eta_{0}^{N}\left(dx_{0}\right)\frac{dQ\left(x_{0},\cdot\right)}{d\eta_{0}^{N}Q}\left(x_{1}\right)Q_{1,n}^{N}\left(\varphi\right)\left(x_{1}\right)\eta_{1}^{N}\left(dx_{1}\right)\\
 & = & \eta_{0}^{N}(G)\int\int\eta_{0}^{N}\left(dx_{0}\right)\frac{q\left(x_{0},x_{1}\right)}{{\displaystyle \frac{1}{N}\sum_{i=1}^{N}q\left(\zeta_{0}^{i},x_{1}\right)}}Q_{1,n}^{N}\left(\varphi\right)\left(x_{1}\right)\eta_{1}^{N}\left(dx_{1}\right)\\
 & = & \eta_{0}^{N}(G)\int Q_{1,n}^{N}\left(\varphi\right)\left(x_{1}\right)\eta_{1}^{N}\left(dx_{1}\right)\\
 & = & \prod_{p=0}^{n-1}\eta_{p}^{N}(G)\eta_{n}^{N}\left(\varphi\right),
\end{eqnarray*}
where the final equality can be verified by a simple induction. So
in this case, we recover from Proposition \ref{prop:unbias} the equality
$\mathbb{E}_{N}\left[\prod_{p=0}^{n-1}\eta_{p}^{N}(G)\eta_{n}^{N}\left(\varphi\right)\right]=\mu Q^{\left(n\right)}\left(\varphi\right)$,
which is well known for the ``forward'' part of the particle algorithm
\citep[Chapter 9]{smc:theory:Dm04}.
\end{rem}

\begin{rem}
A number of generalizations of Proposition \ref{prop:unbias} may
be obtained quite directly. Consider some integral kernel $\widetilde{Q}$
different from $Q$ and which, for simplicity, satisfies $\widetilde{Q}(x,\cdot)\ll Q(x,\cdot)$
for all $x$. Then defining 
\[
\widetilde{Q}_{n}^{N}(x,dx^{\prime}):=\frac{d\widetilde{Q}(x,\cdot)}{d\Phi\left(\eta_{n-1}^{N}\right)}(x^{\prime})\eta_{n}^{N}(dx^{\prime}),\quad n\geq1,
\]
one can establish by similar arguments to those in the proof of Proposition
\ref{prop:unbias} that 
\[
\mathbb{E}_{N}\left[\mu^{N}\widetilde{Q}_{0,n}^{N}\left(\varphi\right)\right]=\mu^{\prime}\widetilde{Q}^{\left(n\right)}\left(\varphi\right),\quad n\geq0,
\]
i.e. that the particle system defining $\left(\eta_{n}^{N}\right)$
and whose law involves $Q$ can be used to obtain unbiased estimates
of product formulae involving $\widetilde{Q}$. In turn, this might
be of interest both in the present context and in other applications
of particle systems, when the aim is to approximate ratios of the
form
\[
\frac{\mu^{\prime}\widetilde{Q}^{\left(n\right)}\left(1\right)}{\mu Q^{\left(n\right)}\left(1\right)},
\]
although further details are beyond the scope of the present work.
The time-homogeneity can also easily be relaxed, of course under appropriate
domination assumptions. 
\end{rem}

\subsubsection*{Path-wise stability of the random operators}

Next we establish a sample path result for the random (and generally
path-wise inhomogeneous) semigroups $Q_{0,n}^{N}$ and $\frac{\mu^{\prime}Q_{0,n}^{N}}{\mu^{\prime}Q_{0,n}^{N}\left(1\right)}$,
where we show exponential stability uniformly with respect to $N$.
\begin{thm}
\textup{\label{thm_pathwise}The following path-wise, uniform bounds
hold for the random operators $\left(Q_{n}^{N}\right)$ and the corresponding
non-linear semigroup. For any $n\geq1$ and $\varphi\in\mathcal{L}$,
\begin{equation}
\sup_{\mu^{\prime}\in\mathcal{P}}\sup_{N\geq1}\sup_{\omega\in\Omega_{N}}\left|\left(\prod_{p=0}^{n-1}\lambda_{p}^{N}\right)^{-1}\mu^{\prime}Q_{0,n}^{N}\left(\varphi\right)-\mu^{\prime}\left(h_{0.n}^{N}\right)\eta_{n}^{N}\left(\varphi\right)\right|\left(\omega\right)\leq2\left\Vert \varphi\right\Vert \tilde{\rho}^{n}\left(\frac{\epsilon^{+}}{\epsilon^{-}}\right),\label{eq:particle_path_met}
\end{equation}
}

\begin{equation}
\sup_{\mu^{\prime}\in\mathcal{P}}\sup_{N\geq1}\sup_{\omega\in\Omega_{N}}\left|\frac{\mu^{\prime}Q_{0,n}^{N}\left(\varphi\right)}{\mu^{\prime}Q_{0,n}^{N}\left(1\right)}-\eta_{n}^{N}\left(\varphi\right)\right|\left(\omega\right)\leq2\left\Vert \varphi\right\Vert \tilde{\rho}^{n}\left(\frac{\epsilon^{+}}{\epsilon^{-}}\right)^{2},\label{eq:particle_path_filter_stab}
\end{equation}
where $\tilde{\rho}=1-\left(\epsilon^{-}/\epsilon^{+}\right)^{2}$.
\end{thm}
This type of uniform path-wise convergence plays an important role
in proving $L_{r}$ bounds that follows below.

\subsection{$L_{r}$ error estimates\label{sec:Error-bounds}}

The forward part of the algorithm has been suggested by \citet{smc:the:dMM03,smc:the:DMD04}
in order to approximate $\eta_{\star}$ and $\lambda_{\star}$ using
the empirical probability measures $\left(\eta_{n}^{N}\right)$. Defining
\begin{equation}
\Lambda_{n}^{N}:=\frac{1}{n}\sum_{p=0}^{n-1}\log\lambda_{p}^{N},\label{eq:Lambda_N}
\end{equation}
they proved estimates of the form 
\begin{eqnarray*}
\mathbb{E}_{N}\left[\left|\eta_{n}^{N}\left(\varphi\right)-\eta_{\star}\left(\varphi\right)\right|^{r}\right]^{1/r} & \leq & \left\Vert \varphi\right\Vert C\left(\frac{B_{r}}{\sqrt{N}}+\tilde{\rho}^{n}\right)\\
\mathbb{E}_{N}\left[\left|\Lambda_{n}^{N}-\Lambda_{\star}\right|^{r}\right]^{1/r} & \leq & C\left(\frac{B_{r}}{\sqrt{N}}+\frac{1}{n}\right)
\end{eqnarray*}
for some constants $C<\infty$ and $\tilde{\rho}<1$; see the final
expressions in the proofs of Theorem 2 and Corollary 2 of \citep{smc:the:DMD04}
for precise details. 
\begin{rem}
\citet{smc:the:DMD04} addressed the case that the function $G$ may
vanish, and a weaker ``multi-step'' version of \textbf{(H)}. Similar
techniques as used therein can be applied in the present context,
but involve notational complications.
\end{rem}
The backward recursion of Algorithm \ref{alg:pf} is relevant to the
main aim of this paper, i.e. to quantify the error in approximations
of $h_{\star,}$ and $P_{\star}$. This is presented in the following
result.
\begin{thm}
\label{thm:L_r_bounds}For any $r\geq1$ there is a universal constant
$B_{r}$ such that for any $n\geq1$, $0\leq p<n$ and $N\geq1$,
\begin{equation}
\sup_{x\in\mathsf{X}}\mathbb{E}_{N}\left[\left|h_{p,n}^{N}(x)-h_{\star}(x)\right|^{r}\right]^{1/r}\leq2\frac{B_{r}}{\sqrt{N}}\tilde{C}+C_{h}\rho^{p\wedge(n-p)},\label{eq:h_n_Lp}
\end{equation}
\begin{equation}
\sup_{x\in\mathsf{X}}\sup_{A\in\mathcal{B}}\mathbb{E}_{N}\left[\left|P_{(p,n)}^{N}\left(x,A\right)-P_{\star}\left(x,A\right)\right|^{r}\right]^{1/r}\leq4\frac{B_{r}}{\sqrt{N}}\tilde{C}\frac{\epsilon^{+}}{\epsilon^{-}}+C_{P}\rho^{p\wedge(n-p)}.\label{eq:P_n_Lp}
\end{equation}
where $\tilde{C}=\left[3\left(\frac{\epsilon^{+}}{\epsilon^{-}}\right)^{7}+\left(\frac{\epsilon^{+}}{\epsilon^{-}}\right)^{5}\frac{1}{1-\tilde{\rho}}\right]$
and $\rho,C_{h},C_{P}$ are as in Proposition \ref{prop:_h_n_h_star_bound}.
\end{thm}
The errors are thus controlled in $N$ , $p$ and $n$, and in these
bounds there is no dependence on the measure $\mu$ used in the initialization
of the algorithm. The proof uses the following decompositions
\[
h_{p,n}^{N}(x)-h_{\star}(x)=\frac{Q_{p+1}^{N}(h_{p+1,n}^{N})(x)}{\lambda_{p}^{N}}-\frac{Q(h_{p+1,n})(x)}{\lambda_{p}}+h_{p,n}(x)-h_{\star}(x),
\]
 and 
\[
P_{(p,n)}^{N}\left(x,A\right)-P_{\star}\left(x,A\right)=\Xi_{1}(x,A)+\Xi_{2}(x,A)+\Xi_{3}(x,A),
\]
where

\begin{eqnarray*}
\Xi_{1}(x,A) & := & \frac{1}{h_{p-1,n}^{N}(x)}\left[\frac{Q_{p}^{N}(h_{p,n}^{N}\mathbb{I}_{A})(x)}{\lambda_{p-1}^{N}}-\frac{Q(h_{p,n}\mathbb{I}_{A})(x)}{\lambda_{p-1}}\right]\\
\Xi_{2}(x,A) & := & \frac{Q(h_{p,n}\mathbb{I}_{A})(x)}{\lambda_{p-1}}\left[\frac{1}{h_{p-1,n}^{N}(x)}-\frac{1}{h_{p-1,n}(x)}\right]\\
\Xi_{3}(x,A) & := & P_{(p,n)}(x,A)-P_{\star}(x,A).
\end{eqnarray*}
Hence, it is crucial to provide additional $L_{r}$ bounds for $\frac{Q_{p}^{N}\left(\varphi h_{p,n}^{N}\right)(x)}{\lambda_{p-1}^{N}}-\frac{Q\left(\varphi h_{p,n}\right)(x)}{\lambda_{p-1}}$
for any $\varphi\in\mathcal{L}$. This is achieved in Proposition
\ref{prop:intermediate_L_r} (in the Appendix), but is based on cumbersome
expressions so more details are not presented here. 
\begin{rem}
The type of recursion in the backward part of the algorithm is implicitly
present (albeit expressed somewhat differently) in other interacting
particle algorithms, see for example \citep{del2010backward} and
\citep{douc2010sequential} in the context of non-linear filtering/smoothing
or \citet{del2011robustness,del2012snell} in the context of optimal
stopping problems. The main novelty of the present work stems from
finding the connection between the backward recursion and $h_{\star}$,
$P_{\star}$ and incorporating it in the analysis. Note also that
the forward part of the algorithm runs from $0$ up to $2n$, but
the backward part runs from $2n$ to $n$. 
\end{rem}

\subsection{Lack of bias and a $\chi^{2}$-distance bound for importance sampling
using\textmd{\normalsize{} $P_{(p,n)}^{N}\left(x,A\right)$\label{sub:conditional_simulation_P^N}}}

Section \ref{sub:Rare-events-estimation} showed an application where
one is interested to sample from $P_{\star}$ in the context of importance
sampling. Similarly, the twisted kernel approximations $(P_{p,n}^{N})_{p\leq n}$
can be used to achieve unbiased estimates of expectations on the path
space of the Markov process evolving with kernel $M$. One may use
the twisted kernel approximations after the forward-backward pass
of Algorithm \ref{alg:pf} and define an additional \emph{conditional}
simulation forward pass by sampling $X_{p}\sim P_{(n+p,2n)}^{N}(X_{p-1},\cdot),$
$p=1,\ldots,m$. When this simulation is used in the context of importance
sampling, a lack of bias result similar to Proposition \ref{prop:unbias}
follows. 
\begin{prop}
\label{prop:twsited_sampling}Fix $N\geq1$, $n\geq1$, $m\leq n$
and $x\in\mathsf{X}$ arbitrarily. Conditional on $\mathcal{F}_{2n}$,
let $(X_{p};p=0,...,m)$ be a non-homogeneous Markov chain with transitions
\begin{equation}
X_{0}=x,\quad\quad X_{p}\sim P_{(n+p,2n)}^{N}(X_{p-1},\cdot),\quad p=1,\ldots,m,\label{eq:twisted_sampling}
\end{equation}
where $\left(P_{\left(n+p,2n\right)}^{N}\right)$ are obtained from
Algorithm \ref{alg:pf}. Let $\mathbb{E}_{N}$ denote the expectation
w.r.t. the joint law of the particle system and $(X_{p})$ sampled
according to (\ref{eq:twisted_sampling}). Then, for any integrable
function $F:\mathsf{X}^{m+1}\rightarrow\mathbb{R}$, 
\begin{equation}
\mathbb{E}_{N}\left[F(X_{0:m})\frac{h_{n,2n}^{N}(X_{0})}{h_{n+m,2n}^{N}(X_{m})}\prod_{p=0}^{m-1}\frac{\lambda_{n+p}^{N}}{G(X_{p})}\right]=\mathbb{E}_{x}\left[F(X_{0:m})\right],\label{eq:IS_estimator_particle}
\end{equation}
where on the r.h.s. $\mathbb{E}_{x}$ denotes expectation w.r.t. the
law of a Markov chain $(X_{p};p=0,...,m)$ with $X_{0}=x$ and $X_{p}\sim M(X_{p-1},\cdot)$. 
\end{prop}
We can also quantify the discrepancy between the law of $(X_{p};p=0,...,m)$
when obtained from (\ref{eq:twisted_sampling}), i.e. 
\[
\mbox{\ensuremath{\overline{\mathbb{P}}}}_{x}^{N,n}(X_{0}\in A_{0},\ldots,X_{m}\in A_{m}):=\mathbb{E}_{N}\left[\mathbb{I}[X_{0}\in A_{0},\ldots,X_{m}\in A_{m}]\right]
\]
and the ``ideal'' law:
\[
\mbox{\ensuremath{\overline{\mathbb{P}}}}_{x}(X_{0}\in A_{0},\ldots,X_{m}\in A_{m}):=\int_{A_{0}\times\cdots\times A_{m}}\delta_{x}(dx_{0})\prod_{p=1}^{m}P_{\star}(x_{p-1},dx_{p}).
\]
Indeed, since 
\[
\mathbb{P}_{x}(X_{0}\in A_{0},\ldots,X_{m}\in A_{m})=\int_{A_{0}\times\cdots\times A_{m}}\delta_{x}(dx_{0})\prod_{p=1}^{m}M(x_{p-1},dx_{p})=\mathbb{E}_{x}\left[\mathbb{I}[X_{0}\in A_{0},\ldots,X_{m}\in A_{m}]\right],
\]
it follows from (\ref{eq:IS_estimator_particle}) that up to null
sets, 
\[
\frac{\mathrm{d}\mathbb{P}_{x}}{\mathrm{d}\overline{\mathbb{P}}_{x}^{N,n}}(X_{0},\ldots,X_{m})=\mathbb{E}_{N}\left[\left.\frac{h_{n,2n}^{N}(X_{0})}{h_{n+m,2n}^{N}(X_{m})}\prod_{p=0}^{m-1}\frac{\lambda_{n+p}^{N}}{G(X_{p})}\right|X_{0},\ldots,X_{m}\right],
\]
and from the definition of $P_{\star}$ in (\ref{eq:twisted_intro}),
\[
\frac{\mathrm{d}\mathbb{P}_{x}}{\mathrm{d}\overline{\mathbb{P}}_{x}}(X_{0},\ldots,X_{m})=\frac{h_{\star}(X_{0})}{h_{\star}(X_{m})}\prod_{p=0}^{m-1}\frac{\lambda_{\star}}{G(X_{p})}.
\]
Therefore
\[
\frac{\mathrm{d}\overline{\mathbb{P}}_{x}}{\mathrm{d}\overline{\mathbb{P}}_{x}^{N,n}}(X_{0},\ldots,X_{m})=\mathbb{E}_{N}\left[\left.\frac{h_{n,2n}^{N}(X_{0})}{h_{\star}(X_{0})}\frac{h_{\star}(X_{m})}{h_{n+m,2n}^{N}(X_{m})}\prod_{p=0}^{m-1}\frac{\lambda_{n+p}^{N}}{\lambda_{\star}}\right|X_{0},\ldots,X_{m}\right].
\]
The following proposition estimates the $\chi^{2}$-distance (variance
of Radon-Nikodym derivative) between the two measures in question.
Restricting our attention to the case where the state space $\mathsf{X}$
is a finite set allows for a fairly straightforward proof, given in
the Appendix.
\begin{prop}
\label{prop:chi_square_bound} Assume that $\mathsf{X}$ is a finite
set and that the assumptions of Proposition \ref{prop:twsited_sampling}
hold. Then, there exists a finite constant $C$ depending on $\epsilon^{+},\epsilon^{-}$
such that the following bound holds for any $x\in\mathsf{X}$, $1\leq m\leq n$
and $N\geq1$, 
\begin{eqnarray}
 &  & \mathbb{E}_{N}\left[\left(\frac{\mathrm{d}\overline{\mathbb{P}}_{x}}{\mathrm{d}\overline{\mathbb{P}}_{x}^{N,n}}(X_{0},\ldots,X_{m})-1\right)^{2}\right]^{1/2}\nonumber \\
 &  & \leq C\left(1+\frac{C}{\sqrt{N}}\right)^{1/2}\left[\left(1+\frac{C}{N}\right)^{m}-1\right]^{1/2}+C\left[\frac{1}{\sqrt{N}}+\left(1-\frac{\epsilon^{-}}{\epsilon^{+}}\right)^{n-m}\right]\mathrm{card}(\mathsf{X}).\label{eq:chi_square_bound}
\end{eqnarray}

\end{prop}

\section{Numerical Examples\label{sec:Examples} }

We will present numerical examples for each application of Section
\ref{sec:Applications}.

\subsection{Importance Sampling for tail probabilities\label{sub:Importance-Sampling-for}}

We commence by this revisiting the problem in Section \ref{sub:Rare-events-estimation}
where the eigen-quantities arise from a rare-event estimation problem.
Recall we consider a Markov process starting from $x\in\mathsf{X}$
with transition kernel $M$ and are interested to estimate the tail
probability $\pi_{m}(\delta):=\mathbb{P}_{x}\left(\sum_{p=1}^{m}U(X_{p})>m\delta\right)$.
Following the results in Section \ref{sub:Rare-events-estimation}
we will choose $\overline{M}=P_{\star}^{\alpha}$ as the importance
kernel, where $\alpha$ is the unique solution of \emph{of $\Lambda_{\star}^{\prime}\left(\alpha\right)=\delta$}.
Then, the importance sampling estimate of $\pi_{m}(\delta)$ written
earlier in (\ref{eq:mu_n_hat}) becomes 
\begin{equation}
\widehat{\pi}_{m}\left(\delta,L\right)=\frac{1}{L}\sum_{i=1}^{L}\left(\mathbb{I}\left[\sum_{p=1}^{m}U(X_{p}^{i})>m\delta\right]\frac{\exp\left[m\Lambda_{\star}\left(\alpha\right)\right]}{\prod_{p=0}^{m-1}G_{\alpha}(X_{p}^{i})}\frac{h_{\star}^{\alpha}(X_{0}^{i})}{h_{\star}^{\alpha}(X_{m}^{i})}\right).\label{eq:pi_twisted}
\end{equation}
As per Proposition \ref{prop:twsited_sampling}, it is in fact possible
to achieve unbiased estimates using the twisted kernel approximations
to define a \emph{conditional} simulation distribution, and using
an estimator which mimics the form of (\ref{eq:pi_twisted}). 

It is an immediate corollary of Proposition \ref{prop:twsited_sampling}
that $\mathbb{E}_{N}[\widehat{\pi}_{m}\left(\delta,L\right)]=\pi_{m}(\delta)$,
and Proposition \ref{prop:chi_square_bound} indicates that r.h.s.
of (\ref{eq:chi_square_bound}) goes to zero as $m\to\infty$ if$N,n$
grow such that $m=o(n)$ and $m=o(N)$.

\subsubsection*{Numerics}

For some $c>0$ we take $\mathsf{X}=[-c,c]$ and consider an ergodic
Gaussian transition kernel with support restricted to $[-c,c]$, 
\[
M(x,dy)=\frac{\exp\left(-\frac{1}{2}\left(y-\frac{x}{2}\right)^{2}\right)}{\left(\mbox{erf}\left(\frac{c-x/2}{\sqrt{2}}\right)-\mbox{erf}\left(\frac{-c-x/2}{\sqrt{2}}\right)\right)\sqrt{2\pi}}\mathbb{I}_{[-c,c]}(y)dy,
\]
and consider $U$ defined by
\[
U\left(x\right)=\begin{cases}
-1 & \quad x\leq-1\\
x & \quad x\in(-1,1)\\
1 & \quad x\geq1.
\end{cases}
\]
For any $\alpha\in\mathbb{R}$, assumption \textbf{(H)} holds. The
left plot in Figure \ref{fig:rare_events1-1} shows estimated values
of $\pi_{m}(\delta)$, obtained from the algorithm with $N=250$,
$n=500$, $\alpha=6$ and using the estimator which appears inside
the expectation in \ref{eq:pi_twisted}, i.e. a single sample of the
conditional Markov chain. The displayed results are the averages over
$2000$ realizations of this entire procedure. The exponential decay
rate predicted by the large deviation principle (Theorem \ref{thm:bucklew},
part 2.) is apparent. The sample relative variances in the case of
$\delta=0.9$ are shown on the right of \ref{fig:rare_events1-1},
for different values of $\alpha$. The sample relative variance of
$\widehat{\pi}_{m}\left(0.9,1\right)$ for the trivial case $\overline{M}=M$
is also included for reference, and explodes rapidly with $m$. 

On a very fine grid of $\alpha$-values, approximations of $\Lambda_{\star}\left(\alpha\right)$
as per (\ref{eq:Lambda_N}) were obtained with the same settings of
$N$ and $n$. These were used to obtain the approximations of $\left[\alpha t-\Lambda_{\star}(\alpha)\right]$
against $\alpha$ plotted on the left of Figure \ref{fig:rare_events1}
and an approximation of $\Lambda_{\star}^{\prime}(\alpha)$ was obtained
by finite differences, the result is shown on the right of Figure
\ref{fig:rare_events1}. The latter plot suggests $\Lambda_{\star}^{\prime}\left(10\right)\approx0.9$,
and bearing in mind the optimality result of Theorem \ref{thm:bucklew},
part 4., we then notice in the relative variance plots of Figure \ref{fig:rare_events1-1}
that the slowest growth (amongst the $\alpha$ values considered)
occurs with $\alpha=8$. 

\begin{figure}
\centering\includegraphics[width=0.49\textwidth]{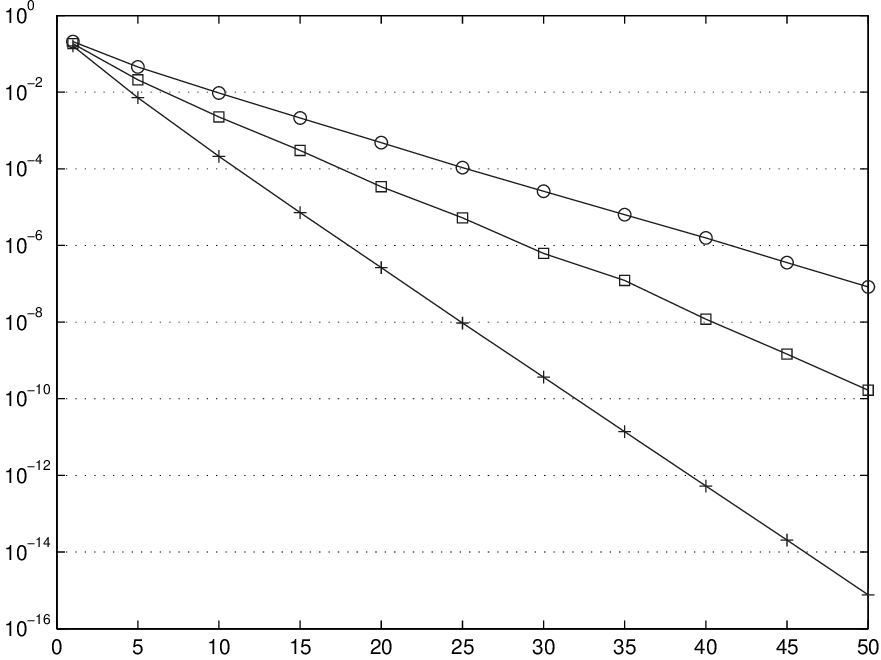}\hfill{}\includegraphics[width=0.49\textwidth]{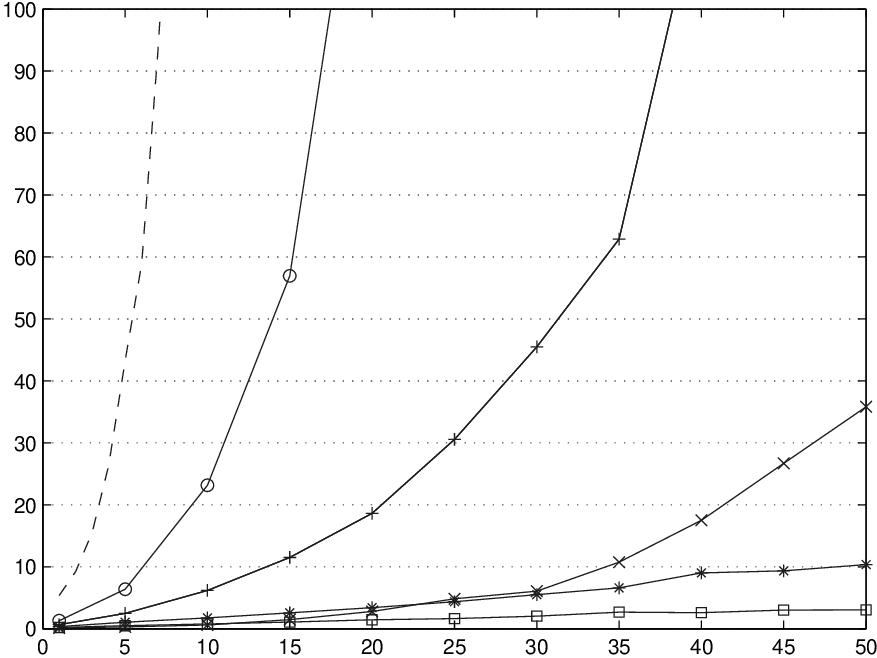}

\caption{Left: estimated value of $\pi_{m}(\delta)$ against $m$, for: $\circ$,$\delta=0.8$;
$\square$,$\delta=0.9$, and $+,$ $\delta=0.99$. Right: solid lines
show sample relative variance of the estimated value of $\pi_{m}(0.9)$
against $m$ using the conditional simulation method with: $\circ,$
$\alpha=1$; $+,\alpha=2$; $*$, $\alpha=4$; $\square$, $\alpha=8$;
and $\times,$ $\alpha=16$. Dashed line shows sample relative variance
of $\widehat{\pi}_{m}\left(0.9,1\right)$ in the case $\overline{M}=M$.}
\label{fig:rare_events1-1}
\end{figure}

\begin{figure}
\centering\includegraphics[width=0.49\textwidth]{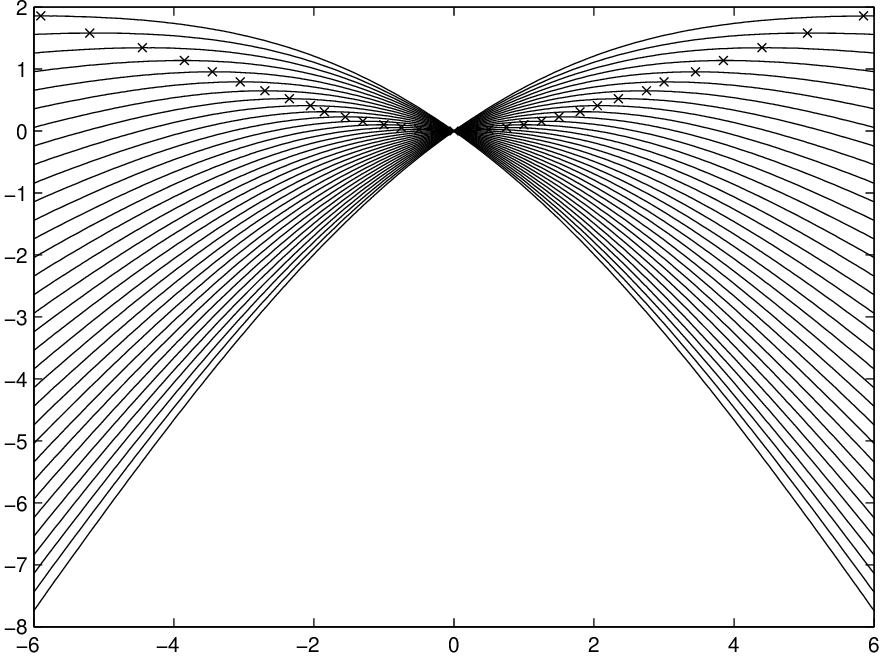}\hfill{}\includegraphics[width=0.49\textwidth]{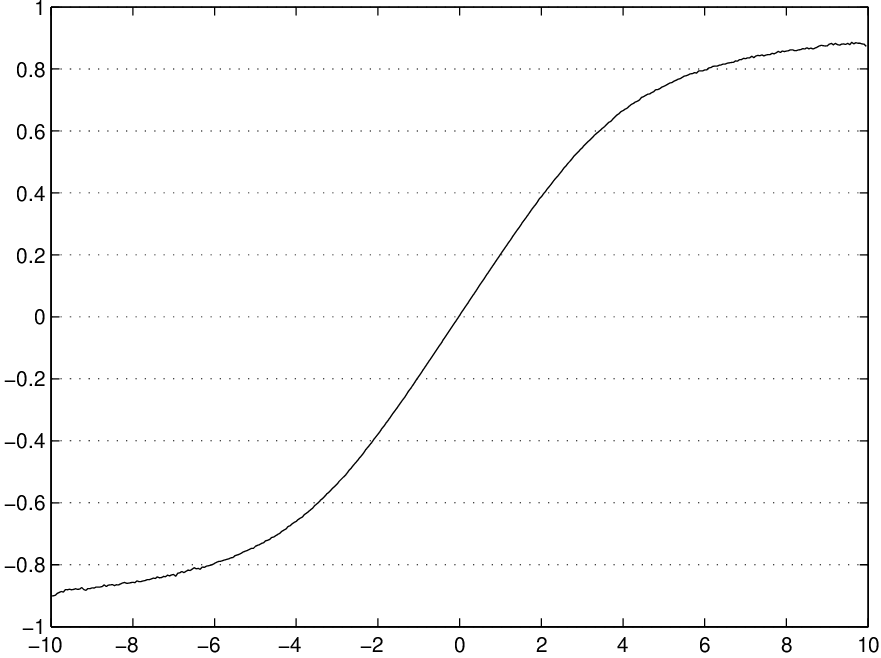}

\caption{Left: each of the solid curves shows an approximation of $\left[\alpha t-\Lambda_{\star}(\alpha)\right]$
against $\alpha$, with each curve corresponding to a different value
of $t$ in the range $\left[-0.8,0.8\right]$. The cross on each curve
indicates its maximum and thus approximates the value of $\sup_{\alpha}\left[\alpha t-\Lambda_{\star}(\alpha)\right]=I(t)$.
Right: $\Lambda_{\star}^{\prime}(\alpha)$ against $\alpha$ approximated
using finite differences. }
\label{fig:rare_events1}
\end{figure}

\subsection{Optimal control with $\mathcal{KL}$ stage costs\label{sub:Solving-a-Bellman}}

We will show some numerical results related to the control problem
of Section \ref{sub:Optimal-control-with}. We will look at the finite
and infinite horizon case separately.

\subsubsection*{Finite Horizon}

We begin by looking at a particular case of Example \ref{ex:lqg}.
Let $\mathsf{X}=\mathbb{R}^{2}$ and consider the controlled dynamics
being 
\[
X_{p}=\left[\begin{array}{cc}
1 & \tau\\
0 & 1
\end{array}\right]X_{p-1}+\left[\begin{array}{cc}
\tau & \tau^{2}/2\\
0 & \tau
\end{array}\right]\left(W_{p}+F_{p}\right),
\]
where $p=1,\ldots,n$ and $W_{n}$ are independent zero mean Gaussian
random variables with covariance matrix $\sigma^{2}I$ and $F_{n}\in\mathbb{R}^{2}$
are the standard control inputs. Note in general $M$ cannot satisfy
\textbf{(H)}, but truncation (and suitable re-normalization) of $M$
to any bounded interval of $\mathsf{X}$ does allow \textbf{(H)} to
be satisfied. Let also the state-dependent part of the stage cost
be $U(x)=(1-\boldsymbol{\mathbb{I}}_{\left(-\delta,\delta\right)}(x\left(1\right)))$\textcolor{black}{{}
for some $\delta>0$.} \textcolor{black}{This type of cost penalizes
states outside $(-\delta,\delta)$ and can be a convenient choice
for various containment problems. For this example we will set }$X_{0}$
to be zero mean Gaussian random variables with covariance matrix $\left[\begin{array}{cc}
3 & 0\\
0 & 1
\end{array}\right]$\textcolor{black}{. In Figure \ref{fig:Estimated-value-functions}
we present estimated some value functions for $T=2n=20$, }$\tau=0.1$,
$\delta=0.5$ and $N=500$. Note that the displayed value function
estimates are obtained by averaging over $50$ independent multiple
runs as due to the high variance of the initial condition the estimates
$h_{p,2n}^{N}$ exhibit a significant amount of variance. Still some
errors are visible in the form or ripples due to using a small $N$.

\begin{figure}
\begin{minipage}[t]{1\columnwidth}%
\centering\includegraphics[width=0.8\textwidth]{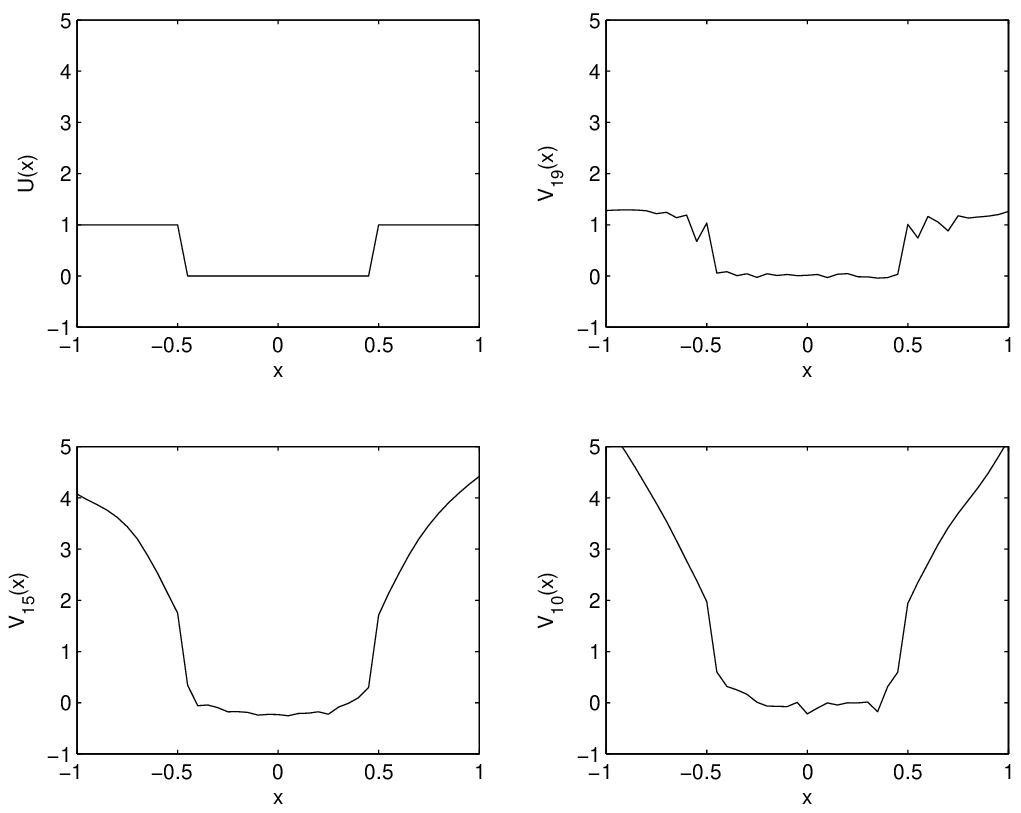}%
\end{minipage}\caption{\label{fig:Estimated-value-functions}Estimated value functions $V_{p}^{N}(x)=-\log h_{p,n}^{N}$
against $x$ for $p=10,15,19$ and $n=20$. Top left panel is $U(x)$
against $x$. }
\end{figure}

\subsubsection*{Infinite Horizon}

We will now look at a different infinite horizon scalar example. The
Cox-Ingersoll-Ross process satisfies 
\[
dX_{t}=\theta\left(\mu-X_{t}\right)dt+\sigma\sqrt{X_{t}}dW_{t},
\]
where $\left\{ W_{t}\right\} $ is standard one-dimensional Brownian
motion, $\theta>0$ is the reversion rate, $\mu>0$ is the level of
mean reversion and $\sigma>0$ specifies the volatility. In financial
applications this process is widely used to model interest rates.
When $2\theta\mu>\sigma^{2}$ it is stationary. Here $\mathsf{X}=\mathbb{R}^{+}$
and for purposes of illustration we consider the case that $M$ is
the transition probability from time $t=0$ to $t=0.01$ of the CIR
process, which is available in closed form \citep{cox1985theory}.
Although known to satisfy a type of multiplicative Lyapunov drift
condition which allows an MET to be established in a weighted $\infty$-norm
setting \citep{whiteley2012}, $M$ cannot satisfy \textbf{(H)}. Truncation
(and suitable re-normalization) of $M$ to any bounded interval of
$\mathsf{X}$ does allow \textbf{(H)} to be satisfied. In our numerical
experiments this truncation was made to $[0,500]$. We took the parameter
settings $\theta=2$, $\sigma=20,$ $\mu=10$ and considered, for
a range of $\delta$, the \textcolor{black}{following ``well-shaped}\textcolor{blue}{''}
cost function: 
\begin{equation}
U(x)=2\mathbb{I}_{[0,10-\delta]}(x)+\mathbb{I}_{\left[10+\delta,\infty\right)}(x),\label{eq:cost_function}
\end{equation}
\textcolor{black}{which penalizes states outside $(10-\delta,10+\delta)$.}

Figure \ref{fig:value_function} shows estimates of the value function,
which were obtained via averaging by evaluating the window-averaged
quantities $\frac{1}{m}\sum_{p=0}^{m-1}h_{n+p,2n}^{N}(x)$ with $N=500$,
$n=2000$ and $m=100$ and evaluations on a fine grid from $x=4$
to $x=20$. Note the coincidence of the discontinuities in (\ref{eq:cost_function})
with those in the estimated function. The influence of the parameter
$\delta$ is apparent. Table \ref{tab:variance} shows the empirical
relative varia\textcolor{black}{nce (variance over the square of the
mean) of the} estimated value function evaluations at different points
$x$ and for different numbers of particles $N$. The variance evidently
decreases with $N$, with large values associated with more extreme
values of $x$. 

\begin{figure}
\begin{minipage}[t]{1\columnwidth}%
\centering\includegraphics[width=0.6\textwidth]{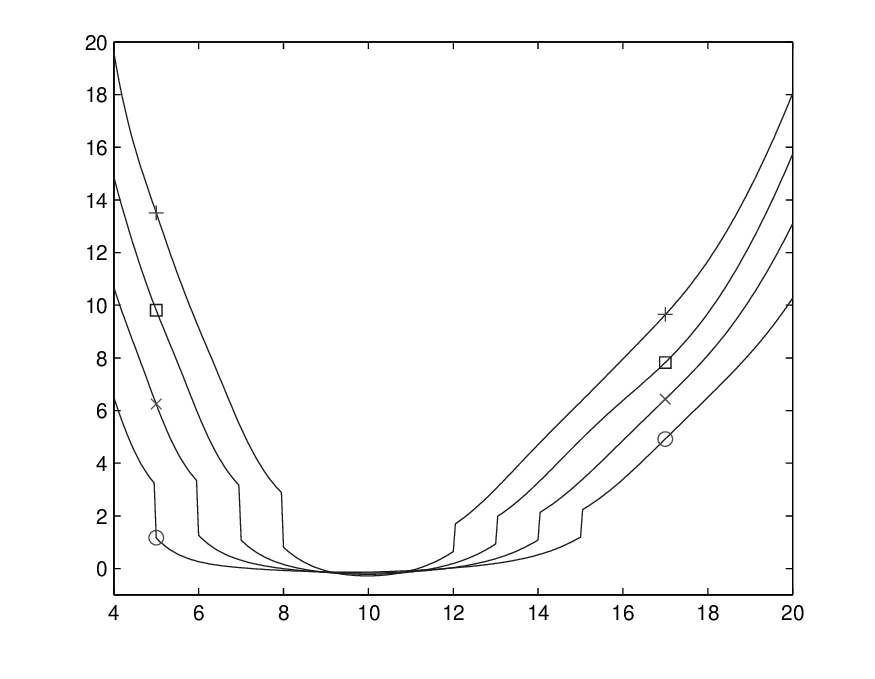}%
\end{minipage}\caption{\label{fig:value_function}Estimated optimal value function $V_{\star}(x)$
against $x$ for various parameter values: $\circ,\delta=5;\times,\delta=4;\square,\delta=3;+,\delta=2$.}
\end{figure}

\begin{table}
\begin{minipage}[t]{1\columnwidth}%
\centering%
\begin{tabular}{|c|c|c|c|c|c|c|}
\hline 
\multirow{2}{*}{N} & \multicolumn{6}{c|}{$x$}\tabularnewline
\cline{2-7} 
 & 6 & 8 & 10 & 12 & 14 & 16\tabularnewline
\hline 
50 & $1.81\times10^{-3}$ & $1.94\times10^{-5}$ & $5.62\times10^{-5}$ & $7.27\times10^{-5}$ & $1.07\times10^{-3}$ & $7.2\times10^{-3}$\tabularnewline
\hline 
100 & $1.02\times10^{-3}$ & $9.13\times10^{-6}$ & $2.78\times10^{-5}$ & $3.26\times10^{-5}$ & $5.41\times10^{-4}$ & $6.15\times10^{-3}$\tabularnewline
\hline 
500 & $1.15\times10^{-4}$ & $4.95\times10^{-6}$ & $1.46\times10^{-6}$ & $5.75\times10^{-6}$ & $3.08\times10^{-5}$ & $2.28\times10^{-3}$\tabularnewline
\hline 
\end{tabular}%
\end{minipage}\caption{\label{tab:variance}Empirical relative variance of value function
evaluations (at different $x$), with $n=2000$ from $500$ independent
realizations of the algorithm. }
\end{table}

\section{Discussion\label{sec:Extensions}}

We presented a generic particle algorithm to approximate the principle
eigen-function of an un-normalized positive Markov integral kernel
together with the associated twisted probability kernel. As per standard
Perron-Frobenius theory, we have not made any reversibility assumptions,
and this is reflected to some extent in the ``forward-backward''
structure of the algorithm. We also presented some theoretical results
demonstrating the validity of using such a numerical scheme and saw
how it can be applied to a variety of practical problems. 

There are a number of possible avenues for further investigation.
Regarding the theory, Assumption \textbf{(H) }is very restrictive
when $\mathsf{X}$ is non-compact. Starting points for the analysis
of the method under weaker assumptions are \citep{whiteley2011,whiteley2012},
where the stability of Feynman-Kac semigroups and particle approximations
have been studied under a relaxation of the uniform majorization/minorization
structure of \textbf{(H)}, using a Lyapunov drift condition. 

There also many aspects of the applications considered here that could
benefit from further study. The connection to optimal importance sampling
schemes for rare event simulation and estimation could be extended
by studying in detail the variance of the estimator appearing in Proposition
\ref{prop:twsited_sampling} as well as the propagation of chaos properties
associated with blocks of samples drawn from $\left(P_{(p,n)}^{N}\right)$.
Furthermore, it is of some interest to investigate how optimization
schemes such as those in \citep[Chapter 5]{kantas2009sequential}
could be combined with the algorithm in order to estimate the solution
of $\Lambda_{\star}^{\prime}(\alpha)=\delta$. Regarding this last
point, when the solution of $\alpha$ is not unique \citep{chan:lai}
by-pass the computation of the eigen-function using saddle-point approximations,
so it would be interesting to investigate how the two approaches could
be combined. Furthermore, the optimal control problem underlying the
Bellman equation in Section \ref{sub:Optimal-control-with} has only
recently received some mathematical attention \citep{theodorou2010generalized,dvijotham2011unified}
for the finite horizon case and could be investigated further. Especially
for the infinite horizon case, there are many connections with continuous
time control problems \citep{dai1996connections,sheu1984stochastic}
and further insight could extend the applicability of the numerical
tools in this paper.

\appendix

\section{Appendix}

\subsection{Proofs and auxiliary results for Section \ref{sec:Multiplicative-Ergodicity}
\label{sub:Proofs-Numellin}}

We now present dome definitions and preliminary results which preface
the proof of Theorem \ref{thm:MET} . The first is a lemma that establishes
uniform bounds on ratio functionals involving iterates of $Q$. Set
$\mathcal{L}^{+}:=\left\{ \varphi\in\mathcal{L}:\nu(\varphi)>0\right\} $. 
\begin{lem}
\label{lem:Q_n_bounds}For any $\mu^{\prime}\in\mathcal{P}$ and $\varphi\in\mathcal{L}^{+}$,
\begin{equation}
\inf_{n\geq1}\inf_{x\in\mathsf{X}}\frac{Q^{\left(n\right)}(\varphi)(x)}{\mu^{\prime}Q^{\left(n\right)}(\varphi)}\geq\frac{\epsilon^{-}}{\epsilon^{+}}>0,\quad\sup_{n\geq1}\sup_{x\in\mathsf{X}}\frac{Q^{\left(n\right)}(\varphi)(x)}{\mu^{\prime}Q^{\left(n\right)}(\varphi)}\leq\frac{\epsilon^{+}}{\epsilon^{-}}<\infty.\label{eq:in_sup_bound}
\end{equation}
\end{lem}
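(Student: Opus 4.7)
The plan is to exploit \textbf{(H)} directly through the one-step sandwich
\[
\epsilon^{-}\nu(\cdot) \leq Q(x,\cdot) \leq \epsilon^{+}\nu(\cdot),\qquad \forall x\in\mathsf{X},
\]
which immediately extends to non-negative test functions: for any $\psi \geq 0$ in $\mathcal{L}$ and any $x\in\mathsf{X}$,
\[
\epsilon^{-}\nu(\psi)\;\leq\;Q(\psi)(x)\;\leq\;\epsilon^{+}\nu(\psi).
\]
Since $\varphi\in\mathcal{L}^{+}$ is non-negative with $\nu(\varphi)>0$, and iteration preserves non-negativity, the key observation is that each $Q^{(n)}(\varphi)$ can be written as a \emph{single} $Q$-application to the non-negative function $\psi_{n}:=Q^{(n-1)}(\varphi)$ (interpreting $\psi_{1}=\varphi$).

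First I would pull out the last application of $Q$ and write, for $n\geq 1$,
\[
Q^{(n)}(\varphi)(x)=\int q(x,y)\,\psi_{n}(y)\,\nu(dy),
\]
then apply the uniform bounds on $q$ from \textbf{(H)} to get
\[
\epsilon^{-}\nu(\psi_{n})\;\leq\;Q^{(n)}(\varphi)(x)\;\leq\;\epsilon^{+}\nu(\psi_{n})
\]
uniformly in $x$. Integrating this double inequality against $\mu^{\prime}\in\mathcal{P}$ yields the identical sandwich for $\mu^{\prime}Q^{(n)}(\varphi)$. A short induction confirms that $\nu(\psi_{n})>0$: for $n=1$ this is the hypothesis $\nu(\varphi)>0$, and for $n\geq 2$ the already-established lower bound gives $\psi_{n}(y)\geq\epsilon^{-}\nu(\psi_{n-1})>0$ pointwise, so in particular $\nu(\psi_{n})>0$ for every $n\geq 1$.

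Taking the ratio of the upper bound on $Q^{(n)}(\varphi)(x)$ to the lower bound on $\mu^{\prime}Q^{(n)}(\varphi)$ (and vice-versa) makes the factor $\nu(\psi_{n})$ cancel, leaving
\[
\frac{\epsilon^{-}}{\epsilon^{+}}\;\leq\;\frac{Q^{(n)}(\varphi)(x)}{\mu^{\prime}Q^{(n)}(\varphi)}\;\leq\;\frac{\epsilon^{+}}{\epsilon^{-}},
\]
uniformly in $n\geq 1$ and $x\in\mathsf{X}$, which is exactly \eqref{eq:in_sup_bound}. There is no real obstacle here; the argument is essentially one line once one recognises that writing $Q^{(n)}=Q\circ Q^{(n-1)}$ reduces everything to a single-step application of \textbf{(H)} and the $n$-dependence disappears via cancellation of $\nu(\psi_{n})$.
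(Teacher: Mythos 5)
Your proof is correct and is essentially the paper's own argument: both rely on peeling off the last application of $Q$ and applying the one-step sandwich from \textbf{(H)} so that the $n$-dependent factor (your $\nu(\psi_n)$, their $Q^{(n)}(\varphi)(x)$ in the denominator of the ratio) cancels. The only cosmetic difference is that the paper first bounds the ratio $Q^{(n)}(\varphi)(y)/Q^{(n)}(\varphi)(x)$ and then integrates the numerator against $\mu^{\prime}$, whereas you sandwich numerator and denominator separately against the common quantity $\nu(\psi_n)$; your explicit inductive check that $\nu(\psi_n)>0$ is a small but welcome addition.
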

\begin{proof}
Under \textbf{(H)}, 
\[
\frac{Q^{\left(n\right)}\left(\varphi\right)(y)}{Q^{\left(n\right)}\left(\varphi\right)(x)}\leq\frac{\epsilon^{+}}{\epsilon^{-}}\quad\forall x,y\in\mathsf{X},n\geq1,
\]
then integrating in the numerator with respect to $\mu^{\prime}$
and re-arranging gives the infimum bound in (\ref{eq:in_sup_bound}).
The proof of the supremum bound is similar. 
\end{proof}
Following \citet{mc:the:N04}, the notions of irreducibility and aperiodicity
of a non-negative kernel generalize naturally from the probabilistic
case, and are expressed in terms of a $\sigma$-finite irreducibility
measure. For simplicity of presentation we shall take as this measure
the $\nu$ appearing in \textbf{(H)}. It follows immediately from
the definitions of \citep{mc:the:N04} that when \textbf{(H)} holds,
$Q$ is $\nu$-irreducible and aperiodic. The number $\lambda_{\star}$
as defined in (\ref{eq:Lambda})-(\ref{eq:lambda}) is called the
\emph{generalized principal eigen-value} (g.p.e.) of $Q$ by \citet[Theorem 3.1]{mc:theory:KM03}
and in our setting coincides with the reciprocal of the convergence
parameter of \citet[Section 3.2]{mc:the:N04}. 

Recall, the spectral radius of $Q$ as a bounded linear operator on
$\mathcal{L}$ is defined as $\xi:=\lim_{n\rightarrow\infty}\interleave Q^{\left(n\right)}\interleave^{1/n}$
(existence follows by sub-multiplicativity of operator norm). For
notational convenience define $s^{-}:\mathsf{X}\rightarrow\mathbb{R}_{+}$,
$s^{+}:\mathsf{X}\rightarrow\mathbb{R}_{+}$ by $s^{-}(x)=\epsilon^{-},s^{+}(x)=\epsilon^{+},\forall x$,
respectively. In the terminology \citet[Proposition 3.4]{mc:the:N04},
$Q$ is called $\lambda_{\star}$-\emph{recurrent} if and only if
$\sum_{n=0}^{\infty}\lambda_{\star}^{-n}\nu Q^{\left(n\right)}\left(s^{-}\right)=\infty$.
The following lemma prepares for Theorem \ref{thm:MET}.
\begin{lem}
\label{lem:recurrence}We have 
\begin{equation}
\epsilon^{-}\leq\xi=\lambda_{\star}\leq\epsilon^{+},\quad\quad\qquad\inf_{\mu^{\prime}\in\mathcal{P}}\inf_{n\geq0}\frac{\mu^{\prime}Q^{\left(n\right)}(1)}{\lambda_{\star}^{n}}>0,\label{eq:uniform_below}
\end{equation}
and therefore $Q$ is $\lambda_{\star}$ -recurrent.\end{lem}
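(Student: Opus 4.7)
The plan is to derive the three conclusions of the lemma from three ingredients: (i) the pointwise sandwich $\epsilon^{-}\leq Q(1)(x)\leq\epsilon^{+}$ that follows immediately from \textbf{(H)}; (ii) Lemma \ref{lem:Q_n_bounds}, which flattens the spatial variation of $Q^{(n)}(\varphi)$ uniformly in $n$; and (iii) two Fekete-type subadditivity arguments, one applied to $\log\interleave Q^{(n)}\interleave$ (via sub-multiplicativity of the operator norm) and one already noted in (\ref{eq:Lambda}) for $\log\nu Q^{(n)}s^{-}$.

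First I would prove $\epsilon^{-}\leq\xi\leq\epsilon^{+}$. Iterating $\epsilon^{-}\leq Q(1)(x)\leq\epsilon^{+}$ gives $(\epsilon^{-})^{n}\leq Q^{(n)}(1)(x)\leq(\epsilon^{+})^{n}$ for every $x$, and since $Q^{(n)}$ is non-negative, $\interleave Q^{(n)}\interleave=\sup_{x}Q^{(n)}(1)(x)$; taking $n$-th roots and $n\to\infty$ bounds $\xi$ as claimed. To identify $\xi$ with $\lambda_{\star}$, observe that $s^{-}\equiv\epsilon^{-}$, so $\nu Q^{(n)}s^{-}=\epsilon^{-}\nu Q^{(n)}(1)$, while Lemma \ref{lem:Q_n_bounds} (with $\mu'=\nu$, $\varphi=1$) sandwiches $\nu Q^{(n)}(1)$ between $\frac{\epsilon^{-}}{\epsilon^{+}}\interleave Q^{(n)}\interleave$ and $\interleave Q^{(n)}\interleave$. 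The constant factors vanish on taking $\frac{1}{n}\log$ and passing to the limit, giving $\Lambda_{\star}=\log\xi$, i.e. $\lambda_{\star}=\xi$.

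For the uniform lower bound in (\ref{eq:uniform_below}), sub-multiplicativity makes $\log\interleave Q^{(n)}\interleave$ sub-additive, so Fekete's lemma yields $\lambda_{\star}^{n}=\xi^{n}\leq\interleave Q^{(n)}\interleave$ for every $n\geq1$. Combining with the infimum half of Lemma \ref{lem:Q_n_bounds} gives
\[
\mu'Q^{(n)}(1)\;\geq\;\inf_{x}Q^{(n)}(1)(x)\;\geq\;\frac{\epsilon^{-}}{\epsilon^{+}}\sup_{y}Q^{(n)}(1)(y)\;=\;\frac{\epsilon^{-}}{\epsilon^{+}}\interleave Q^{(n)}\interleave\;\geq\;\frac{\epsilon^{-}}{\epsilon^{+}}\lambda_{\star}^{n},
\]
uniformly in $\mu'\in\mathcal{P}$ and $n\geq1$; the case $n=0$ is trivial since both sides of the ratio equal $1$.

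Finally, for $\lambda_{\star}$-recurrence, fix $s\in\mathcal{L}^{+}$. The minorization in \textbf{(H)} gives $Q(s)(y)\geq\epsilon^{-}\nu(s)$ for every $y$, so factoring out the last step yields $Q^{(n+1)}(s)(x)\geq\epsilon^{-}\nu(s)\,Q^{(n)}(1)(x)\geq\frac{(\epsilon^{-})^{2}\nu(s)}{\epsilon^{+}}\lambda_{\star}^{n}$ by the previous display. Hence $\lambda_{\star}^{-(n+1)}Q^{(n+1)}(s)(x)$ is bounded below by a positive constant independent of $n$, and $\sum_{n}\lambda_{\star}^{-n}Q^{(n)}(s)(x)=\infty$ at every $x\in\mathsf{X}$ (not only $\nu$-a.e.). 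The only step that is not pure bookkeeping is the identification $\xi=\lambda_{\star}$; this is where \textbf{(H)} really does its work, through the two-sided sandwich supplied by Lemma \ref{lem:Q_n_bounds}.
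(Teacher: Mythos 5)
Your proof is correct, and while the identification $\xi=\lambda_\star$ runs along the same lines as the paper (a squeeze where multiplicative constants die under $\frac{1}{n}\log$), your route to the uniform lower bound in (\ref{eq:uniform_below}) is genuinely different and somewhat cleaner. The paper introduces the auxiliary quantity $\Lambda_\star^+=\lim_n \frac{1}{n}\log\nu Q^{(n)}(s^+)$, shows via \emph{sub}-additivity that it equals the \emph{infimum} of the sequence, and then combines this with $\Lambda_\star^+=\Lambda_\star$ to deduce $\nu Q^{(n)}(s^-)/\lambda_\star^n\geq\epsilon^-/\epsilon^+$. You instead invoke Fekete's lemma directly on the sub-additive sequence $\log\interleave Q^{(n)}\interleave$, which gives $\lambda_\star^n=\xi^n\leq\interleave Q^{(n)}\interleave$ for free, and then a single application of Lemma \ref{lem:Q_n_bounds} (taking $\mu'=\delta_y$) yields $\mu'Q^{(n)}(1)\geq\frac{\epsilon^-}{\epsilon^+}\interleave Q^{(n)}\interleave\geq\frac{\epsilon^-}{\epsilon^+}\lambda_\star^n$. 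This avoids the detour through $s^+$ and $\Lambda_\star^+$ entirely, and in fact produces a slightly different (and, when $\epsilon^+>1$, sharper) constant $\epsilon^-/\epsilon^+$ versus the paper's $\epsilon^-/(\epsilon^+)^2$. Your final step spelling out the $\lambda_\star$-recurrence for an arbitrary $s\in\mathcal{L}^+$, via $Q^{(n+1)}(s)(x)\geq\epsilon^-\nu(s)Q^{(n)}(1)(x)$, is also more explicit than the paper, which simply asserts the implication; and as you note you even obtain divergence for \emph{every} $x$ rather than merely $\nu$-a.e. $x$, which is stronger than what the definition demands.
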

\begin{rem}
Following the terminology and arguments of \citep[p.96]{mc:the:N04},
under \textbf{(H)} the kernel $Q$ is then additionally \emph{uniformly}
$\lambda_{\star}$-recurrent.\end{rem}
\begin{proof}
The upper and lower bounds on the spectral radius $\xi$ follow from
\textbf{(H)}, because for any $n\geq1$ and $x\in\mathsf{X}$ we have
$\epsilon^{-}\leq\left[Q^{(n)}(1)(x)\right]^{1/n}\leq\epsilon^{+}$.
To verify that $\lambda_{\star}$ coincides with $\xi$, write
\begin{eqnarray*}
\left|\frac{1}{n}\log\frac{\sup_{x}Q^{(n)}(1)(x)}{\nu Q^{(n)}(s^{-})}\right| & = & \left|\frac{1}{n}\log\frac{\sup_{x}Q^{(n)}(1)(x)}{\nu Q^{(n)}(1)}-\frac{1}{n}\log\epsilon^{-}\right|\\
 & \leq & \frac{1}{n}\log\frac{\epsilon^{+}}{\epsilon^{-}}+\frac{1}{n}\log\frac{\nu Q^{(n-1)}(1)}{\nu Q^{(n-1)}(1)}+\frac{1}{n}\left|\log\epsilon^{-}\right|\;\rightarrow\;0\text{\quad\ as\quad\ \ensuremath{n\rightarrow\infty}. }
\end{eqnarray*}
It remains to verify the uniform lower bound in (\ref{eq:uniform_below})
and thus the $\lambda_{\star}$-recurrence. A key feature of the majorization
part of assumption \textbf{(H)} is that it implies $\nu Q^{\left(n+m-1\right)}\left(s^{+}\right)\leq\nu Q^{\left(n-1\right)}\left(s^{+}\right)\nu Q^{\left(m-1\right)}\left(s^{+}\right)$
and then by \emph{sub}-additivity we are assured of the existence
of: 
\begin{equation}
\Lambda_{\star}^{+}:=\lim_{n\rightarrow\infty}\frac{1}{n}\log\nu Q^{\left(n-1\right)}\left(s^{+}\right)=\inf_{n\geq1}\frac{1}{n}\log\nu Q^{\left(n-1\right)}\left(s^{+}\right).\label{eq:Lambda_dash}
\end{equation}
But from the definitions of $s^{+}$ and $s^{-},$ 
\begin{eqnarray}
\frac{1}{n}\log\nu Q^{\left(n-1\right)}\left(s^{+}\right)-\frac{1}{n}\log\nu Q^{\left(n-1\right)}\left(s^{-}\right) & = & \frac{1}{n}\log\left[\frac{\nu Q^{\left(n-1\right)}(1)}{\nu Q^{\left(n-1\right)}(1)}\frac{\epsilon^{+}}{\epsilon^{-}}\right]\nonumber \\
 & = & \frac{1}{n}\log\left(\frac{\epsilon^{+}}{\epsilon^{-}}\right),\label{eq:squeeze}
\end{eqnarray}
so taking $n\rightarrow\infty$ we find that $\Lambda_{\star}^{+}=\Lambda_{\star}$,
and then (\ref{eq:squeeze}) together with the right-most equality
in (\ref{eq:Lambda_dash}) imply 
\[
\frac{1}{n}\log\nu Q^{\left(n-1\right)}\left(s^{-}\right)-\Lambda_{\star}\geq-\frac{1}{n}\log\left(\frac{\epsilon^{+}}{\epsilon^{-}}\right),
\]
so
\[
\frac{\nu Q^{\left(n-1\right)}\left(s^{-}\right)}{\lambda_{\star}^{n}}\geq\frac{\epsilon^{-}}{\epsilon^{+}}>0.
\]
Equation (\ref{eq:uniform_below}) then holds as $\frac{\mu'Q^{\left(n\right)}(1)}{\nu Q^{\left(n\right)}(1)\epsilon^{+}}\geq\frac{\epsilon^{-}}{\left(\epsilon^{+}\right)^{2}}$
for all $\mu'\in\mathcal{P}$ , and this implies $\lambda_{\star}$-recurrence.
\end{proof}
Now consider the family of potential kernels, $\left\{ U_{\theta};\theta\in[\lambda_{\star},\infty)\right\} $,

\[
U_{\theta}:=\sum_{n=0}^{\infty}\theta^{-n-1}\left(Q-s^{-}\otimes\nu\right)^{\left(n\right)}.
\]
where the convergence of the sum, in the operator norm, is ensured
by the $\lambda_{\star}$-recurrence of $Q$ (shown in Lemma \ref{lem:recurrence}
in Appendix) and is straightforward to verify using the inversion
argument of \citet[Proof of Lemma 3.2]{mc:theory:KM03}, noting that
as per Lemma \ref{lem:recurrence}, the spectral radius of $Q$ coincides
with the g.p.e., $\xi=\lambda_{\star}$.
\begin{proof}
(of Theorem \ref{thm:MET}) As per Lemma \ref{lem:recurrence}, the
spectral radius of $Q$ coincides with $\lambda_{\star}$. By the
same Lemma, $Q$ is $\lambda_{\star}-$recurrent. By \citep[Theorems 5.1 and 5.2]{mc:the:N04},
$\nu U_{\lambda_{\star}}$ and $U_{\lambda_{\star}}(s^{-})$ are then
respectively the unique measure and $\nu$-essentially unique non-zero
function satisfying 
\begin{equation}
\nu U_{\lambda_{\star}}Q=\lambda_{\star}\nu U_{\lambda_{\star}},\quad QU_{\lambda_{\star}}(s^{-})=\lambda_{\star}U_{\lambda_{\star}}(s^{-}),\quad\nu U_{\lambda_{\star}}\left(s^{-}\right)=1.\label{eq:unnorm_eigen}
\end{equation}
Under \textbf{(H)} we then have from (\ref{eq:unnorm_eigen}) that
\begin{equation}
0<\frac{\epsilon^{-}}{\lambda_{\star}}=\frac{\epsilon^{-}}{\lambda_{\star}}\nu U_{\lambda_{\star}}(s^{-})\leq U_{\lambda_{\star}}(s^{-})(x)\leq\frac{\epsilon^{+}}{\lambda_{\star}}\nu U_{\lambda_{\star}}(s^{-})=\frac{\epsilon^{+}}{\lambda_{\star}}<\infty,\quad\forall x,\label{eq:U_lam_bounds}
\end{equation}
thus we take 
\begin{equation}
\eta_{\star}:=\frac{\nu U_{\lambda_{\star}}}{\nu U_{\lambda_{\star}}\left(1\right)},\quad\quad h_{\star}:=\frac{U_{\lambda_{\star}}\left(s^{-}\right)}{\eta_{\star}U_{\lambda_{\star}}\left(s^{-}\right)}\label{eq:eigen_star_defns}
\end{equation}
 establishing (\ref{eq:eigen_star}). The uniqueness properties transfer
directly to $\eta_{\star}$ and $h_{\star}$. 

We obtain from (\ref{eq:unnorm_eigen}) and (\ref{eq:U_lam_bounds})
the following uniform lower and upper bounds on $h_{\star}$: 
\begin{equation}
h_{\star}(x)=\frac{Q\left(h_{\star}\right)(x)}{\lambda_{\star}}\geq\frac{\epsilon^{-}}{\lambda_{\star}}\nu\left(h_{\star}\right)=\frac{\epsilon^{-}}{\lambda_{\star}}\frac{\nu U_{\lambda_{\star}}\left(s^{-}\right)}{\eta_{\star}U_{\lambda_{\star}}\left(s^{-}\right)}=\frac{\epsilon^{-}}{\lambda_{\star}}\frac{1}{\eta_{\star}U_{\lambda_{\star}}\left(s^{-}\right)}\geq\frac{\epsilon^{-}}{\epsilon^{+}}>0,\quad\forall x,\label{eq:h_star_bound_below}
\end{equation}
\begin{equation}
h_{\star}(x)=\frac{Q\left(h_{\star}\right)(x)}{\lambda_{\star}}\leq\frac{\epsilon^{+}}{\lambda_{\star}}\nu\left(h_{\star}\right)=\frac{\epsilon^{+}}{\lambda_{\star}}\frac{1}{\eta_{\star}U_{\lambda_{\star}}\left(s^{-}\right)}\leq\frac{\epsilon^{+}}{\epsilon^{-}}<\infty,\quad\forall x\label{eq:h_star_bounded_above}
\end{equation}
so that (\ref{eq:eta_in_P_h_bound}) is established. Furthermore $P_{\star}$
is then well-defined as a Markov kernel and we readily verify that
it satisfies a uniform minorization condition:
\begin{eqnarray*}
P_{\star}(x,dx^{\prime}) & = & \frac{Q(x,dx^{\prime})h_{\star}(x^{\prime})}{h_{\star}(x)\lambda_{\star}}\\
 & \geq & \frac{\nu(h_{\star})}{h_{\star}(x)\lambda_{\star}}\frac{\epsilon^{-}\nu(dx^{\prime})h_{\star}(x^{\prime})}{\nu(h_{\star})}\\
 & = & \frac{1}{U_{\lambda_{\star}}\left(s^{-}\right)(x)\lambda_{\star}}\epsilon^{-}\nu(dx^{\prime})U_{\lambda_{\star}}\left(s^{-}\right)(x')\\
 & \geq & \frac{\epsilon^{-}}{\epsilon^{+}}\nu(dx^{\prime})U_{\lambda_{\star}}\left(s^{-}\right)(x'),\quad\forall x,
\end{eqnarray*}
where $\nu U_{\lambda_{\star}}\left(s^{-}\right)=1$ and (\ref{eq:U_lam_bounds})
have been used. Thus $P_{\star}$ is uniformly geometrically ergodic
and by inspection of the eigen-measure equation its unique invariant
probability distribution, denoted by $\pi_{\star}$, is given by $\pi_{\star}\left(\varphi\right)=\eta_{\star}\left(h_{\star}\varphi\right)/\eta_{\star}\left(h_{\star}\right)=\eta_{\star}\left(h_{\star}\varphi\right)$.
Then, again noting that $\nu U_{\lambda_{\star}}\left(s^{-}\right)=1$,
by \citep[Theorem 16.2.4]{mc:theory:MT09} we have:
\begin{equation}
\interleave P_{\star}^{\left(n\right)}-1\otimes\pi_{\star}\interleave\leq2\rho^{n},\label{eq:tiwsited_uniform_ergo}
\end{equation}
where $\rho:=1-\left(\epsilon^{-}/\epsilon^{+}\right)$, which establishes
(\ref{eq:P_star_ergo}). Multiplying by $h_{\star}>0$ in (\ref{eq:tiwsited_uniform_ergo})
yields for any $\phi\in\mathcal{L},x\in\mathsf{X}$, 
\begin{equation}
\left|\lambda_{\star}^{-n}Q^{\left(n\right)}\left(h_{\star}\phi\right)(x)-h_{\star}(x)\eta_{\star}\left(h_{\star}\phi\right)\right|\leq2\rho^{n}h_{\star}(x)\left\Vert \phi\right\Vert \leq2\rho^{n}\left(\frac{\epsilon^{+}}{\epsilon^{-}}\right)\left\Vert \phi\right\Vert ,\label{eq:pre_MEt}
\end{equation}
where (\ref{eq:h_star_bounded_above}) has been used. By equation
(\ref{eq:h_star_bound_below}), $h_{\star}$ is bounded below away
from zero and therefore for any $\varphi\in\mathcal{L}$, we may have
taken $\phi:=\varphi/h_{\star}\in\mathcal{L}$ in (\ref{eq:pre_MEt}).
Finally noting from (\ref{eq:h_star_bound_below}) that $\left\Vert \varphi/h_{\star}\right\Vert \leq\left(\epsilon^{+}/\epsilon^{-}\right)\left\Vert \varphi\right\Vert $,
the bound of (\ref{eq:MET_bound}) is established. 
\end{proof}

\subsection{Proofs and auxiliary results for Section \ref{sec:Deterministic_approximations_intro}}

Under assumption \textbf{(H)} we obtain uniform bounds on these quantities,
as per the following Lemma.
\begin{lem}
\label{lem:h_p_n_bounds}$\;$
\begin{equation}
\inf_{n\geq0}\eta_{n}(G)>0\label{eq:lambda_bound below}
\end{equation}
\begin{equation}
\inf_{n\geq1}\inf_{0\leq p\leq n}\inf_{x\in\mathsf{X}}h_{p,n}(x)\geq\frac{\epsilon^{-}}{\epsilon^{+}}>0,\quad\sup_{n\geq1}\sup_{0\leq p\leq n}\sup_{x\in\mathsf{X}}h_{p,n}(x)\leq\frac{\epsilon^{+}}{\epsilon^{-}}<\infty.\label{eq:h_bounded}
\end{equation}
\end{lem}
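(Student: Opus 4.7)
The proof is essentially a direct bookkeeping exercise using assumption \textbf{(H)} together with the ratio bounds of Lemma \ref{lem:Q_n_bounds}, so I do not expect any serious obstacle; I would organize it as follows.

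First, I would establish the elementary pointwise bound $\epsilon^{-}\leq G(x)\leq\epsilon^{+}$ for all $x$. Since $M(x,\mathsf{X})=1$, we have $Q(x,\mathsf{X})=G(x)$, and applying \textbf{(H)} with $A=\mathsf{X}$ gives $\epsilon^{-}\nu(\mathsf{X})\leq G(x)\leq\epsilon^{+}\nu(\mathsf{X})$, i.e.\ $\epsilon^{-}\leq G(x)\leq\epsilon^{+}$. Then (\ref{eq:lambda_bound below}) follows at once, since for every $n\geq 0$,
\[
\eta_{n}(G)\geq \inf_{x\in\mathsf{X}}G(x)\geq \epsilon^{-}>0.
\]

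For (\ref{eq:h_bounded}), I would treat the case $p=n$ separately: there $h_{n,n}\equiv 1$ trivially lies in $[\epsilon^{-}/\epsilon^{+},\epsilon^{+}/\epsilon^{-}]$ because $\epsilon^{-}\leq\epsilon^{+}$. For $0\leq p<n$, by definition
\[
h_{p,n}(x)=\frac{Q^{(n-p)}(1)(x)}{\eta_{p}Q^{(n-p)}(1)},
\]
so I would apply Lemma \ref{lem:Q_n_bounds} with $\mu^{\prime}=\eta_{p}\in\mathcal{P}$ and $\varphi=1\in\mathcal{L}^{+}$ (note $\nu(1)=1>0$), and with $n$ there replaced by $n-p\geq 1$. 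The two inequalities of (\ref{eq:in_sup_bound}) give respectively the infimum bound $h_{p,n}(x)\geq\epsilon^{-}/\epsilon^{+}$ and the supremum bound $h_{p,n}(x)\leq\epsilon^{+}/\epsilon^{-}$, uniformly in $x$, in $p$ and in $n$.

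Taking suprema and infima over $x$, $p$ and $n$ then yields (\ref{eq:h_bounded}). No subtlety arises: the lemma is really just the observation that $h_{p,n}$ is a ratio of $Q^{(n-p)}(1)$ evaluated at two ``points'' (a single $x$ versus an integral against $\eta_{p}$), on which \textbf{(H)} imposes uniform two-sided control independently of $\eta_{p}$, and in particular independently of the initial measure $\mu$.
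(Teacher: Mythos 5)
Your proof is correct and follows the same route as the paper: the paper's proof is just the terse statement that \textbf{(H)} implies $G$ is bounded below away from zero (yielding (\ref{eq:lambda_bound below})) and that Lemma \ref{lem:Q_n_bounds} implies (\ref{eq:h_bounded}). You have filled in exactly those details, including the correct observation that the $p=n$ case ($h_{n,n}\equiv 1$) fits inside $[\epsilon^{-}/\epsilon^{+},\epsilon^{+}/\epsilon^{-}]$ since $\epsilon^{-}\leq\epsilon^{+}$.
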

\begin{proof}
Assumption \textbf{(H)} implies that $G$ is bounded below away from
zero and therefore we have (\ref{eq:lambda_bound below}). Lemma \ref{lem:Q_n_bounds}
in the Appendix implies (\ref{eq:h_bounded}).
\end{proof}
We proceed with the proof of Proposition \ref{prop:_h_n_h_star_bound}:
\begin{proof}
(of Proposition \ref{prop:_h_n_h_star_bound}) We first treat (\ref{eq:eta_converge}),
\begin{eqnarray*}
\left\Vert \eta_{n}-\eta_{\star}\right\Vert  & = & \sup_{\varphi:\left|\varphi\right|\leq1}\left|\mu Q^{\left(n\right)}\left(\varphi\right)\left[\frac{1}{\mu Q^{\left(n\right)}\left(1\right)}-\frac{1}{\lambda_{\star}^{n}\mu(h_{\star})}\right]+\frac{\mu Q^{\left(n\right)}\left(\varphi\right)}{\lambda_{\star}^{n}\mu(h_{\star})}-\eta_{\star}\left(\varphi\right)\right|\\
 & \leq & \sup_{\varphi:\left|\varphi\right|\leq1}\left|\frac{\mu Q^{\left(n\right)}\left(\varphi\right)}{\mu Q^{\left(n\right)}\left(1\right)}\right|\left|\frac{\mu Q^{\left(n\right)}\left(1\right)}{\lambda_{\star}^{n}\mu(h_{\star})}-1\right|\\
 &  & +\sup_{\varphi:\left|\varphi\right|\leq1}\left|\frac{\mu Q^{\left(n\right)}\left(\varphi\right)}{\lambda_{\star}^{n}\mu(h_{\star})}-\eta_{\star}\left(\varphi\right)\right|\\
 & \leq & \frac{2}{\mu(h_{\star})}\rho^{n}\left(\frac{\epsilon^{+}}{\epsilon^{-}}\right)^{2}\\
 &  & +\frac{2}{\mu(h_{\star})}\rho^{n}\left(\frac{\epsilon^{+}}{\epsilon^{-}}\right)^{2}\\
 & \leq & 4\rho^{n}\left(\frac{\epsilon^{+}}{\epsilon^{-}}\right)^{3},
\end{eqnarray*}
where the penultimate inequality follows from two applications of
the bound of Theorem \ref{thm:MET}, Equation (\ref{eq:MET_bound}),
and the final inequality is due to (\ref{eq:eta_in_P_h_bound}). This
establishes (\ref{eq:eta_converge}). 

In order to prove (\ref{eq:h_converge}), we first consider products
of the values $\left(\lambda_{n}\right)$. We have

\begin{eqnarray}
\left|\frac{\prod_{\ell=p}^{n-1}\lambda_{\ell}}{\lambda_{\star}^{n-p}}-1\right| & = & \left|\frac{\eta_{p}Q^{\left(n-p\right)}(1)}{\lambda_{\star}^{n-p}}-\eta_{p}(h_{\star})+\eta_{p}(h_{\star})-\eta_{\star}(h_{\star})\right|\nonumber \\
 & \leq & \left|\frac{\eta_{p}Q^{\left(n-p\right)}(1)}{\lambda_{\star}^{n-p}}-\eta_{p}(h_{\star})\right|+\left|\eta_{p}(h_{\star})-\eta_{\star}(h_{\star})\right|\nonumber \\
 & \leq & 2\rho^{n-p}\left(\frac{\epsilon^{+}}{\epsilon^{-}}\right)^{2}+4\rho^{p}\left(\frac{\epsilon^{+}}{\epsilon^{-}}\right)^{3}\left\Vert h_{\star}\right\Vert \nonumber \\
 & \leq & 2\rho^{\left(n-p\right)\wedge p}\left(\frac{\epsilon^{+}}{\epsilon^{-}}\right)^{2}\left(1+2\left(\frac{\epsilon^{+}}{\epsilon^{-}}\right)^{2}\right)\label{eq:lambda_converge}
\end{eqnarray}
where the penultimate inequality is due to (\ref{eq:MET_bound}) of
Theorem \ref{thm:MET} and (\ref{eq:eta_converge}), and the final
inequality is due to (\ref{eq:eta_in_P_h_bound}). Integrating and
iterating the eigen-measure equation (\ref{eq:eigen_star_defns})
gives $\lambda_{\star}^{n}=\eta_{\star}Q^{n}(1)$ . Then by Lemma
\ref{lem:Q_n_bounds},
\begin{equation}
\sup_{n\geq1}\sup_{x\in\mathsf{X}}\frac{Q^{\left(n\right)}(1)(x)}{\lambda_{\star}^{n}}\leq\frac{\epsilon^{+}}{\epsilon^{-}}.\label{eq:Qt_unif}
\end{equation}
With the above bounds in hand we now address (\ref{eq:h_converge}).
We have 
\begin{eqnarray*}
\left|h_{p,n}(x)-h_{\star}(x)\right| & = & \left|\frac{Q^{\left(n-p\right)}(1)(x)}{\lambda_{\star}^{n-p}}\left(\frac{\lambda_{\star}^{n-p}}{\prod_{\ell=p}^{n-1}\lambda_{\ell}}-1\right)+\frac{Q^{\left(n-p\right)}(1)(x)}{\lambda_{\star}^{n-p}}-h_{\star}(x)\right|\\
 & \leq & \left|\frac{\lambda_{\star}^{n-p}}{\prod_{\ell=p}^{n-1}\lambda_{\ell}}-1\right|\sup_{m\geq1}\sup_{y\in\mathsf{X}}\frac{Q^{\left(m\right)}(1)(y)}{\lambda_{\star}^{m}}\\
 &  & +\left|\frac{Q^{\left(n-p\right)}(1)(x)}{\lambda_{\star}^{n-p}}-h_{\star}(x)\right|\\
 & \leq & 2\rho^{\left(n-p\right)\wedge p}\left(\frac{\epsilon^{+}}{\epsilon^{-}}\right)^{3}\left(1+2\left(\frac{\epsilon^{+}}{\epsilon^{-}}\right)^{2}\right)+2\rho^{n-p}\left(\frac{\epsilon^{+}}{\epsilon^{-}}\right)^{2}\\
 & = & 2\rho^{\left(n-p\right)\wedge p}\left(\frac{\epsilon^{+}}{\epsilon^{-}}\right)^{2}\left[1+\left(\frac{\epsilon^{+}}{\epsilon^{-}}\right)+2\left(\frac{\epsilon^{+}}{\epsilon^{-}}\right)^{3}\right].
\end{eqnarray*}
where for the final inequality, (\ref{eq:lambda_converge}), (\ref{eq:Qt_unif})
and (\ref{eq:MET_bound}) have been used. This establishes (\ref{eq:h_converge}).

For (\ref{eq:P_converge}), consider the decomposition
\begin{eqnarray*}
\interleave P_{(p,n)}-P_{\star}\interleave & \leq & \sup_{x}\sup_{\varphi:\left|\varphi\right|\leq1}\left[\frac{1}{\lambda_{p-1}h_{p-1,n}(x)}\left|Q\left[\left(h_{p,n}-h_{\star}\right)\varphi\right](x)\right|\right.\\
 &  & +\frac{1}{\lambda_{p-1}}\frac{\left|h_{p-1,n}(x)-h_{\star}(x)\right|}{h_{p-1,n}(x)}\frac{\left|Q\left(h_{\star}\varphi\right)(x)\right|}{h_{\star}(x)}\\
 &  & +\left.\frac{\left|\lambda_{\star}-\lambda_{p-1}\right|}{\lambda_{p-1}\lambda_{\star}}\frac{1}{h_{\star}(x)}\left|Q\left(h_{\star}\varphi\right)(x)\right|\right]\\
 & \leq & \left\Vert h_{p,n}-h_{\star}\right\Vert \sup_{x}\frac{Q(1)(x)}{\lambda_{p-1}h_{p-1,n}(x)}\\
 &  & +\frac{\lambda_{\star}}{\lambda_{p-1}}\left\Vert h_{p-1,n}-h_{\star}\right\Vert \sup_{x}\frac{1}{h_{p-1,n}(x)}\\
 &  & +\frac{\left|\lambda_{\star}-\lambda_{p-1}\right|}{\lambda_{p-1}}\\
 & \leq & C_{h}\rho^{\left(n-p\right)\wedge p}2\left(\frac{\epsilon^{+}}{\epsilon^{-}}\right)^{2}+C_{\eta}\rho^{p-1}\frac{\epsilon^{+}}{\epsilon^{-}},
\end{eqnarray*}
where for the final equality, Lemma \ref{lem:h_p_n_bounds}, the identities
$\lambda_{p}=\eta_{p}(G)$, $\lambda_{\star}=\eta_{\star}(G)$, and
(\ref{eq:eta_converge})-(\ref{eq:h_converge}) have been used.
\end{proof}

\subsection{Proofs and auxiliary results for Section \ref{sec:Particle-approximations}}

\subsubsection{Lack of bias}
\begin{proof}
(of Proposition \ref{prop:unbias}). The $n=0$ case is trivial. For
any $\varphi\in\mathcal{L},$ $n\geq1$ and $x\in\mathsf{X},$ we
have
\begin{eqnarray}
\mathbb{E}_{N}\left[\left.Q_{n}^{N}\left(\varphi\right)(x)\right|\mathcal{F}_{n-1}\right] & = & \mathbb{E}_{N}\left[\left.\int\frac{\mathrm{d}Q\left(x,\cdot\right)}{\mathrm{d}\Phi\left(\eta_{n-1}^{N}\right)}\left(x^{\prime}\right)\varphi\left(x^{\prime}\right)\eta_{n}^{N}\left(dx^{\prime}\right)\right|\mathcal{F}_{n-1}\right]\nonumber \\
 & = & \frac{1}{N}\sum_{i=1}^{N}\mathbb{E}_{N}\left[\left.\frac{\mathrm{d}Q\left(x,\cdot\right)}{\mathrm{d}\Phi\left(\eta_{n-1}^{N}\right)}\left(\zeta_{n}^{i}\right)\varphi\left(\zeta_{n}^{i}\right)\right|\mathcal{F}_{n-1}\right]\nonumber \\
 & = & \int\frac{\mathrm{d}Q\left(x,\cdot\right)}{\mathrm{d}\Phi\left(\eta_{n-1}^{N}\right)}\left(x^{\prime}\right)\varphi\left(x^{\prime}\right)\Phi\left(\eta_{n-1}^{N}\right)\left(dx^{\prime}\right)\nonumber \\
 & = & Q\left(\varphi\right)(x),\label{eq:unbiased_one_step}
\end{eqnarray}
where the penultimate equality is due to the definition of the particle
transition probabilities (\ref{eq:particle transitions}).

Now consider the telescoping decomposition
\begin{eqnarray*}
\mu^{N}Q_{0,n}^{N}\left(\varphi\right)-\mu^{\prime}Q^{\left(n\right)}\left(\varphi\right) & = & \sum_{p=0}^{n-1}\left[\mu^{N}Q_{0,p+1}^{N}Q^{\left(n-p-1\right)}\left(\varphi\right)-\mu^{N}Q_{0,p}^{N}Q^{\left(n-p\right)}\left(\varphi\right)\right]\\
 &  & +\left(\mu^{N}-\mu^{\prime}\right)Q^{\left(n\right)}\left(\varphi\right).
\end{eqnarray*}
For each term in the big summation we have
\begin{eqnarray*}
 &  & \mathbb{E}_{N}\left[\left.\mu^{N}Q_{0,p+1}^{N}Q^{\left(n-p-1\right)}\left(\varphi\right)-\mu^{N}Q_{0,p}^{N}Q^{\left(n-p\right)}\left(\varphi\right)\right|\mathcal{F}_{p}\right]\\
 &  & =\int\mu^{N}Q_{0,p}^{N}\left(dx_{p}\right)\mathbb{E}\left[\left.Q_{p+1}^{N}Q^{\left(n-p-1\right)}\left(\varphi\right)(x_{p})-Q^{\left(n-p\right)}\left(\varphi\right)(x_{p})\right|\mathcal{F}_{p}\right]\\
 &  & =0,
\end{eqnarray*}
where the final equality is due to (\ref{eq:unbiased_one_step}).
For the remaining term, $\mathbb{E}_{N}\left[\left(\mu^{N}-\mu^{\prime}\right)Q^{\left(n\right)}\left(\varphi\right)\right]=0$
by assumption of the proposition.
\end{proof}

\subsubsection{Path-wise stability}

The following proposition provides a generic result on iterates of
non-negative kernels, which will serve multiple purposes throughout
the remaining proofs in the paper. 
\begin{prop}
\label{prop_generic_stab}Let $\left(K_{n};n\geq1\right)$ be a collection
of possibly random, non-negative integral kernels, and suppose that
for a collection of possibly random, finite measures $\left(\nu_{n};n\geq1\right)$
and positive, bounded functions $\left(S_{n}^{-},S_{n}^{+};n\geq1\right)$,
\begin{equation}
S_{n}^{-}(x)\nu_{n}\left(\cdot\right)\leq K_{n}(x,\cdot)\leq S_{n}^{+}(x)\nu_{n}\left(\cdot\right),\quad\forall x\in\mathsf{X},n\geq1.\label{eq:generic_mixing}
\end{equation}
Then 
\begin{equation}
\sup_{n\geq1}\sup_{x,x^{\prime}\in\mathsf{X}}\frac{K_{0,n}(1)(x)}{K_{0,n}(1)(x^{\prime})}\leq\sup_{n\geq1}\overline{S}_{n},\label{eq:K_ratio_bounded}
\end{equation}
where
\[
\overline{S}_{n}:=\sup_{x,x^{\prime}\in\mathsf{X}}\frac{S_{n}^{+}(x)}{S_{n}^{-}(x^{\prime})}.
\]
Furthermore, for any possibly random probability measure $\eta$ and
$\varphi\in\mathcal{L},$ 
\[
\sup_{x\in\mathsf{X}}\left|\frac{K_{0,n}(\varphi)(x)}{\eta K_{0,n}(1)}-\frac{K_{0,n}(1)(x)}{\eta K_{0,n}(1)}\frac{\eta K_{0,n}(\varphi)}{\eta K_{0,n}(1)}\right|\leq\left\Vert \varphi\right\Vert 2C_{S}\prod_{p=1}^{n}\rho_{p}
\]
 where $\rho_{n}:=1-\left(\inf_{x\in\mathsf{X}}\frac{S{}_{n}^{-}(x)}{S_{n}^{+}(x)}\right)^{2}$
and $C_{S}:=\sup_{n\geq1}\overline{S}_{n}.$\end{prop}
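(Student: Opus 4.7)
The first inequality \eqref{eq:K_ratio_bounded} is immediate from the sandwich hypothesis applied only to the first step in the semigroup: writing $K_{0,n}(1)(x) = K_1(K_{1,n}(1))(x)$ and using $S_1^-(x)\nu_1(\cdot) \le K_1(x,\cdot) \le S_1^+(x)\nu_1(\cdot)$ yields $S_1^-(x)[\nu_1 K_{1,n}(1)] \le K_{0,n}(1)(x) \le S_1^+(x)[\nu_1 K_{1,n}(1)]$, so $K_{0,n}(1)(x)/K_{0,n}(1)(x')$ is bounded by $S_1^+(x)/S_1^-(x') \le \overline{S}_1 \le \sup_{n\ge1}\overline{S}_n = C_S$.

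For the second bound, the plan is to rewrite the quantity inside the supremum as
\[
\frac{K_{0,n}(1)(x)}{\eta K_{0,n}(1)}\,\bigl(f_{0,n}(x) - \eta'(f_{0,n})\bigr),
\]
where $f_{p,n}(z) := K_{p,n}(\varphi)(z)/K_{p,n}(1)(z)$ and $\eta'(dz) := \frac{K_{0,n}(1)(z)}{\eta K_{0,n}(1)}\eta(dz)$ is a probability measure. The prefactor is at most $C_S$ by \eqref{eq:K_ratio_bounded}, and since $\eta'$ is a probability measure, $|f_{0,n}(x) - \eta'(f_{0,n})|$ is dominated by the oscillation $g_0 := \sup_{x,y}|f_{0,n}(x) - f_{0,n}(y)|$. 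Everything thus reduces to establishing $g_0 \le 2\|\varphi\|\prod_{p=1}^n \rho_p$.

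To bound $g_0$ I would proceed by backward induction on $p$ via an antisymmetrisation trick. Define the $(x,y)$-antisymmetric quantity
\[
D_n^{(p)}(x,y) := K_{p,n}(\varphi)(x)\,K_{p,n}(1)(y) - K_{p,n}(\varphi)(y)\,K_{p,n}(1)(x),
\]
so that $|f_{p,n}(x) - f_{p,n}(y)| = |D_n^{(p)}(x,y)|/[K_{p,n}(1)(x) K_{p,n}(1)(y)]$. The semigroup property gives the recursion $D_n^{(p)}(x,y) = \int K_{p+1}(x,du) K_{p+1}(y,dv)\,D_n^{(p+1)}(u,v)$. The critical observation is that any product measure symmetric under the swap $u \leftrightarrow v$ integrates to zero against the antisymmetric $D_n^{(p+1)}$; in particular, the tensorised symmetric minorisation $S_{p+1}^-(x) S_{p+1}^-(y)\nu_{p+1}(du)\nu_{p+1}(dv)$ is dominated by $K_{p+1}(x,du) K_{p+1}(y,dv)$ and may be subtracted without changing the integral, leaving a nonnegative remainder. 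Taking absolute values, bounding $|D_n^{(p+1)}(u,v)| \le g_{p+1} K_{p+1,n}(1)(u) K_{p+1,n}(1)(v)$, and using the upper sandwich $K_{p,n}(1)(z) \le S_{p+1}^+(z)\,\nu_{p+1}(K_{p+1,n}(1))$ to convert the resulting ratios into the lower bound $\epsilon_{p+1} := \inf_x S_{p+1}^-(x)/S_{p+1}^+(x)$ gives $g_p \le (1 - \epsilon_{p+1}^2)\, g_{p+1} = \rho_{p+1}\,g_{p+1}$. Iterating from $g_n \le 2\|\varphi\|$ produces the required product bound.

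The main obstacle is precisely the antisymmetrisation step: a direct Dobrushin argument on the $h$-tilted Markov kernel $K_{p+1}(x,dy) K_{p+1,n}(1)(y)/K_{p,n}(1)(x)$ would deliver only the linear rate $1 - \epsilon_{p+1}$, whereas tensorising the minorisation and exploiting the swap-antisymmetry of $D_n^{(p+1)}$ is what extracts the squared rate $\rho_{p+1} = 1 - \epsilon_{p+1}^2$ claimed in the proposition. As a harmless preliminary one may rescale $\nu_n \to \nu_n/\nu_n(1)$ and $S_n^{\pm} \to S_n^{\pm}\nu_n(1)$, which preserves both $\overline{S}_n$ and $\epsilon_n$ and lets us treat $\nu_n$ as a probability measure throughout, streamlining the sandwich manipulations.
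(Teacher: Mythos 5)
Your proof is correct and is in substance the same argument as the paper's, which attributes the key antisymmetrisation device to Kleptsyna and Veretennikov (\citep{filter:the:KV08}). You write the error as a prefactor (bounded by $C_S$ via (\ref{eq:K_ratio_bounded})) times the deviation of $f_{0,n}$ from a reweighted mean, dominate that deviation by the oscillation $g_0$, and contract $g_p$ by backward induction; the paper instead works with the tensorised composition $K_{0,n}^{\otimes 2}$ and $\widehat{K}_{0,n}:=\widehat{K}_1\cdots\widehat{K}_n$ directly, applying the measure inequality $\widehat{K}_n\leq\rho_n K_n^{\otimes2}$ in one shot. Your $D_n^{(p)}$ is exactly $K_{p,n}^{\otimes 2}(\varphi\otimes 1-1\otimes\varphi)$, the recursion you use is the semigroup identity for the tensor product, and the step where you subtract $S_{p+1}^-(x)S_{p+1}^-(y)\nu_{p+1}^{\otimes2}$ and note it annihilates the swap-antisymmetric integrand is precisely what makes the paper's equality (\ref{eq:veret}) hold; the use of the upper sandwich to pass from the subtracted term to $\epsilon_{p+1}^2$ is the computation behind (\ref{eq:K_hat_bound}). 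So the only genuine difference is presentational: a per-step oscillation induction versus a composed-kernel statement. Two minor remarks: your proof of (\ref{eq:K_ratio_bounded}) actually gives the sharper bound $\overline{S}_1$, which is of course $\leq\sup_n\overline{S}_n$ and thus consistent with the statement; and the normalisation $\nu_n\mapsto\nu_n/\nu_n(1)$ you mention, while harmless, is not needed anywhere in the argument.
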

\begin{rem}
We approach the proof of this proposition using a decomposition idea
of \citet{filter:the:KV08}, a technique which they demonstrated to
be useful in the analysis of non-linear filter stability on non-compact
state-spaces. We won't exploit the full generality of this kind of
decomposition (it is useful under conditions much weaker than \textbf{(H)}
- see for example \citep{filter:the:DFMP09}, again in the filtering
context) and we choose to take this approach because it yields a short
and direct proof, which is sufficient for our purposes. \end{rem}
\begin{proof}
(of Proposition \ref{prop_generic_stab}). The uniform bound of (\ref{eq:K_ratio_bounded})
holds directly under the assumptions of the proposition.

We write $K_{n}^{\otimes2}\left(x,y,d\left(x^{\prime},y^{\prime}\right)\right):=K_{n}\left(x,dx^{\prime}\right)K_{n}\left(y,dy^{\prime}\right)$
and $\nu_{n}^{\otimes2}\left(d\left(x,y\right)\right):=\nu_{n}(dx)\nu_{n}(dy)$.
Under the assumptions of the proposition we have for any $\left(x,y\right)\in\mathsf{X}^{2}$
and measurable $A\subset\mathsf{X}^{2}$ such that $\nu_{n}^{\otimes2}(A)>0$,
\begin{eqnarray}
\widehat{K}_{n}\left(x,y,A\right) & := & K_{n}^{\otimes2}\left(x,y,A\right)-S{}_{n}^{-}(x)S{}_{n}^{-}(y)\nu_{n}^{\otimes2}\left(A\right)\nonumber \\
 & \leq & \left[1-\frac{S{}_{n}^{-}(x)S{}_{n}^{-}(y)}{S_{n}^{+}(x)S_{n}^{+}(y)}\right]K_{n}^{\otimes2}\left(x,y,A\right).\nonumber \\
 & \leq & \rho_{n}K_{n}^{\otimes2}\left(x,y,A\right).\label{eq:K_hat_bound}
\end{eqnarray}
 Furthermore, 
\begin{eqnarray}
 &  & \left|\frac{K_{0,n}(\varphi)(x)}{\eta K_{0,n}(1)}-\frac{K_{0,n}(1)(x)}{\eta K_{0,n}(1)}\frac{\eta K_{0,n}(\varphi)}{\eta K_{0,n}(1)}\right|\nonumber \\
 &  & =\frac{\left|K_{0,n}(\varphi)(x)\eta K_{0,n}(1)-K_{0,n}(1)(x)\eta K_{0,n}(\varphi)\right|}{\eta K_{0,n}(1)\eta K_{0,n}(1)}\nonumber \\
 &  & =\frac{K_{0,n}(1)(x)}{\eta K_{0,n}(1)}\frac{\left|\left(\delta_{x}\otimes\eta\right)K_{0,n}^{\otimes2}\left(\varphi\otimes1-1\otimes\varphi\right)\right|}{\left(\delta_{x}\otimes\eta\right)K_{0,n}^{\otimes2}(1\otimes1)}\nonumber \\
 &  & =\frac{K_{0,n}(1)(x)}{\eta K_{0,n}(1)}\frac{\left|\left(\delta_{x}\otimes\eta\right)\widehat{K}_{0,n}\left(\varphi\otimes1-1\otimes\varphi\right)\right|}{\left(\delta_{x}\otimes\eta\right)K_{0,n}^{\otimes2}(1\otimes1)}\label{eq:veret}\\
 &  & \leq2\left\Vert \varphi\right\Vert \left(\sup_{p\geq1}\overline{S}_{p}\right)\frac{\left(\delta_{x}\otimes\eta\right)\widehat{K}_{0,n}\left(1\otimes1\right)}{\left(\delta_{x}\otimes\eta\right)K_{0,n}^{\otimes2}(1\otimes1)}\nonumber \\
 &  & \leq2\left\Vert \varphi\right\Vert \left(\sup_{p\geq1}\overline{S}_{p}\right)\prod_{p=1}^{n}\rho_{p},\nonumber 
\end{eqnarray}
where the equality in (\ref{eq:veret}) is due to the decomposition
technique of \citet[p. 422]{filter:the:KV08} \citep[see also][Proof of Proposition 12]{filter:the:DFMP09},
and for the final two inequalities (\ref{eq:K_ratio_bounded}) and
(\ref{eq:K_hat_bound}) have been used. 
\end{proof}
Under assumption \textbf{(H)}, we find that the random operators satisfy
path-wise, a regularity condition of a similar form, which is used
below in the Proof of Proposition \ref{prop:intermediate_L_r}.
\begin{lem}
\label{lem:randomized_mixing}The operators $\left(Q_{n}^{N}\right)$
satisfy 
\begin{equation}
\alpha_{n}^{N}(\cdot)\epsilon^{-}\leq Q_{n}^{N}(x,\cdot)\leq\epsilon^{+}\alpha_{n}^{N}(\cdot),\quad\forall x\in\mathsf{X},n\geq1,N\geq1,\label{eq:Q_^N_mixing}
\end{equation}
where $\alpha_{n}^{N}$ is the random finite measure: 
\[
\alpha_{n}^{N}\left(dx\right):=\eta_{n}^{N}(dx)\left[\frac{\mathrm{d}\Phi\left(\eta_{n-1}^{N}\right)}{\mathrm{d}\nu}(x)\right]^{-1},
\]
and $\epsilon^{-},\epsilon^{+}$ are the deterministic constants in
assumption \textbf{(H)}. Moreover for all $x\in\mathsf{X}$ and $p\leq n$,
\[
\frac{\epsilon^{-}}{\epsilon^{+}}\leq h_{p,n}^{N}(x)\leq\frac{\epsilon^{+}}{\epsilon^{-}},
\]
\end{lem}
\begin{proof}
Since $Q(x,\cdot)$ is equivalent to $\nu$, then $\Phi\left(\eta_{n-1}^{N}\right)$
is too, and it is straightforward to check that assumption \textbf{(H)}
implies that $\frac{\mathrm{d}\nu}{\mathrm{d}\Phi\left(\eta_{n-1}^{N}\right)}(x)$
is bounded above and below away from zero in $x$. We then have 
\begin{eqnarray*}
Q_{n}^{N}(x,A) & = & \int_{A}\frac{\mathrm{d}Q(x\cdot)}{\mathrm{d}\Phi\left(\eta_{n-1}^{N}\right)}(x^{\prime})\eta_{n}^{N}(dx^{\prime})\\
 & = & \int_{A}q(x,x^{\prime})\frac{\mathrm{d}\nu}{\mathrm{d}\Phi\left(\eta_{n-1}^{N}\right)}(x^{\prime})\eta_{n}^{N}(dx^{\prime})\\
 & \leq & \epsilon^{+}\int_{A}\frac{\mathrm{d}\nu}{\mathrm{d}\Phi\left(\eta_{n-1}^{N}\right)}(x^{\prime})\eta_{n}^{N}(dx^{\prime}),
\end{eqnarray*}
The proof of the lower bound is similar. The bounds for $h_{p,n}^{N}(x)=Q_{p,n}^{N}(1)(x)/\eta_{p}^{N}Q_{p,n}^{N}(1)$
follow from (\ref{eq:Q_^N_mixing}).
\end{proof}

\begin{proof}
(of Theorem \ref{thm_pathwise}) From Lemma \ref{lem:_random_eigen_equations},
\begin{equation}
\prod_{p=0}^{n-1}\lambda_{p}^{N}=\eta_{0}^{N}Q_{0,n}^{N}\left(1\right),\quad\quad h_{0,n}^{N}=\frac{Q_{0,n}^{N}\left(1\right)}{\eta_{0}^{N}Q_{0,n}^{N}\left(1\right)},\quad\quad\eta_{n}^{N}=\frac{\eta_{0}^{N}Q_{0,n}^{N}}{\eta_{0}^{N}Q_{0,n}^{N}\left(1\right)}.\label{eq:eigen_relations_reminder}
\end{equation}
Thus (\ref{eq:particle_path_met}) holds due to Lemma \ref{lem:randomized_mixing}
and Proposition \ref{prop_generic_stab} applied with $\eta=\eta_{0}^{N}$,
$K_{n}=Q_{n}^{N},$ $\nu_{n}=\alpha_{n}^{N}$ and $S_{n}^{+}=\epsilon^{+},S_{n}^{-}=\epsilon^{-}$
are constant. Dividing through by $\mu^{\prime}\left(h_{0.n}^{N}\right)$
in (\ref{eq:particle_path_met}), again noting (\ref{eq:eigen_relations_reminder})
and using 
\begin{equation}
\sup_{n\ge1}\sup_{x,x^{\prime}\in\mathsf{X}}\frac{Q_{0,n}^{N}(1)(x)}{Q_{0,n}^{N}(1)(x^{\prime})}\leq\frac{\epsilon^{+}}{\epsilon^{-}},\label{eq:QN_ratio_bound}
\end{equation}
which also holds by Proposition \ref{prop_generic_stab}, we establish
(\ref{eq:particle_path_filter_stab}).
\end{proof}

\subsection{Auxiliary results and proof of Theorem \ref{thm:L_r_bounds}}

Consider the collection of ``backward'' random kernels $\left(R_{n}^{N}\right)$
defined by
\begin{eqnarray*}
R_{n}^{N}\left(x,dx^{\prime}\right) & := & \eta_{n-1}^{N}(dx^{\prime})\frac{dQ(x^{\prime},\cdot)}{d\Phi\left(\eta_{n-1}^{N}\right)}(x),\quad n\geq1,
\end{eqnarray*}
and with a slight abuse of convention, write
\[
R_{n,n}^{N}:=Id,\quad\quad R_{n,p}^{N}:=R_{n}^{N}R_{n-1}^{N}\cdots R_{p+1}^{N},\quad p<n.
\]

The interest in these quantities is that, in the context of the $L_{r}$
error estimates which are the focus of this section, they provide
a convenient way to express the functions $\left(h_{p,n}^{N}\right)$
and share path-wise stability properties with $\left(Q_{n}^{N}\right)$.
Indeed by a simple induction it can be shown that for any $\varphi\in\mathcal{L}$,
\begin{equation}
\eta_{n}^{N}R_{n,p}^{N}\left(\varphi\right)=\eta_{p}^{N}\left[\varphi Q_{p,n}^{N}(1)\right],\quad p\leq n.\label{eq:eta_R=00003Deta_Q}
\end{equation}

\begin{rem}
Each kernel $R_{n}^{N}$ is equal, up to a scaling factor of $\eta_{n-1}^{N}(G)$,
to a certain ``backward'' Markov kernel used in the analysis of
\citet{del2010backward}. In contrast to the latter work, we are centrally
concerned with emphasizing the relationship between $\left(Q_{p,n}^{N}\right)$
and the underlying semigroup $\left(Q^{(n)}\right)$. In view of (\ref{eq:eta_R=00003Deta_Q})
and Proposition \ref{prop:unbias}, we therefore prefer to deal with
$\left(R_{n}^{N}\right)$, but only for cosmetic reasons.

The $\left(R_{n}^{N}\right)$ satisfy a condition similar to that
in Lemma \ref{lem:randomized_mixing}, as per the following Lemma.\end{rem}
\begin{lem}
\label{lem:randomized_mixing-1}The operators $\left(R_{n}^{N}\right)$
satisfy 
\[
\eta_{n-1}^{N}(\cdot)\beta_{n}^{N}(x)\epsilon^{-}\leq R_{n}^{N}(x,\cdot)\leq\epsilon^{+}\beta_{n}^{N}(x)\eta_{n-1}^{N}(\cdot),\quad\forall x\in\mathsf{X},n\geq1,N\geq1,
\]
 where $\beta_{n}^{N}$ is the random, positive and bounded function:
\[
\beta_{n}^{N}\left(x\right):=\left[\frac{\mathrm{d}\Phi\left(\eta_{n-1}^{N}\right)}{\mathrm{d}\nu}(x)\right]^{-1},
\]
and $\epsilon^{-},\epsilon^{+}$ are the deterministic constants in
assumption \textbf{(H)}.\end{lem}
\begin{proof}
From definitions,
\begin{eqnarray*}
R_{n}^{N}(x,A) & = & \int_{A}\frac{\mathrm{d}Q(x^{\prime},\cdot)}{\mathrm{d}\Phi\left(\eta_{n-1}^{N}\right)}(x)\eta_{n-1}^{N}(dx^{\prime})\\
 & = & \int_{A}\frac{\mathrm{d}Q(x^{\prime},\cdot)}{\mathrm{d}\nu}\frac{d\nu}{\Phi\left(\eta_{n-1}^{N}\right)}(x)\eta_{n-1}^{N}(dx^{\prime})\\
 & \leq & \epsilon^{+}\frac{\mathrm{d}\nu}{\mathrm{d}\Phi\left(\eta_{n-1}^{N}\right)}(x)\eta_{n-1}^{N}\left(A\right).
\end{eqnarray*}
The claimed positivity and boundedness of $\beta_{n}^{N}$ follows
from \textbf{(H)}. The proof of the lower bound is similar.
\end{proof}
It is well known that under \textbf{(H)} and variations thereof, one
can obtain time-uniform $L_{r}$ estimates for errors of the form
$\eta_{n}^{N}(\varphi)-\eta_{n}\left(\varphi\right)$. We will make
use of the following result, due to \citet[Theorem 7.4.4]{smc:theory:Dm04}.
The proof is omitted.
\begin{prop}
\label{prop_eta_uniform}For any $r\geq1$ there exists a universal
constant $B_{r}$ such that for any $\varphi\in\mathcal{L},$ the
following time uniform estimate holds
\[
\sup_{n\geq0}\mathbb{E}_{N}\left[\left|\eta_{n}^{N}\left(\varphi\right)-\eta_{n}\left(\varphi\right)\right|^{r}\right]^{1/r}\leq2\left\Vert \varphi\right\Vert \frac{B_{r}}{\sqrt{N}}\left(\frac{\epsilon^{+}}{\epsilon^{-}}\right)^{5}.
\]

\end{prop}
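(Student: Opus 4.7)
The plan is to follow the standard telescoping--martingale strategy for interacting particle systems, exploiting the exponential stability of the normalized Feynman--Kac semigroup together with a Marcinkiewicz--Zygmund bound for the local sampling step. Introduce the normalized semigroup $\Phi_{p,n}(\mu)(\varphi) := \mu Q^{(n-p)}(\varphi) / \mu Q^{(n-p)}(1)$, so that $\Phi_{p,n}(\eta_p) = \eta_n$ and $\Phi_{p,p+1} = \Phi$. With the convention $\Phi(\eta_{-1}^N) := \mu$, the starting point is the telescoping decomposition
\begin{equation*}
\eta_n^N(\varphi) - \eta_n(\varphi) \;=\; \sum_{p=0}^{n} \Bigl[ \Phi_{p,n}(\eta_p^N)(\varphi) - \Phi_{p,n}(\Phi(\eta_{p-1}^N))(\varphi) \Bigr],
\end{equation*}
in which the $p$-th summand represents the error introduced at the $p$-th sampling step (since $\Phi(\eta_{p-1}^N)$ is the $\mathcal{F}_{p-1}$-conditional mean of $\eta_p^N$) propagated forward through the semigroup to time $n$.

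Next I would linearize each summand via the identity
\begin{equation*}
\Phi_{p,n}(\mu)(\varphi) - \Phi_{p,n}(\mu')(\varphi) \;=\; \frac{1}{\mu Q^{(n-p)}(1)} \,(\mu - \mu')\!\left( Q^{(n-p)}\!\left[\varphi - \Phi_{p,n}(\mu')(\varphi)\right] \right),
\end{equation*}
applied with $\mu = \eta_p^N$ and $\mu' = \Phi(\eta_{p-1}^N)$. Lemma \ref{lem:Q_n_bounds} controls the ratio of $\mu Q^{(n-p)}(1)$ against $\mu' Q^{(n-p)}(1)$ by $\epsilon^+/\epsilon^-$, while the bracket $Q^{(n-p)}[\varphi - \Phi_{p,n}(\mu')(\varphi)]$ is exactly the type of centered integrated quantity controlled by the two-point decomposition argument used in the proof of Proposition \ref{prop_generic_stab}. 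Applied to the iterated kernel $Q^{(n-p)}$, that argument yields the contraction
\begin{equation*}
\sup_{x \in \mathsf{X}} \left| \frac{Q^{(n-p)}\!\left[\varphi - \Phi_{p,n}(\mu')(\varphi)\right](x)}{\mu' Q^{(n-p)}(1)} \right| \;\leq\; 2 \|\varphi\| \, \tilde{\rho}^{\,n-p} \left( \frac{\epsilon^+}{\epsilon^-} \right)^{2},
\end{equation*}
with $\tilde{\rho} = 1 - (\epsilon^-/\epsilon^+)^2$, so that the influence of the $p$-th sampling step on the time-$n$ error decays exponentially in $n-p$.

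For the local error, conditional on $\mathcal{F}_{p-1}$ the particles $(\zeta_p^i)_{i=1}^N$ are i.i.d.\ with law $\Phi(\eta_{p-1}^N)$, and likewise $(\zeta_0^i)$ are i.i.d.\ $\mu$. Consequently, for any bounded measurable $\psi$, $(\eta_p^N - \Phi(\eta_{p-1}^N))(\psi)$ is a normalized sum of $N$ centered, conditionally i.i.d.\ variables bounded by $\|\psi\|$. The Marcinkiewicz--Zygmund inequality gives a universal constant $B_r$ with
\begin{equation*}
\mathbb{E}_N\!\left[ \left| (\eta_p^N - \Phi(\eta_{p-1}^N))(\psi) \right|^{r} \,\Big|\, \mathcal{F}_{p-1} \right]^{1/r} \;\leq\; \frac{B_r \|\psi\|}{\sqrt{N}}.
\end{equation*}
Applying this to $\psi = Q^{(n-p)}[\varphi - \Phi_{p,n}(\Phi(\eta_{p-1}^N))(\varphi)] / \Phi(\eta_{p-1}^N) Q^{(n-p)}(1)$ (which is $\mathcal{F}_{p-1}$-measurable as a function, up to a factor absorbed into the previous contraction estimate), taking unconditional expectations and using Minkowski's inequality across the telescoping sum produces an $L_r$ bound of geometric-series form $\sum_{p=0}^{n} \tilde{\rho}^{\,n-p}$, which sums to $(1-\tilde{\rho})^{-1} = (\epsilon^+/\epsilon^-)^2$. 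Assembling the constants---one factor of $\epsilon^+/\epsilon^-$ from the ratio in Lemma \ref{lem:Q_n_bounds}, two from the two-point decomposition, and two from summing the geometric series---yields the claimed $(\epsilon^+/\epsilon^-)^5$ prefactor.

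The main obstacle is not conceptual but bookkeeping: one must verify that the bracket $\varphi - \Phi_{p,n}(\mu')(\varphi)$ can indeed be handled by the two-point decomposition uniformly in $\mu'$ (this is where a factor $(\epsilon^+/\epsilon^-)^2$ enters through the uniform bound on $Q_{0,n}^{(n-p)}(1)(x)/Q^{(n-p)}(1)(x')$), and that the dependence of $\psi$ on $\mathcal{F}_{p-1}$ is compatible with the conditional $L_r$ inequality---both are standard but require care to keep the constants sharp enough to match the stated bound.
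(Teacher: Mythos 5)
The paper does not actually prove this proposition: it states ``The proof is omitted'' and cites \citet[Theorem 7.4.4]{smc:theory:Dm04}. So there is no internal proof to compare against; your proposal supplies the standard Del Moral argument that the paper implicitly defers to, and that argument is the right one.

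Your skeleton is sound and matches the usual route: the telescoping decomposition over sampling times $p=0,\dots,n$ with the one-step linearization of $\Phi_{p,n}$, an exponential stability estimate (here supplied by Proposition~\ref{prop_generic_stab} applied to the homogeneous kernel $Q$) to tame the propagated error $\tilde\rho^{\,n-p}$, and a conditional Marcinkiewicz--Zygmund bound for the local sampling fluctuation (using that $\Phi(\eta_{p-1}^N)$ is the $\mathcal{F}_{p-1}$-conditional law of each $\zeta_p^i$), followed by Minkowski across the sum. The telescoping does close because $\Phi_{p,n}\circ\Phi = \Phi_{p-1,n}$, and the linearization identity you state is correct since $\mu'\bigl(Q^{(n-p)}[\varphi-\Phi_{p,n}(\mu')(\varphi)]\bigr)=0$.

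Two points of care in the bookkeeping. First, your claim that the two-point decomposition contributes $(\epsilon^+/\epsilon^-)^2$ overstates it: Proposition~\ref{prop_generic_stab}, applied with $K_p=Q$, $\nu_p=\nu$, $S^\pm=\epsilon^\pm$, gives the centered quantity the bound $2\|\varphi\|\,(\epsilon^+/\epsilon^-)\,\tilde\rho^{\,n-p}$ — one power of $\epsilon^+/\epsilon^-$, not two (the constant is $C_S=\sup_n \bar{S}_n = \epsilon^+/\epsilon^-$). With one power from that step, one from comparing $\eta_p^N Q^{(n-p)}(1)$ to $\Phi(\eta_{p-1}^N)Q^{(n-p)}(1)$ (Lemma~\ref{lem:Q_n_bounds}), and two from the geometric series $\sum\tilde\rho^{\,n-p}\le(1-\tilde\rho)^{-1}=(\epsilon^+/\epsilon^-)^2$, your accounting actually lands on $(\epsilon^+/\epsilon^-)^4$, which is consistent with (indeed better than) the proposition's $(\epsilon^+/\epsilon^-)^5$; the paper's stated constant is whatever Del Moral's Theorem~7.4.4 yields and you need not match it exactly. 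Second, the prefactor $1/\eta_p^N Q^{(n-p)}(1)$ is \emph{not} $\mathcal{F}_{p-1}$-measurable, so before invoking the conditional MZ inequality one must first replace it by $1/\Phi(\eta_{p-1}^N)Q^{(n-p)}(1)$ at the cost of the ratio bound from Lemma~\ref{lem:Q_n_bounds}; you flag this as ``requiring care,'' and it is exactly that one substitution.
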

We need a further definition. Consider now the functions $\left(\phi_{n}\right)$
and their random counterparts $\left(\phi_{n}^{N}\right)$ defined
by 
\[
\phi_{n}\left(x,x^{\prime}\right):=\frac{dQ\left(x,\cdot\right)}{d\eta_{n}Q}\left(x^{\prime}\right),\quad\quad\phi_{n}^{N}\left(x,x^{\prime}\right):=\frac{dQ\left(x,\cdot\right)}{d\eta_{n}^{N}Q}\left(x^{\prime}\right),\quad n\geq0
\]
and note that under \textbf{(H)}, 
\begin{equation}
\sup_{n\geq0}\sup_{x,x^{\prime}}\left|\phi_{n}\left(x,x^{\prime}\right)\right|\leq\frac{\epsilon^{+}}{\epsilon^{-}},\quad\quad\sup_{N\geq1}\sup_{n\geq0}\sup_{x,x^{\prime}}\left|\phi_{n}^{N}\left(x,x^{\prime}\right)\right|\leq\frac{\epsilon^{+}}{\epsilon^{-}}.\label{eq:phi_bound}
\end{equation}
Furthermore, we then have from definitions that
\begin{eqnarray}
h_{p,n}^{N}(x) & = & \frac{Q_{p,n}^{N}(1)(x)}{\eta_{p}^{N}Q_{p,n}^{N}(1)}=\frac{1}{\eta_{p+1}^{N}Q_{p+1,n}^{N}(1)}\int\frac{dQ\left(x,\cdot\right)}{d\eta_{p}^{N}Q}\left(x^{\prime}\right)Q_{p+1,n}^{N}(1)(x^{\prime})\eta_{p+1}^{N}\left(dx^{\prime}\right)\nonumber \\
 &  & =\frac{\eta_{n}^{N}R_{n,p+1}^{N}\left[\phi_{p}^{N}\left(x,\cdot\right)\right]}{\eta_{n}^{N}R_{n,p+1}^{N}\left(1\right)},\label{eq:h_Q_R_id}
\end{eqnarray}
where the final equality is due to (\ref{eq:eta_R=00003Deta_Q}).
\begin{prop}
\label{prop:intermediate_L_r}For any $r\geq1$ there exists a universal
constant $B_{r}$ such that for any $\varphi\in\mathcal{L}$ and $N\geq1$,
\[
\sup_{p\leq n}\sup_{x\in\mathsf{X}}\mathbb{E}_{N}\left[\left|\frac{Q_{p}^{N}(\varphi h_{p,n}^{N})(x)}{\lambda_{p-1}^{N}}-\frac{Q(\varphi h_{p,n})(x)}{\lambda_{p-1}}\right|^{r}\right]^{1/r}\leq2\left\Vert \varphi\right\Vert \frac{B_{r}}{\sqrt{N}}\tilde{C},
\]
where
\[
\tilde{C}=\left[3\left(\frac{\epsilon^{+}}{\epsilon^{-}}\right)^{7}+\left(\frac{\epsilon^{+}}{\epsilon^{-}}\right)^{5}\frac{1}{1-\tilde{\rho}}\right],
\]
and $\tilde{\rho}$ is as in Theorem \ref{thm_pathwise}.\end{prop}
\begin{proof}
(of Proposition \ref{prop:intermediate_L_r}) From the identities
\[
\frac{Q_{p}^{N}(\varphi h_{p,n}^{N})(x)}{\lambda_{p-1}^{N}}=\frac{\eta_{p}^{N}\left[\varphi\phi_{p-1}^{N}\left(x,\cdot\right)Q_{p,n}^{N}(1)\right]}{\eta_{p}^{N}Q_{p,n}^{N}(1)}=\frac{\eta_{n}^{N}R_{n,p}^{N}\left[\varphi\phi_{p-1}^{N}\left(x,\cdot\right)\right]}{\eta_{n}^{N}R_{n,p}^{N}\left(1\right)}
\]
(established similarly to equation (\ref{eq:h_Q_R_id})) and 
\[
\frac{Q(\varphi h_{p,n})}{\lambda_{p-1}}(x)=\frac{\eta_{p}\left[\varphi\phi_{p-1}\left(x,\cdot\right)Q^{(n-p)}(1)\right]}{\eta_{p}Q^{\left(n-p\right)}(1)},
\]
we have the decomposition
\[
\frac{Q_{p}^{N}(\varphi h_{p,n}^{N})(x)}{\lambda_{p-1}^{N}}-\frac{Q(\varphi h_{p,n})(x)}{\lambda_{p-1}}=\sum_{j=1}^{3}T_{p,n}^{N,j}(x)
\]
where
\begin{eqnarray}
T_{p,n}^{N,1}(x) & := & \frac{\eta_{n}^{N}R_{n,p}^{N}\left[\varphi\left(\phi_{p-1}^{N}\left(x,\cdot\right)-\phi_{p-1}\left(x,\cdot\right)\right)\right]}{\eta_{n}^{N}R_{n,p}^{N}\left(1\right)}\label{eq:h_decomp1-1}\\
T_{p,n}^{N,2}(x) & := & \frac{\eta_{n}^{N}R_{n,p}^{N}\left[\varphi\phi_{p-1}\left(x,\cdot\right)\right]}{\eta_{n}^{N}R_{n,p}^{N}\left(1\right)}-\frac{\Phi\left(\eta_{p-1}^{N}\right)\left[\varphi\phi_{p-1}\left(x,\cdot\right)Q^{(n-p)}(1)\right]}{\Phi\left(\eta_{p-1}^{N}\right)Q^{\left(n-p\right)}(1)}\label{eq:h_decomp2-1}\\
T_{p,n}^{N,3}(x) & := & \frac{\Phi\left(\eta_{p-1}^{N}\right)\left[\varphi\phi_{p-1}\left(x,\cdot\right)Q^{(n-p)}(1)\right]}{\Phi\left(\eta_{p-1}^{N}\right)Q^{\left(n-p\right)}(1)}-\frac{\eta_{p}\left[\varphi\phi_{p-1}\left(x,\cdot\right)Q^{(n-p)}(1)\right]}{\eta_{p}Q^{\left(n-p\right)}(1)}.\label{eq:h_decomp3-1}
\end{eqnarray}
For the difference in (\ref{eq:h_decomp1-1}), under \textbf{(H)}
we have
\begin{eqnarray*}
\sup_{x\in\mathsf{X}}\left|T_{p,n}^{N,1}(x)\right| & \leq & \frac{\left\Vert \varphi\right\Vert \epsilon^{+}}{\eta_{n}^{N}R_{n,p}^{N}\left(1\right)}\int\left|\frac{1}{\int\eta_{p-1}^{N}(dy)q\left(y,x^{\prime}\right)}-\frac{1}{\int\eta_{p-1}(dy)q\left(y,x^{\prime}\right)}\right|\eta_{n}^{N}R_{n,p}^{N}\left(dx^{\prime}\right)\\
 & \leq & \frac{\left\Vert \varphi\right\Vert \epsilon^{+}}{\eta_{n}^{N}R_{n,p}^{N}\left(1\right)}\int\left|\frac{\int q\left(y,x^{\prime}\right)\left[\eta_{p-1}(dy)-\eta_{p-1}^{N}(dy)\right]}{\int q\left(y,x^{\prime}\right)\eta_{p-1}^{N}(dy)\int q\left(y,x^{\prime}\right)\eta_{p-1}(dy)}\right|\eta_{n}^{N}R_{n,p}^{N}\left(dx^{\prime}\right)\\
 & \leq & \frac{\left\Vert \varphi\right\Vert }{\left(\epsilon^{-}\right)^{2}}\frac{\epsilon^{+}}{\eta_{n}^{N}R_{n,p}^{N}\left(1\right)}\int\left|\int q\left(y,x^{\prime}\right)\left[\eta_{p-1}(dy)-\eta_{p-1}^{N}(dy)\right]\right|\eta_{n}^{N}R_{n,p}^{N}\left(dx^{\prime}\right)\\
 & \leq & \left\Vert \varphi\right\Vert \frac{\epsilon^{+}}{\left(\epsilon^{-}\right)^{2}}\sup_{x^{\prime}}\left|\int q\left(y,x^{\prime}\right)\left[\eta_{p-1}(dy)-\eta_{p-1}^{N}(dy)\right]\right|,
\end{eqnarray*}
and therefore by Proposition \ref{prop_eta_uniform} and $q\left(y,x^{\prime}\right)\leq\epsilon^{+}$,
\begin{equation}
\sup_{x\in\mathsf{X}}\mathbb{E}_{N}\left[\left|T_{p,n}^{N,1}(x)\right|^{r}\right]^{1/r}\leq2\left\Vert \varphi\right\Vert \frac{B_{r}}{\sqrt{N}}\left(\frac{\epsilon^{+}}{\epsilon^{-}}\right)^{7}.\label{eq:T_1_final-1}
\end{equation}
For the difference in (\ref{eq:h_decomp2-1}), due to the relation
\[
\eta_{p-1}^{N}(dx)Q\left(x,dx^{\prime}\right)=\Phi\left(\eta_{p-1}^{N}\right)\left(dx^{\prime}\right)R_{p}^{N}\left(x^{\prime},dx\right),
\]
we have the telescoping decomposition
\begin{eqnarray}
 &  & T_{p,n}^{N,2}(x)\nonumber \\
 &  & =\frac{\eta_{n}^{N}R_{n,p}^{N}\left[\varphi\phi_{p-1}\left(x,\cdot\right)\right]}{\eta_{n}^{N}R_{n,p}^{N}\left(1\right)}-\frac{\Phi\left(\eta_{p-1}^{N}\right)\left[\varphi\phi_{p-1}\left(x,\cdot\right)Q^{(n-p)}(1)\right]}{\Phi\left(\eta_{p-1}^{N}\right)Q^{\left(n-p\right)}(1)}\nonumber \\
 &  & =\sum_{m=p}^{n}\left[\frac{\eta_{m}^{N}\left[Q^{\left(n-m\right)}(1)R_{m,p}^{N}\left[\varphi\phi_{p-1}\left(x,\cdot\right)\right]\right]}{\eta_{m}^{N}\left[Q^{\left(n-m\right)}(1)R_{m,p}^{N}(1)\right]}-\frac{\Phi\left(\eta_{m-1}^{N}\right)\left[Q^{\left(n-m\right)}(1)R_{m,p}^{N}\left[\varphi\phi_{p-1}\left(x,\cdot\right)\right]\right]}{\Phi\left(\eta_{m-1}^{N}\right)\left[Q^{\left(n-m\right)}(1)R_{m,p}^{N}(1)\right]}\right].\label{eq:summation-1}
\end{eqnarray}
Each term in the summation (\ref{eq:summation-1}) is of the form
\begin{equation}
\frac{\Phi\left(\eta_{m-1}^{N}\right)\left[Q^{\left(n-m\right)}R_{m,p}^{N}\left(1\right)\right]}{\eta_{m}^{N}\left[Q^{\left(n-m\right)}R_{m,p}^{N}\left(1\right)\right]}\left[\eta_{m}^{N}-\Phi\left(\eta_{m-1}^{N}\right)\right]\left[\Delta_{p,n,m}^{(x)}\right],\label{eq:one_term-1}
\end{equation}
where 
\begin{eqnarray*}
\Delta_{p,n,m}^{(x)}(y) & := & \frac{Q^{\left(n-m\right)}(1)(y)R_{m,p}^{N}\left[\varphi\phi_{p}\left(x,\cdot\right)\right](y)}{\Phi\left(\eta_{m-1}^{N}\right)\left[Q^{\left(n-m\right)}(1)R_{m,p}^{N}\left(1\right)\right]}\\
 &  & -\frac{Q^{\left(n-m\right)}(1)(y)R_{m,p}^{N}\left(1\right)(y)}{\Phi\left(\eta_{m-1}^{N}\right)\left[Q^{\left(n-m\right)}(1)R_{m,p}^{N}\left(1\right)\right]}\frac{\Phi\left(\eta_{m-1}^{N}\right)\left[Q^{\left(n-m\right)}(1)R_{m,p}^{N}\left[\varphi\phi_{p-1}\left(x,\cdot\right)\right]\right]}{\Phi\left(\eta_{m-1}^{N}\right)\left[Q^{\left(n-m\right)}(1)R_{m,p}^{N}\left(1\right)\right]}.
\end{eqnarray*}
Defining the map $\Psi_{m,n}:\mathcal{P}\rightarrow\mathcal{P}$ by
$\Psi_{m,n}(\eta)(A):=\dfrac{\eta\left[Q^{\left(n-m\right)}(1)\mathbb{I}_{A}\right]}{\eta Q^{\left(n-m\right)}(1)}$,
for $A\in\mathcal{B}$, we have

\begin{eqnarray*}
 &  & \sup_{x,y}\left|\Delta_{p,n,m}^{(x)}(y)\right|\\
 & \leq & \sup_{y}\frac{Q^{\left(n-m\right)}(1)(y)}{\Phi\left(\eta_{m-1}^{N}\right)\left[Q^{\left(n-m\right)}(1)\right]}\\
 &  & \times\sup_{x,y}\left|\frac{R_{m,p}^{N}\left[\varphi\phi_{p-1}\left(x,\cdot\right)\right](y)}{\Psi_{m,n}\left[\Phi\left(\eta_{m-1}^{N}\right)\right]\left[R_{m,p}^{N}\left(1\right)\right]}-\frac{R_{m,p}^{N}\left(1\right)(y)}{\Psi_{m,n}\left[\Phi\left(\eta_{m-1}^{N}\right)\right]\left[R_{m,p}^{N}\left(1\right)\right]}\frac{\Psi_{m,n}\left[\Phi\left(\eta_{m-1}^{N}\right)\right]R_{m,p}^{N}\left[\varphi\phi_{p-1}\left(x,\cdot\right)\right]}{\Psi_{m,n}\left[\Phi\left(\eta_{m-1}^{N}\right)\right]\left[R_{m,p}^{N}\left(1\right)\right]}\right|\\
 & \leq & \left\Vert \varphi\right\Vert \tilde{\rho}^{m-p}2\left(\frac{\epsilon^{+}}{\epsilon^{-}}\right)^{3}.
\end{eqnarray*}
where the inequality is due to Lemma \ref{lem:Q_n_bounds}, the bound
of (\ref{eq:phi_bound}) and then Lemma \ref{lem:randomized_mixing-1}
and Proposition \ref{prop_generic_stab} applied to the sequence of
kernels $R_{m}^{N},R_{m-1}^{N},\ldots,R_{p+1}^{N}$ with $\eta=\Psi_{m,n}\left[\Phi\left(\eta_{m-1}^{N}\right)\right]$,
and $\tilde{\rho}$ is as in Theorem \ref{thm_pathwise}. Then returning
to (\ref{eq:summation-1})-(\ref{eq:one_term-1}), and noting that
$\Delta_{p,n,m}^{(x)}(y)$ is measurable w.r.t. to $\mathcal{F}_{m-1}$,
we have by an application of \citet[Lemma 7.3.3.]{smc:theory:Dm04}
\begin{equation}
\sup_{x\in\mathsf{X}}\mathbb{E}_{N}\left[\left|T_{p,n}^{N,2}(x)\right|^{r}\right]^{1/r}\leq2\left\Vert \varphi\right\Vert \frac{B_{r}}{\sqrt{N}}\left(\frac{\epsilon^{+}}{\epsilon^{-}}\right)^{5}\sum_{m=p}^{n}\tilde{\rho}^{m-p}\leq2\left\Vert \varphi\right\Vert \frac{B_{r}}{\sqrt{N}}\left(\frac{\epsilon^{+}}{\epsilon^{-}}\right)^{5}\frac{1}{1-\tilde{\rho}}.\label{eq:T_2_final-1}
\end{equation}
where the bound of Proposition \ref{prop_generic_stab} in equation
(\ref{eq:K_ratio_bounded}) has been applied to the left factor in
(\ref{eq:one_term-1}).

It remains to consider $T_{p,n}^{N,3}(x)$, and we do so using the
decomposition:
\begin{eqnarray}
\left|T_{p,n}^{N,3}(x)\right| & = & \left|\frac{\Phi\left(\eta_{p-1}^{N}\right)\left[\varphi\phi_{p-1}\left(x,\cdot\right)Q^{(n-p)}(1)\right]}{\Phi\left(\eta_{p-1}^{N}\right)Q^{\left(n-p\right)}(1)}-\frac{\eta_{p}\left[\varphi\phi_{p-1}\left(x,\cdot\right)Q^{(n-p)}(1)\right]}{\eta_{p}Q^{\left(n-p\right)}(1)}\right|\nonumber \\
 & \leq & \left\Vert \varphi\right\Vert \frac{\eta_{p-1}^{N}Q\left[\phi_{p-1}(x,\cdot)Q^{\left(n-p\right)}(1)\right]}{\eta_{p-1}^{N}Q^{\left(n-p+1\right)}(1)}\frac{\left|\left(\eta_{p-1}-\eta_{p-1}^{N}\right)Q^{\left(n-p+1\right)}(1)\right|}{\eta_{p-1}Q^{\left(n-p+1\right)}(1)}\nonumber \\
 &  & +\left\Vert \varphi\right\Vert \frac{\left|\left(\eta_{p-1}^{N}-\eta_{p-1}\right)Q\left[\phi_{p-1}(x,\cdot)Q^{\left(n-p\right)}(1)\right]\right|}{\eta_{p-1}Q^{\left(n-p+1\right)}(1)}.\label{eq:T3_decomp}
\end{eqnarray}
Now note that due to Lemma \ref{lem:Q_n_bounds} and the bound of
(\ref{eq:phi_bound}), 
\begin{eqnarray}
\sup_{x,y}\frac{Q\left[\phi_{p-1}(x,\cdot)Q^{\left(n-p\right)}(1)\right](y)}{\eta_{p-1}Q^{\left(n-p+1\right)}(1)} & \leq & \sup_{x,x^{\prime}}\left|\phi_{p-1}(x,x^{\prime})\right|\sup_{y}\frac{Q^{\left(n-p+1\right)}(1)(y)}{\eta_{p-1}Q^{\left(n-p+1\right)}(1)}\nonumber \\
 & \leq & \left(\frac{\epsilon^{+}}{\epsilon^{-}}\right)^{2},\label{eq:T3_prelim_bound}
\end{eqnarray}
and the same bound holds with $\eta_{p-1}^{N}$ in place of $\eta_{p-1}$.
Then Proposition \ref{prop_eta_uniform} combined with (\ref{eq:T3_prelim_bound})
may be applied to each of the terms in (\ref{eq:T3_decomp}) to yield:
\begin{equation}
\sup_{x\in\mathsf{X}}\mathbb{E}_{N}\left[\left|T_{p,n}^{N,3}(x)\right|^{r}\right]^{1/r}\leq\left\Vert \varphi\right\Vert \frac{B_{r}}{\sqrt{N}}4\left(\frac{\epsilon^{+}}{\epsilon^{-}}\right)^{7}.\label{eq:T_3_final-1}
\end{equation}
Combining (\ref{eq:T_1_final-1}), (\ref{eq:T_2_final-1}) and (\ref{eq:T_3_final-1})
completes the proof.\end{proof}
\begin{rem}
The treatment of the term $T_{p,n}^{N,2}$ in the proof uses some
arguments from \citep[Proof of Theorem 3.2]{del2010backward}, with
variations customized to the present context.\end{rem}
\begin{proof}
(of Theorem \ref{thm:L_r_bounds}) Consider the decomposition
\begin{eqnarray}
h_{p,n}^{N}(x)-h_{\star}(x) & = & \frac{Q_{p+1}^{N}(h_{p+1,n}^{N})(x)}{\lambda_{p}^{N}}-\frac{Q(h_{p+1,n})(x)}{\lambda_{p}}\nonumber \\
 &  & +h_{p,n}(x)-h_{\star}(x).\label{eq:h_thm_decomp}
\end{eqnarray}
The first difference on the r.h.s. of (\ref{eq:h_thm_decomp}) is
dealt with using Proposition \ref{prop:intermediate_L_r} applied
with $\varphi=1$. For the other difference, we have that by Proposition
\ref{prop:_h_n_h_star_bound}, 
\begin{equation}
\sup_{x\in\mathsf{X}}\left|h_{p,n}(x)-h_{\star}(x)\right|\leq C_{h}\rho^{(n-p)\wedge p}.\label{eq:T_4_final}
\end{equation}

To prove (\ref{eq:P_n_Lp}), consider the decomposition:
\[
P_{(p,n)}^{N}\left(x,A\right)-P_{\star}\left(x,A\right)=\Xi_{1}(x,A)+\Xi_{2}(x,A)+\Xi_{3}(x,A)
\]
where
\begin{eqnarray}
\Xi_{1}(x,A) & := & \frac{1}{h_{p-1,n}^{N}(x)}\left[\frac{Q_{p}^{N}(h_{p,n}^{N}\mathbb{I}_{A})(x)}{\lambda_{p-1}^{N}}-\frac{Q(h_{p,n}\mathbb{I}_{A})(x)}{\lambda_{p-1}}\right]\label{eq:P_decomp1}\\
\Xi_{2}(x,A) & := & \frac{Q(h_{p,n}\mathbb{I}_{A})(x)}{\lambda_{p-1}}\left[\frac{1}{h_{p-1,n}^{N}(x)}-\frac{1}{h_{p-1,n}(x)}\right]\label{eq:P_decomp2}\\
\Xi_{3}(x,A) & := & P_{(p,n)}(x,A)-P_{\star}(x,A).\label{eq:P_decomp3}
\end{eqnarray}
For the first term,
\begin{eqnarray*}
\mathbb{E}_{N}\left[\left|\Xi_{1}(x,A)\right|^{r}\right]^{1/r} & \leq & \frac{\epsilon^{+}}{\epsilon^{-}}\mathbb{E}\left[\left|\frac{Q_{p}^{N}(h_{p,n}^{N}\mathbb{I}_{A})(x)}{\lambda_{p-1}^{N}}-\frac{Q(h_{p,n}\mathbb{I}_{A})(x)}{\lambda_{p-1}}\right|^{r}\right]^{1/r}\\
 & \leq & 2\frac{\epsilon^{+}}{\epsilon^{-}}\frac{B_{r}}{\sqrt{N}}\tilde{C},
\end{eqnarray*}
where the first inequality uses the a lower bounds on $h_{p-1,n}^{N}(x)$
from Lemma \ref{lem:randomized_mixing} and the second inequality
is due to Proposition \ref{prop:intermediate_L_r} applied with $\varphi=\mathbb{I}_{A}$. 

We also have
\begin{eqnarray*}
\mathbb{E}_{N}\left[\left|\Xi_{2}(x,A)\right|^{r}\right]^{1/r} & \leq & \frac{\epsilon^{+}}{\epsilon^{-}}\frac{Q(h_{p,n}\mathbb{I}_{A})(x)}{\lambda_{p-1}h_{p-1,n}(x)}\mathbb{E}\left[\left|h_{p-1,n}(x)-h_{p-1,n}^{N}(x)\right|^{r}\right]^{1/r}\\
 & \leq & 2\frac{\epsilon^{+}}{\epsilon^{-}}\frac{B_{r}}{\sqrt{N}}\tilde{C},
\end{eqnarray*}
where for the first inequality the lower bound on $h_{p-1,n}^{N}(x)$
from Lemma \ref{lem:randomized_mixing} has been again be used, the
second inequality is due Lemma \ref{lem:approx_eigen} and Proposition
\ref{prop:intermediate_L_r} applied with $\varphi=1$. The term $\Xi_{3}$
is dealt with using Proposition \ref{prop:_h_n_h_star_bound} and
that completes the proof.
\end{proof}

\subsection{Proofs of Propositions \ref{prop:twsited_sampling} and \ref{prop:chi_square_bound}}
\begin{proof}
(of Proposition \ref{prop:twsited_sampling}) From (\ref{eq:twisted_sampling})
and the definition of $P_{(n+p,2n)}^{N}$ , for any $x_{0}\in\mathsf{X},$
\begin{eqnarray}
 &  & \mathbb{E}_{N}\left[\mathbb{E}_{N}\left[\left.F(X_{0:m})\frac{h_{n,2n}^{N}(X_{0})}{h_{n+m,2n}^{N}(X_{m})}\prod_{p=0}^{m-1}\frac{\lambda_{n+p}^{N}}{G_{\alpha}(X_{p})}\right|\mathcal{F}_{2n}\right]\right]\nonumber \\
 &  & =\mathbb{E}_{N}\left[\int_{\mathsf{X}^{m+1}}F(x_{0:m})\frac{h_{n,2n}^{N}(x_{0})}{h_{n+m,2n}^{N}(x_{m})}\prod_{p=1}^{m}\frac{\lambda_{n+p-1}^{N}}{G_{\alpha}(x_{p-1})}P_{(n+p,2n)}^{N}(x_{p-1,}dx_{p})\right]\nonumber \\
 &  & =\mathbb{E}_{N}\left[\int_{\mathsf{X}^{m+1}}F(x_{0:m})\prod_{p=1}^{m}\frac{1}{G_{\alpha}(x_{p-1})}Q_{n+p,2n}^{N}(x_{p-1,}dx_{p})\right]\nonumber \\
 &  & =\mathbb{E}_{N}\left[\int_{\mathsf{X}^{m+1}}F(x_{0:m})\prod_{p=1}^{m}\frac{\mathrm{d}M(x_{p-1},\cdot)}{\mathrm{d}\Phi(\eta_{n+p-1}^{N})}(x_{p})\eta_{n+p}^{N}(dx_{p})\right].\label{eq:sample_twist_conditional}
\end{eqnarray}
where $\mathcal{F}_{2n}$ is the $\sigma$-algebra generated by the
particle system at time $2n$. We will proceed to decompose the difference
between (\ref{eq:sample_twist_conditional}) and $\pi_{m}\left(\delta\right)$. 

For $\ell=1,...,m,$ define $F_{\ell}$ by
\[
F_{m}(x_{0:m}):=F(x_{0:m}),\quad\quad F_{\ell}(x_{0:\ell}):=\int_{\mathsf{X}}F_{\ell+1}(x_{0:\ell+1})M(x_{\ell},dx_{\ell+1}),\quad\ell=1,...,m-1,
\]
and observe that then
\begin{equation}
M(F_{1})(x)=\mathbb{E}_{x}\left[F(X_{0:m})\right].\label{eq:F_0}
\end{equation}
For any $\ell=0,...,m$, and $x_{0}\in\mathsf{X}$ , define
\begin{equation}
\overline{F}_{0}^{N}(x_{0}):=M(F_{1})(x_{0}),\quad\quad\overline{F}_{\ell}^{N}(x_{0}):=\int_{\mathsf{X}^{\ell}}F_{\ell}\left(x_{1:\ell}\right)\prod_{p=1}^{\ell}\frac{\mathrm{d}M(x_{p-1},\cdot)}{\mathrm{d}\Phi(\eta_{n+p-1}^{N})}(x_{p})\eta_{n+p}^{N}(dx_{p}),\quad\ell=1,...,m.\label{eq:F-bar_defn}
\end{equation}
Then for any $\ell=2,...,m$, 
\begin{eqnarray}
 &  & \mathbb{E}_{N}\left[\left.\overline{F}_{\ell}^{N}(x_{0})\right|\mathcal{F}_{n+\ell-1}\right]\nonumber \\
 &  & =\int_{\mathsf{X}^{\ell-1}}\prod_{p=1}^{\ell-1}\frac{\mathrm{d}M(x_{p-1},\cdot)}{\mathrm{d}\Phi(\eta_{n+p-1}^{N})}(x_{p})\eta_{n+p}^{N}(dx_{p})\mathbb{E}_{N}\left[\left.\int_{\mathsf{X}}F_{\ell}(x_{1:\ell})\frac{dM(x_{\ell-1},\cdot)}{d\Phi(\eta_{n+\ell-1}^{N})}(x_{\ell})\eta_{n+\ell}^{N}(dx_{\ell})\right|\mathcal{F}_{n+\ell-1}\right]\nonumber \\
 &  & =\int_{\mathsf{X}^{\ell-1}}\prod_{p=1}^{\ell-1}\frac{\mathrm{d}M(x_{p-1},\cdot)}{\mathrm{d}\Phi(\eta_{n+p-1}^{N})}(x_{p})\eta_{n+p}^{N}(dx_{p})\int_{\mathsf{X}}F_{\ell}(x_{1:\ell})M(x_{\ell-1},dx_{\ell})\nonumber \\
 &  & =\int_{\mathsf{X}^{\ell-1}}F_{\ell-1}(x_{1:\ell-1})\prod_{p=1}^{\ell-1}\frac{dM(x_{p-1},\cdot)}{d\Phi(\eta_{n+p-1}^{N})}(x_{p})\eta_{n+p}^{N}(dx_{p})=\overline{F}_{\ell-1}^{N}(x_{0}),\label{eq:one_step}
\end{eqnarray}
and a similar manipulation shows 
\begin{equation}
\mathbb{E}_{N}\left[\left.\overline{F}_{1}^{N}(x_{0})\right|\mathcal{F}_{n}\right]=\overline{F}_{0}^{N}(x_{0}).\label{eq:F_bar_first_step.}
\end{equation}
We then have that 
\[
\mathbb{E}_{N}\left[\overline{F}_{m}^{N}(x_{0})\right]-\mathbb{E}_{x_{0}}\left[F(X_{0:m})\right]=\sum_{\ell=1}^{m}\mathbb{E}_{N}\left[\overline{F}_{\ell}^{N}(x_{0})-\overline{F}_{\ell-1}^{N}(x_{0})\right]=0,
\]
where (\ref{eq:F_0}), (\ref{eq:one_step}), (\ref{eq:F_bar_first_step.})
and (\ref{eq:F-bar_defn}) have been applied. But $\overline{F}_{m}^{N}(x_{0})$
is just what appears inside the expectation (\ref{eq:sample_twist_conditional}),
so the proof is complete.\end{proof}
\begin{lem}
\label{lem:prod_lambda_bound} Assume \textbf{(H)} and let $\mathbb{E}_{N}$
denote the expectation w.r.t. the joint law of the particle system
and $(X_{p})$ sampled according to (\ref{eq:twisted_sampling}).
There exists a finite constant $C$ such that for all $m\geq1$, $N\geq1$,
\[
\sup_{n\geq0}\mathbb{E}_{N}\left[\left(\prod_{p=0}^{m-1}\frac{\lambda_{n+p}^{N}}{\lambda_{n+p}}-1\right)^{2}\right]\leq\left(1+\frac{C}{\sqrt{N}}\right)\left[\left(1+\frac{C}{N}\right)^{m}-1\right]
\]
\end{lem}
\begin{proof}
Throughout the proof $C$ denotes a finite constant which is independent
of $m$, $n$ and $N$, but whose value may change on each appearance.
From hereon $m\geq1$, $N\geq1$ and $n\geq0$ are fixed to arbitrary
values.

For $1\leq p\leq m$, consider the decomposition
\[
\prod_{q=0}^{p-1}\frac{\lambda_{n+q}^{N}}{\lambda_{n+q}}-1=\sum_{q=0}^{p}\Delta_{p,q}
\]
where
\begin{eqnarray*}
\Delta_{p,0} & := & \left[\eta_{n}^{N}-\eta_{n}\right]\frac{Q^{(p)}(1)}{\eta_{n}Q^{(p)}(1)}\\
\Delta_{p,q} & := & \left(\prod_{r=0}^{q-1}\frac{\lambda_{n+r}^{N}}{\lambda_{n+r}}\right)\left[\eta_{n+q}^{N}-\frac{\eta_{n+q-1}^{N}Q}{\lambda_{n+q-1}^{N}}\right]\frac{Q^{(p-q)}(1)}{\eta_{n+q}Q^{(p-q)}(1)},\quad1\leq q\leq p.
\end{eqnarray*}
Note that by Lemma \ref{lem:Q_n_bounds}, $\sup_{p}\sup_{x}Q^{(p)}(1)(x)/\eta_{n}Q^{(p)}(1)\leq\epsilon^{+}/\epsilon^{-}$,
so by Proposition \ref{prop_eta_uniform},

\[
\sup_{p}\left|\mathbb{E}_{N}\left[\Delta_{p,0}\right]\right|\leq\frac{C}{\sqrt{N}},\quad\quad\sup_{p}\mathbb{E}_{N}\left[\left|\Delta_{p,0}\right|^{2}\right]\leq\frac{C}{N}.
\]
Also note that 
\[
\eta_{n+q}^{N}-\frac{\eta_{n+q-1}^{N}Q}{\lambda_{n+q-1}^{N}}=\eta_{n+q}^{N}-\Phi(\eta_{n+q-1}^{N})
\]
and recall that given $\mathcal{F}_{n+q-1}$, $(\zeta_{n+q}^{i})_{i=1}^{N}$
are conditionally i.i.d. draws from $\Phi(\eta_{n+p-1}^{N})$. Therefore
\[
\mathbb{E}\left[\left.\Delta_{p,q}\right|\mathcal{F}_{2n}\right]=0\quad\text{ and}\quad\mathbb{E}\left[\left.\Delta_{p,q}\Delta_{p,l}\right|\mathcal{F}_{2n}\right]=0,\quad1\leq q<l\leq p,
\]
so
\[
\mathbb{E}_{N}\left[\prod_{q=0}^{p-1}\frac{\lambda_{n+q}^{N}}{\lambda_{n+q}}-1\right]=\mathbb{E}_{N}\left[\Delta_{p,0}\right].
\]
Collecting the above and adopting the convention $\prod_{r=0}^{-1}\frac{\lambda_{n+r}^{N}}{\lambda_{n+r}}=1$,
we have
\begin{eqnarray*}
\mathbb{E}_{N}\left[\left(\prod_{q=0}^{p-1}\frac{\lambda_{n+q}^{N}}{\lambda_{n+q}}-1\right)^{2}\right] & = & \sum_{q=0}^{p}\mathbb{E}_{N}\left[(\Delta_{p,q})^{2}\right]\\
 & \leq & \frac{C}{N}\sum_{q=0}^{p-1}\mathbb{E}_{N}\left[\left(\prod_{r=0}^{q-1}\frac{\lambda_{n+r}^{N}}{\lambda_{n+r}}\right)^{2}\right]\\
 & = & \frac{C}{N}\sum_{q=0}^{p-1}\mathbb{E}_{N}\left[\left(\prod_{r=0}^{q-1}\frac{\lambda_{n+r}^{N}}{\lambda_{n+r}}-1+1\right)^{2}\right]\\
 & \leq & \frac{C}{N}\sum_{q=0}^{p-1}\left(\mathbb{E}_{N}\left[\left(\prod_{r=0}^{q-1}\frac{\lambda_{n+r}^{N}}{\lambda_{n+r}}-1\right)^{2}\right]+1+2\left|\mathbb{E}\left[\Delta_{q,0}\right]\right|\right)\\
 & \leq & \frac{C}{N}\sum_{q=0}^{p-1}\left(\mathbb{E}_{N}\left[\left(\prod_{r=0}^{q-1}\frac{\lambda_{n+r}^{N}}{\lambda_{n+r}}-1\right)^{2}\right]+1+\frac{C}{\sqrt{N}}\right).
\end{eqnarray*}
With the shorthand
\[
a_{p}:=\mathbb{E}_{N}\left[\left(\prod_{q=0}^{p-1}\frac{\lambda_{n+q}^{N}}{\lambda_{n+q}}-1\right)^{2}\right],\quad0\leq p\leq m,\quad b:=1+\frac{C}{\sqrt{N}},
\]
we have so far established
\begin{equation}
a_{0}=0,\quad\quad a_{p}\leq\frac{C}{N}\sum_{q=0}^{p-1}(a_{q}+b),\quad1\leq p\leq m.\label{eq:a_p_recursion}
\end{equation}
We claim that solving this recursion gives
\begin{equation}
a_{p}\leq b\left[\left(1+\frac{C}{N}\right)^{p}-1\right].\label{eq:a_p_sol}
\end{equation}
Indeed (\ref{eq:a_p_sol}) holds with $p=0$ since $a_{0}=0$ by definition,
and when (\ref{eq:a_p_sol}) holds at ranks less than or equal to
$p$, (\ref{eq:a_p_recursion}) gives 
\begin{eqnarray*}
a_{p+1} & \leq & \frac{C}{N}\sum_{q=0}^{p}\left(b\left[\left(1+\frac{C}{N}\right)^{q}-1\right]+b\right)\\
 & = & b\frac{C}{N}\frac{\left(1+\frac{C}{N}\right)^{p+1}-1}{\left(1+\frac{C}{N}\right)-1}=b\left[\left(1+\frac{C}{N}\right)^{p+1}-1\right].
\end{eqnarray*}
The proof is complete since (\ref{eq:a_p_sol}) with $p=m$ is the
bound in the statement of the lemma.\end{proof}
\begin{lem}
\label{lem:h_bounds}Assume the assumptions of Lemma \ref{lem:prod_lambda_bound}
hold and in addition that $\mathsf{X}$ is a finite set. There exists
a finite constant $C$ such that for all $1\leq m\leq n$ and $N\geq1$,
\begin{eqnarray*}
\left|\prod_{p=0}^{m-1}\frac{\lambda_{n+p}}{\lambda_{\star}}-1\right| & \leq & \left(1-\frac{\epsilon^{-}}{\epsilon^{+}}\right)^{n}C\\
\mathbb{E}_{N}\left[\left(\frac{h_{\star}(X_{m})}{h_{n+m,2n}^{N}(X_{m})}-1\right)^{2}\right]^{1/2} & \leq & C\left[\frac{1}{\sqrt{N}}+\left(1-\frac{\epsilon^{-}}{\epsilon^{+}}\right)^{n-m}\right]\mathrm{card}(\mathsf{X})\\
\mathbb{E}_{N}\left[\left(\frac{h_{n,2n}^{N}(X_{0})}{h_{\star}(X_{0})}-1\right)^{2}\right]^{1/2} & \leq & C\left[\frac{1}{\sqrt{N}}+\left(1-\frac{\epsilon^{-}}{\epsilon^{+}}\right)^{n}\right]
\end{eqnarray*}
\end{lem}
\begin{proof}
By Proposition \ref{prop:_h_n_h_star_bound},
\begin{eqnarray*}
\left|\prod_{p=0}^{m-1}\frac{\lambda_{n+p}}{\lambda_{\star}}-1\right| & = & \left|\frac{\eta_{n}Q^{(m)}(1)}{\eta_{\star}Q^{(m)}(1)}-1\right|=\left|\left[\eta_{n}-\eta_{\star}\right]\frac{Q^{(m)}(1)}{\eta_{\star}Q^{(m)}(1)}\right|\leq\left(1-\frac{\epsilon^{-}}{\epsilon^{+}}\right)^{n}C_{\eta}\frac{\epsilon^{+}}{\epsilon^{-}}.
\end{eqnarray*}

For the second inequality in the statement, using Lemma \ref{lem:randomized_mixing}
and noting that by assumption $\mathsf{X}$ is a finite set, we have
\begin{equation}
\left|\frac{h_{\star}(X_{m})}{h_{n+m,2n}^{N}(X_{m})}-1\right|\leq\max_{x\in\mathsf{X}}\left|\frac{h_{\star}(x)}{h_{n+m,2n}^{N}(x)}-1\right|\leq\frac{\epsilon^{+}}{\epsilon^{-}}\sum_{x\in\mathsf{X}}\left|h_{\star}(x)-h_{n+m,2n}^{N}(x)\right|.\label{eq:pathwise_h_diff}
\end{equation}
Theorem \ref{thm:L_r_bounds} together with Minkowski's inequality
applied to (\ref{eq:pathwise_h_diff}) gives the desired bound. The
third inequality is proved similarly, except that under (\ref{eq:twisted_sampling})
$X_{0}=x$ a.s., hence
\[
\left|\frac{h_{\star}(X_{m})}{h_{n+m,2n}^{N}(X_{m})}-1\right|=\left|\frac{h_{\star}(x)}{h_{n+m,2n}^{N}(x)}-1\right|,\quad a.s.
\]

\end{proof}

\begin{proof}
(of Proposition \ref{prop:chi_square_bound}) Throughout the proof
$m$, $N$ and $n$ are fixed. Define

\[
W:=\frac{h_{n,2n}^{N}(X_{0})}{h_{\star}(X_{0})}\frac{h_{\star}(X_{m})}{h_{n+m,2n}^{N}(X_{m})}\prod_{p=0}^{m-1}\frac{\lambda_{n+p}^{N}}{\lambda_{\star}},
\]
so that
\[
\frac{\mathrm{d}\overline{\mathbb{P}}_{x}}{\mathrm{d}\overline{\mathbb{P}}_{x}^{N,n}}(X_{0},\ldots,X_{m})=\mathbb{E}_{N}\left[\left.W\right|X_{0},\ldots,X_{m}\right].
\]
For the result of the Proposition we need to bound $\mathbb{E}_{N}\left[\mathbb{E}_{N}\left[\left.W-1\right|X_{0},\ldots,X_{m}\right]^{2}\right]$
by the r.h.s. of (\ref{eq:chi_square_bound}). By the conditional
Jensen's inequality, it is sufficient to show that the same upper
bound holds for $\mathbb{E}_{N}\left[(W-1)^{2}\right]$.

Consider the decomposition $W-1=\sum_{i=1}^{4}W_{i}$ where 
\begin{eqnarray*}
W_{1} & := & \frac{h_{n,2n}^{N}(X_{0})}{h_{\star}(X_{0})}\frac{h_{\star}(X_{m})}{h_{n+m,2n}^{N}(X_{m})}\left(\prod_{p=0}^{m-1}\frac{\lambda_{n+p}}{\lambda_{\star}}\right)\left(\prod_{p=0}^{m-1}\frac{\lambda_{n+p}^{N}}{\lambda_{n+p}}-1\right),\\
W_{2} & := & \frac{h_{n,2n}^{N}(X_{0})}{h_{\star}(X_{0})}\frac{h_{\star}(X_{m})}{h_{n+m,2n}^{N}(X_{m})}\left(\prod_{p=0}^{m-1}\frac{\lambda_{n+p}}{\lambda_{\star}}-1\right),\\
W_{3} & := & \frac{h_{n,2n}^{N}(X_{0})}{h_{\star}(X_{0})}\left(\frac{h_{\star}(X_{m})}{h_{n+m,2n}^{N}(X_{m})}-1\right),\\
W_{4} & := & \frac{h_{n,2n}^{N}(X_{0})}{h_{\star}(X_{0})}-1.
\end{eqnarray*}
By (\ref{eq:eta_in_P_h_bound}) and Lemma \ref{lem:randomized_mixing}
\begin{equation}
\sup_{x}\frac{h_{n,2n}^{N}(x)}{h_{\star}(x)}\vee\frac{h_{\star}(x)}{h_{n+m,2n}^{N}(x_{m})}\leq\left(\frac{\epsilon^{+}}{\epsilon^{-}}\right)^{2}.\label{eq:H_h_N_uniform_bounds}
\end{equation}
 Since
\[
\prod_{p=0}^{m-1}\frac{\lambda_{n+p}}{\lambda_{\star}}=\frac{\eta_{n}Q^{(m)}(1)}{\eta_{\star}Q^{(m)}(1)}\leq\frac{\epsilon^{+}}{\epsilon^{-}},
\]
Lemma \ref{lem:prod_lambda_bound} gives
\[
\mathbb{E}_{N}\left[(W_{1})^{2}\right]^{1/2}\leq\left(\frac{\epsilon^{+}}{\epsilon^{-}}\right)^{5}\mathbb{E}_{N}\left[\left(\prod_{p=0}^{m-1}\frac{\lambda_{n+p}^{N}}{\lambda_{n+p}}-1\right)^{2}\right]^{1/2}\leq C\left(1+\frac{C}{\sqrt{N}}\right)^{1/2}\left[\left(1+\frac{C}{N}\right)^{m}-1\right]^{1/2}.
\]
Lemma \ref{lem:h_bounds} and (\ref{eq:H_h_N_uniform_bounds}) give
\begin{eqnarray*}
\mathbb{E}_{N}\left[(W_{2})^{2}\right]^{1/2} & \leq & \left(\frac{\epsilon^{+}}{\epsilon^{-}}\right)^{4}\left(\prod_{p=0}^{m-1}\frac{\lambda_{n+p}}{\lambda_{\star}}-1\right)\leq C\left(1-\frac{\epsilon^{-}}{\epsilon^{+}}\right)^{n},
\end{eqnarray*}

\begin{eqnarray*}
\mathbb{E}_{N}\left[(W_{3})^{2}\right]^{1/2} & \leq & \left(\frac{\epsilon^{+}}{\epsilon^{-}}\right)^{2}\mathbb{E}_{N}\left[\left(\frac{h_{\star}(X_{m})}{h_{n+m,2n}^{N}(X_{m})}-1\right)^{2}\right]^{1/2}\leq C\left[\frac{1}{\sqrt{N}}+\left(1-\frac{\epsilon^{-}}{\epsilon^{+}}\right)^{n-m}\right]\mathrm{card}(\mathsf{X}),
\end{eqnarray*}

\[
\mathbb{E}_{N}\left[(W_{4})^{2}\right]^{1/2}\leq C\left[\frac{1}{\sqrt{N}}+\left(1-\frac{\epsilon^{-}}{\epsilon^{+}}\right)^{n}\right].
\]
Combining these bounds with Minkowski's inequality applied to $W-1=\sum_{i=1}^{4}W_{i}$
completes the proof of the proposition.
\end{proof}
\bibliographystyle{plainnat}
\bibliography{SMC_stability}

\end{document}